\newcommand{\cB}{{\cal B}}
\newcommand{\cT}{{\cal T}}
\newcommand{\cS}{{\cal S}}
\newcommand{\eps}{\varepsilon}
\newcommand{\tO}{{\tilde{O}}}
\newcommand{\rev}[1]{{#1}^{\textup{rev}}}
\newcommand{\List}{\mathcal{L}}
\newtheorem{lemma}{Lemma}
\newtheorem{theorem}{Theorem}
\newtheorem{conjecture}{Conjecture}
\newtheorem{definition}{Definition}
\DeclareMathOperator{\mmod}{mod}
\newcommand{\pipe}{|}
\newcommand{\plus}{+}
\newcommand{\types}{\ensuremath{\{\circ,\pipe,\star,\plus\}^*}}
\newcommand{\poly}{\mathrm{poly}}
\author{Karl Bringmann\thanks{Max Planck Institute for Informatics, Saarland Informatics Campus.
    Email: \texttt{kbringma@mpi-inf.mpg.de}
    }
\and
 Allan Gr\o nlund\thanks{Aarhus University.
   Email: \texttt{jallan@cs.au.dk}. Supported by Center for Massive Data Algorithmics, a Center of the Danish National Research Foundation, grant DNRF84.
 }
\and
  Kasper Green Larsen\thanks{Aarhus University.
    Email: \texttt{larsen@cs.au.dk}. Supported by Center for Massive Data Algorithmics, a Center of the Danish National Research Foundation, grant DNRF84, a Villum Young Investigator Grant and an AUFF Starting Grant.
  }
}
\date{}
\title{A Dichotomy for Regular Expression Membership Testing}
\begin{document}
\maketitle
\thispagestyle{empty}
\addtocounter{page}{-1}

\begin{abstract}
We study regular expression membership testing: Given a regular expression of size $m$ and a string of size $n$, decide whether the string is in the language described by the regular expression. Its classic $O(nm)$ algorithm is one of the big success stories of the 70s, which allowed pattern matching to develop into the standard tool that it is today. 

Many special cases of pattern matching have been studied that can be solved faster than in quadratic time. However, a systematic study of tractable cases was made possible only recently, with the first conditional lower bounds reported by Backurs and Indyk [FOCS'16]. Restricted to any ``type'' of homogeneous regular expressions of depth 2 or 3, they either presented a near-linear time algorithm or a quadratic conditional lower bound, with one exception known as the Word Break problem.

In this paper we complete their work as follows:
\begin{itemize}
  \item We present two almost-linear time algorithms that generalize all known almost-linear time algorithms for special cases of regular expression membership testing.
  \item We classify all types, except for the Word Break problem, into almost-linear time or quadratic time assuming the Strong Exponential Time Hypothesis. 
  This extends the classification from depth 2 and 3 to any constant depth.
  \item For the Word Break problem we give an improved $\tO(n m^{1/3} + m)$ algorithm. Surprisingly, we also prove a matching conditional lower bound for combinatorial algorithms. This establishes Word Break as the only intermediate problem.
\end{itemize}
In total, we prove matching upper and lower bounds for any type of bounded-depth homogeneous regular expressions, which yields a full dichotomy for regular expression membership testing.
\end{abstract}

\newpage


\section{Introduction}

A regular expression is a term involving an alphabet $\Sigma$ and the operations concatenation $\circ$, union~$\pipe$, Kleene's star $\star$, and Kleene's plus $\plus$, see Section~\ref{sec:preliminaries}. In regular expression \emph{membership testing}, we are given a regular expression $R$ and a string $s$ and want to decide whether $s$ is in the language described by $R$. In regular expression \emph{pattern matching}, we instead want to decide whether \emph{any substring} of $s$ is in the language described by $R$. 
A big success story of the 70s was to show that both problems have $O(nm)$ time algorithms~\cite{Thompson68}, where $n$ is the length of the string $s$ and $m$ is the size of $R$. This quite efficient running time, coupled with the great expressiveness of regular expressions, made pattern matching the standard tool that it is today. 

Despite the efficient running time of $O(nm)$, it would be desirable to have even faster algorithms. A large body of work in the pattern matching community was devoted to this goal, improving the running time by logarithmic factors \cite{Myers92,BT09} and even to near-linear for certain special cases \cite{KMP77,CH02,AC75}.

A systematic study of the complexity of various special cases of pattern matching and membership testing was made possible by the recent advances in the field of conditional lower bounds, where tight running time lower bounds are obtained via fine-grained reductions from certain core problems like satisfiability, all-pairs-shortest-paths, or 3SUM (see, e.g.,~\cite{AnkaOvermars,WAPSP, AW14, PatWil10}). Many of these conditional lower bounds are based on the Strong Exponential Time Hypothesis (SETH)~\cite{ImpagliazzoPaturi} which asserts that $k$-satisfiability has no $O(2^{(1-\eps)n})$ time algorithm for any $\eps>0$ and all $k \ge 3$. 

The first conditional lower bounds for pattern matching problems were presented by Backurs and Indyk~\cite{backursindyk}. Viewing a regular expression as a tree where the inner nodes are labeled by $\circ,\pipe,\star$, and $\plus$ and the leafs are labeled by alphabet symbols, they call a regular expression \emph{homogeneous of type $t \in \{\circ,\pipe,\star,\plus\}^d$} if in each level $i$ of the tree all inner nodes have type $t_i$, and the depth of the tree is at most $d$. Note that leafs may appear in any level, and the degrees are unbounded. This gives rise to natural restrictions \emph{$t$-pattern matching} and \emph{$t$-membership}, where we require the regular expression to be homogeneous of type $t$. The main result of Backurs and Indyk~\cite{backursindyk} is a characterization of $t$-pattern matching for all types $t$ of depth $d \le 3$: For each such problem they either design a near-linear time algorithm or show a quadratic lower bound based on SETH. We observed that the results by Backurs and Indyk actually even yield a classification for all $t$, not only for depth $d \le 3$. This is not explicitly stated in \cite{backursindyk}, so for completeness we prove it in this paper, see Appendix~\ref{app:patternmatching}. This closes the case for $t$-pattern matching.

For $t$-membership, Backurs and Indyk also prove a classification into near-linear time and ``SETH-hard'' for depth $d \le 3$, with the only exception being $\plus \pipe \circ$-membership. The latter problem is also known as the Word Break problem, since it can be rephrased as follows: Given a string $s$ and a dictionary $D$, can $s$ be split into words contained in $D$? Indeed, a regular expression of type $\circ$ represents a string, so a regular expression of type $\pipe \circ$ represents a dictionary, and type $\plus \pipe \circ$ then asks whether a given string can be split into dictionary words.
A relatively easy algorithm solves the Word Break problem in randomized time $\tO(n m^{1/2} + m)$, which Backurs and Indyk improved to randomized time $\tO(n m^{1/2 - 1/18} + m)$. 
Thus, the Word Break problem is the only studied special case of membership testing (or pattern matching) for which no near-linear time algorithm or quadratic time hardness is known. In particular, no other special case is ``intermediate'', i.e., in between near-linear and quadratic running time. 
Besides the status of Word Break, Backurs and Indyk also leave open a classification for $d>3$.

%
%

\subsection{Our Results}

In this paper, we complete the dichotomy started by Backurs and Indyk~\cite{backursindyk} to a full dichotomy for any depth $d$. In particular, we (conditionally) establish Word Break as the only intermediate problem for (bounded-depth homogeneous) regular expression membership testing. More precisely, our results are as follows.

\paragraph{Word Break Problem.}
We carefully study the only depth-3 problem left unclassified by Backurs and Indyk. In particular, we improve Backurs and Indyk's  $\tO(n m^{1/2 - 1/18} + m)$ randomized algorithm to a deterministic $\tO(n m^{1/3} + m)$ algorithm.

\begin{theorem} \label{thm:wordbreakalgo}
  The Word Break problem can be solved in time $O(n (m \log m)^{1/3} + m)$. 
\end{theorem}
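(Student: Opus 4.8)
The plan is to speed up the textbook $O(nm)$ dynamic program for Word Break. Write the dictionary as a set $D$ of strings with $\sum_{w\in D}|w|=m$, and let $T$ be the Boolean array indexed by $0,1,\dots,n$ with $T[0]=1$ and, for $i\ge 1$, $T[i]=\bigvee T[i-|w|]$ over all $w\in D$ with $|w|\le i$ and $s[i-|w|+1..i]=w$; the answer is $T[n]$. Fix a threshold $\Delta=\Theta\big((m\log m)^{1/3}\big)$ and call $w$ \emph{short} if $|w|<\Delta$ and \emph{long} if $|w|\ge\Delta$. Two elementary facts drive the analysis: there are at most $m/\Delta$ long words, and the lengths of the short words take fewer than $\Delta$ distinct values. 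I would compute $T$ from left to right in blocks $B_1,B_2,\dots$ of $\Delta$ consecutive indices, maintaining the invariant that every entry lying before the current block is already final; to process a block $B_k=[a+1..a+\Delta]$ I first apply all transitions coming from long words and then sweep $B_k$ from left to right applying transitions coming from short words.

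The two phases decouple cleanly. A long word $w$ ending at a position $i\in B_k$ induces the transition $T[i]\leftarrow T[i-|w|]$, and $|w|\ge\Delta$ forces $i-|w|\le a$, so the long-word phase reads only entries finalized before $B_k$ and is well defined. A short word ending at $i\in B_k$ induces $T[i]\leftarrow T[i-|w|]$ with $i-|w|>i-\Delta\ge a+1-\Delta$, so its source entry is either final from an earlier block or has been finalized earlier in the current left-to-right sweep; hence $T$ is correctly completed on $B_k$ and the invariant is restored. The short-word phase costs $O(n\Delta+m)$ in total: build the Aho--Corasick automaton of the short words in $O(m)$ time and run $s$ through it in $O(n)$ time; at each position $i$ follow the dictionary-suffix links to list the short dictionary words that are suffixes of $s[1..i]$ --- any two of equal length coincide, so there are fewer than $\Delta$ of them --- and for each such word of length $\ell$ set $T[i]\leftarrow 1$ whenever $T[i-\ell]=1$.

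It remains to run the long-word phase in $\tO(nm/\Delta^2)$ total time, i.e. polylogarithmic amortized time for each of the $O(n/\Delta)\cdot O(m/\Delta)$ pairs formed by a block and a long word, since for $\Delta=\Theta((m\log m)^{1/3})$ this matches the short-word cost $O(n(m\log m)^{1/3})$. Fix a block $B_k$ and a long word $w$. Every occurrence of $w$ ending in $B_k$ has its start position inside a window of at most $\Delta\le|w|$ consecutive indices, and all the source entries of these occurrences have index $\le a$, hence are final; I must find the occurrences whose source entry is $1$ and mark the matching endpoints in $B_k$. I would precompute a suffix tree (or suffix automaton) of $s$, which realizes each long word as an interval of the suffix array, plus a dynamic structure over the $1$-entries of $T$ (an index is inserted when it becomes final) supporting the query ``report the next start position of $w$ inside a given index window whose immediately preceding index has value $1$ in $T$''. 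This answers each pair in polylogarithmic time, plus polylogarithmic time per position newly set to $1$, the latter charged globally over the $n$ positions.

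The step needing genuine care is that one long word can have $\Theta(\Delta)$ occurrences inside a single block, which would overrun the per-pair budget if handled one occurrence at a time. The remedy is periodicity: if $w$ occurs many times within a window of at most $|w|$ indices then $w$, and the underlying stretch of $s$, has a short period $\rho<\Delta$, so these occurrences, and with them both their source indices and their target indices, form arithmetic progressions of common difference $\rho$. Such a periodic word can only sit inside a maximal periodic run of $s$, of which there are $O(n)$ in total; precomputing within each run the $1$-entries of $T$ grouped by residue class modulo the run period lets all valid occurrences of such a $w$ in $B_k$ be enumerated in polylogarithmic time plus $O(1)$ per position newly set. Combining the $O(n\Delta+m)$ short-word cost with the $\tO(nm/\Delta^2+n)$ long-word cost and choosing $\Delta=\Theta\big((m\log m)^{1/3}\big)$ gives the claimed bound $O\big(n(m\log m)^{1/3}+m\big)$.
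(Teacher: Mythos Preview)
Your high-level plan---split at a threshold $\Delta=\Theta((m\log m)^{1/3})$, handle short words by Aho--Corasick in $O(n\Delta+m)$, and handle long words block by block---is reasonable, and the short-word phase as well as the non-periodic long-word case (at most one occurrence per block, located by a suffix-array range query) are fine. But the periodic long-word case is a genuine gap, not just a missing detail. You claim the valid occurrences can be reported in polylogarithmic time plus $O(1)$ per position newly set, yet you give no mechanism to avoid enumerating $1$-sources whose targets are already $1$ (or $0$-targets whose sources are $0$). Concretely, take $s=a^n$, the short word $aa$, and $m/\Delta$ long words $a^\ell$ with $\ell$ even and $\ell\ge\Delta$: after the short sweep $T$ is the indicator of the even indices, so for every block and every long word the source AP has $\Theta(\Delta)$ entries equal to $1$, the target AP has $\Theta(\Delta)$ entries equal to $0$, and their shifted intersection is empty. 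Enumerating either side costs $\Theta(\Delta)$ per pair and $\Theta(nm/\Delta)=\Theta(nm^{2/3})$ overall. Your run/residue structure does not help here: the whole of $s$ is a single run of period $1$, so ``grouping $1$-entries by residue modulo the run period'' is vacuous. What is actually needed is, per pair, to test in polylogarithmic time whether $(S+|w|)\cap Z=\emptyset$ for two dynamic $\Theta(\Delta)$-size sets---equivalently a Boolean convolution---and that is precisely the step your sketch does not supply.

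For comparison, the paper takes a quite different route and never invokes periodicity. It partitions the dictionary by length into buckets $D_q$ for powers of two $q$ and, for each multiple $x$ of $q$, answers a \emph{jump-query}: given $S=T\cap(x-2q,\,x]$, compute all $i\in(x,\,x+2q]$ reachable from some $j\in S$ by a word in $D_q$. Two methods are combined. The first runs Aho--Corasick on a window of length $O(q)$ and costs $O(q^2)$. The second builds a trie of the reversed $D_q$, greedily packs its marked nodes into vertical paths each carrying exactly $\lambda_q=\sqrt{(m/q)\log q}$ marks, and for every such path $B$ computes the \emph{sumset} $S+S_B$ (where $S_B$ is the set of depths of marked nodes on the root-to-$r_B$ path) in $O(q\log q)$ time by integer multiplication; this batches all transitions along that trie path into a single convolution and yields a jump-query time of $O(\sqrt{qm\log q})$. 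Summing $\min\{q^2,\sqrt{qm\log q}\}$ over the $O(n/q)$ queries at each scale and balancing at $q\approx (m\log m)^{1/3}$ gives the bound. The convolution is exactly the missing ingredient in your periodic case, and the $\lambda$-packing is what groups many words into one convolution so that the per-pair budget is met.
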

We remark that often running times of the form $\tilde O(n \sqrt{m})$ stem from a tradeoff of two approaches to a problem. Analogously, our time $\tilde O(n m^{1/3} + m)$ stems from trading off three approaches.

%
%
%

Very surprisingly, we also prove a matching conditional lower bound. Our result only holds for combinatorial algorithms, which is a notion without agreed upon definition, intuitively meaning that we forbid unpractical algorithms such as fast matrix multiplication. We use the following hypothesis; a slightly weaker version has also been used in \cite{abboudLowerBoundForValiantsParser} for context free grammar parsing. Recall that the $k$-Clique problem has a trivial $O(n^k)$ time algorithm, an $O(n^k / \lg^k n)$ combinatorial algorithm~\cite{vw09}, and all known faster algorithms use fast matrix multiplication~\cite{EG04}. 

\begin{conjecture} \label{conj:clique}
  For all $k \ge 3$, any combinatorial algorithm for $k$-Clique takes time $n^{k-o(1)}$. 
\end{conjecture}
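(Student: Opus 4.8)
Strictly speaking, Conjecture~\ref{conj:clique} is a hardness hypothesis rather than a statement we set out to establish: a genuine proof would require an $n^{k-o(1)}$ time lower bound for an explicit problem against an entire -- and not even formally delimited -- class of algorithms, which is well beyond current lower-bound technology. What one can do, and what I would offer as a ``proof proposal'', is to assemble the evidence that makes the hypothesis credible and to check that it coheres with the rest of fine-grained complexity.

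First I would record the algorithmic landscape. Brute-force search over all $\binom{n}{k}$ vertex subsets solves $k$-Clique in $O(n^k)$ time; the only known combinatorial improvement, due to \cite{vw09}, saves a $\lg^k n$ factor via a Four-Russians-style table lookup and therefore still runs in $n^{k-o(1)}$ time. Every asymptotically faster algorithm, beginning with \cite{EG04}, routes the computation through fast (rectangular) matrix multiplication and so leaves the combinatorial regime. Despite sustained effort, no combinatorial algorithm running in time $O(n^{k-\eps})$ for a fixed $\eps>0$ is known for any $k\ge 3$ -- which is exactly the content being posited.

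Second I would anchor the hypothesis at $k=3$. There, combinatorial $k$-Clique is equivalent up to subpolynomial factors to combinatorial Boolean matrix multiplication, and the assertion that the latter needs $n^{3-o(1)}$ time is the well-established BMM hypothesis, one of the standard starting points of the area. For larger $k$ the conjecture is a uniform strengthening that is consistent with the $k=3$ case through the classical reduction turning a $k$-Clique instance on $n$ vertices into a triangle instance on $\Theta(n^{\lceil k/3\rceil})$ vertices; moreover the $\mathsf{W}[1]$-completeness of $k$-Clique indicates that no $n^{o(k)}$-type speedup -- which would trivialise the statement -- is available. Finally, essentially this hypothesis (in a slightly weaker form) was already adopted in \cite{abboudLowerBoundForValiantsParser} to derive conditional lower bounds for context-free grammar parsing, so invoking it here places the Word Break result in good company.

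The ``main obstacle'' is therefore not a missing lemma but the fundamental one: unconditional super-linear lower bounds for explicit problems -- let alone $n^{k-o(1)}$ ones -- are not within reach, even in restricted models. The honest and standard course, which is what we take, is to treat the $k$-Clique barrier as a hypothesis and, in the remainder of the paper, to reduce from it, thereby tying the open status of the Word Break problem to the decades-old open status of combinatorial clique detection.
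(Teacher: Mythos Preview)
Your diagnosis is correct: Conjecture~\ref{conj:clique} is stated in the paper purely as a hypothesis, with no proof or even an extended justification; the paper simply cites~\cite{vw09} for the best combinatorial bound, \cite{EG04} for the algebraic speedup, and~\cite{abboudLowerBoundForValiantsParser} for prior use of a related assumption. Your write-up supplies more supporting context than the paper itself does, but the bottom line matches: this is an assumption, not a theorem, and the paper proceeds exactly as you describe---by reducing from $k$-Clique rather than attempting to establish the conjecture.
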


We provide a (combinatorial) reduction from $k$-Clique to the Word Break problem showing:

\begin{theorem} \label{thm:wordbreaklower}
  Assuming Conjecture~\ref{conj:clique}, the Word Break problem has no combinatorial algorithm in time $(n m^{1/3-\eps} + m)$ for any $\eps > 0$.
\end{theorem}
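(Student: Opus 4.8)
The plan is to give a combinatorial reduction from $k$-Clique to the Word Break problem. It suffices to treat $k$ divisible by $3$, say $k=3g$ (Conjecture~\ref{conj:clique} quantifies over all $k\ge 3$), and to build, from an $N$-vertex graph $G$, a Word Break instance $(s,D)$ with $|s|=n$ and total dictionary size $m$ such that (i) $s$ splits over $D$ if and only if $G$ has a $3g$-clique, (ii) $n\cdot m^{1/3}=\tO(N^{3g})$, and (iii) $m=O(N^{3g-\Omega(1)})$. From such a reduction, a combinatorial $O(n m^{1/3-\eps}+m)$-time algorithm for Word Break would decide $3g$-Clique in time $O(N^{3g-\Omega(\eps)}+N^{3g-\Omega(1)})=N^{3g-\Omega(1)}$, contradicting Conjecture~\ref{conj:clique}.

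The first step reduces the clique search to a layered ``rainbow triangle'' problem, in the spirit of the $k$-Clique-based lower bound for context-free parsing~\cite{abboudLowerBoundForValiantsParser}. Let $\mathcal{A}=\mathcal{B}=\mathcal{C}$ be the family of $g$-element cliques of $G$, so each has size at most $\binom{N}{g}\le N^{g}=:q$, and call two such $g$-cliques \emph{adjacent} if they are vertex-disjoint and their union is a $2g$-clique. Then $G$ has a $3g$-clique if and only if there are $a\in\mathcal{A}$, $b\in\mathcal{B}$, $c\in\mathcal{C}$ that are pairwise adjacent, since their union is then a $3g$-clique and, conversely, every $3g$-clique decomposes into three pairwise-disjoint $g$-cliques. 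So it suffices to detect a triangle across three vertex classes of size at most $q$, whose natural hardness is $q^{3}=N^{3g}$.

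The second and main step encodes this in Word Break. View the instance through its \emph{split graph}: vertices $0,1,\dots,n$, with an edge $i\to j$ whenever $s[i+1..j]\in D$, so that $s$ splits if and only if $0$ reaches $n$. The idea is to design $s$ and $D$ so that every path from $0$ to $n$ proceeds in layers: from $0$ one jumps to a position encoding a choice of $a$; from there to a position encoding $(a,b)$ for some $b$ adjacent to $a$; from there to a position encoding $(a,c)$ for some $c$ adjacent to $b$; and from there to $n$ provided $c$ is adjacent to $a$. Because reachability in the split graph is memoryless, the choices of $a$ and $b$ can be carried forward only by being hard-wired into the current position, which makes it natural to use on the order of $q^{2}=N^{2g}$ ``lane'' positions in the middle of $s$. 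The crux is implementing the transitions between $(a,b)$-positions and $(a,c)$-positions: there are up to $N^{3g}$ of them and the adjacency relation is arbitrary, so a naive encoding inflates $m$ to $N^{3g}$ or more, which makes the $+m$ term dominate and destroys the reduction. Instead one routes each such transition through two layers of auxiliary positions---splitting lane addresses into three ``digits'' and resolving one digit per layer---which costs only a sublinear extra factor $N^{\Omega(1)}$ in $n$ while pushing the dictionary down to $m=N^{3g-\Omega(1)}$; balancing the three layers so that $n\cdot m^{1/3}$ stays $\tO(N^{3g})$ gives, for instance, $n=\tO(N^{5g/2})$ and $m=\tO(N^{3g/2})$. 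This three-way split is exactly the lower-bound counterpart of the three-way tradeoff behind Theorem~\ref{thm:wordbreakalgo}.

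The main obstacle is the design and correctness of this gadget: making the layered, digit-routed string admit a path from $0$ to $n$ precisely when a rainbow triangle exists, ruling out unintended splittings, all while keeping dictionary words short and few enough for (ii) and (iii). Once that is done, the rest is routine: the construction only enumerates $g$-cliques, adjacencies, and numeric addresses, so it is combinatorial, and composing it with a combinatorial Word Break algorithm yields a combinatorial $3g$-Clique algorithm of the running time computed above, contradicting Conjecture~\ref{conj:clique}.
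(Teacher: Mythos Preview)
Your proposal is an outline, not a proof: you explicitly identify the ``main obstacle'' as the design and correctness of the layered, digit-routed gadget, and then leave it undone. For Word Break this gadget \emph{is} the proof. In the split graph, an edge from position $p$ to position $p'$ exists if and only if the specific substring $s[p{+}1..p']$ lies in $D$; you cannot add edges freely, and any lane or routing position you insert becomes visible in every word that spans it. Your digit-routing sketch does not explain how the transition $(a,b)\to(a,c)$ can test $b$--$c$ adjacency after $b$ has been overwritten digit by digit, nor how a dictionary of total size $q^{3/2}$ can encode an adjacency relation with up to $q^2$ edges, nor why no unintended splittings arise. Until those points are resolved there is no reduction, and the final paragraph's ``the rest is routine'' is not warranted.

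The paper proceeds quite differently and avoids your obstacle altogether. It reduces from $k$-Clique for arbitrarily large constant $k$ (not via a rainbow-triangle blow-up), writing into the text one gadget per $(k-2)$-clique $S$ and asking whether two further vertices $u,v$ extend $S$ to a $k$-clique. A vertex is encoded not as a lane index but as an offset inside a repeating block $1\,2\cdots N$; the dictionary contains, for each edge $(v_i,v_j)$ of $G$, a single length-$O(N)$ word that carries the offset $i$ across a $\#j\#$ separator. This yields text length $O(N^{k-1})$ and dictionary size $O(N^3)$, \emph{independent of $k$}. Plugging in a hypothetical $O(n\,m^{1/3-\eps}+m)$ algorithm gives $k$-Clique in time $O(N^{k-3\eps})$, and letting $k$ grow contradicts Conjecture~\ref{conj:clique}. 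The asymmetry---the string absorbs the $N^{k-2}$ enumeration while the dictionary only encodes the $O(N^2)$ edges of $G$---is precisely what makes $n\cdot m^{1/3}=N^{k}$ with $m\ll n$, and it replaces the delicate routing you propose by a direct offset-tracking trick whose correctness is easy to verify.
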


This is a surprising result for multiple reasons. First, $n m^{1/3}$ is a very uncommon time complexity, specifically we are not aware of any other problem where the fastest known algorithm has this running time. Second, it shows that the Word Break problem is an \emph{intermediate} problem for $t$\nobreakdash-membership, as it is neither solvable in almost-linear time nor does it need quadratic time. Our results below show that the Word Break problem is, in fact, the \emph{only} intermediate problem for $t$-membership, which is quite fascinating. 

We leave it as an open problem to prove a matching lower bound without
the assumption of ``combinatorial''. Related to this question, note that the currently fastest algorithm for 4-Clique is based on
fast rectangular matrix multiplication and runs in time
$O(n^{3.256689})$~\cite{EG04,GL12}. If this bound is close to optimal, then we can still establish Word Break as an intermediate problem (without any restriction to combinatorial algorithms).

\begin{theorem} \label{thm:wordbreaklowertwo}
  For any $\delta > 0$, if 4-Clique has no $O(n^{3+\delta})$ algorithm, then Word Break has no
$O(n^{1+\delta/3})$ algorithm for $n=m$.
\end{theorem}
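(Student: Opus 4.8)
The plan is to reuse the combinatorial reduction from $k$-Clique to Word Break that underlies Theorem~\ref{thm:wordbreaklower}, specialized to $k=4$, and then simply track parameters rather than invoking Conjecture~\ref{conj:clique}. Recall that in that reduction, an instance of $k$-Clique on a graph with $N$ vertices is transformed into a Word Break instance in which the string has length $n = \Theta(N^{\lceil k/2 \rceil})$ (roughly, one ``block'' per partial clique on half the vertex classes) and the dictionary has total size $m = \Theta(N^{\lfloor k/2 \rfloor + O(1)})$ or, after balancing, $n$ and $m$ are polynomially related in $N$; in particular, for $k=4$ the construction gives $n = \Theta(N^{2})$ and $m = \Theta(N^{2})$, so that $n = m$ up to constants and $N = \Theta(n^{1/2}) = \Theta(m^{1/2})$. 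The correctness of this transformation — that the graph has a $4$-clique iff the string can be segmented into dictionary words — is exactly what is proved for Theorem~\ref{thm:wordbreaklower}, so I would cite it rather than redo it.

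The main steps are then: (i) state the reduction instance for $k=4$ and record the bounds $n = \Theta(N^2)$, $m = \Theta(N^2)$, and the $\poly(N)$ time to build it; (ii) invoke the correctness of the reduction to conclude that any Word Break algorithm solves $4$-Clique on $N$ vertices within the same asymptotic running time plus the construction overhead; (iii) do the arithmetic: a hypothetical $O(n^{1+\delta/3})$ Word Break algorithm runs, on the reduced instance, in time $O\big((N^2)^{1+\delta/3}\big) = O(N^{2 + 2\delta/3})$; but with $n = m$ we also need to express this in terms of the $4$-Clique size, and since the natural $4$-Clique instance has $N$ vertices with $n = \Theta(N^2)$, we get a $4$-Clique algorithm running in time $O(N^{2+2\delta/3})$... which is already better than $n^3$, so one must be careful about which exponent the reduction actually produces.

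Here is the subtlety that I expect to be the real obstacle, and how I would resolve it. The clean ``$nm^{1/3}$'' lower bound of Theorem~\ref{thm:wordbreaklower} comes from a $k$-Clique reduction where one side (say the string, of length $n$) encodes $\lceil k/2\rceil$ vertex classes and the other side (the dictionary, of size $m$) encodes the remaining $\lfloor k/2\rfloor$ classes, so that the product $n \cdot m^{1/3}$ matches $N^k$ only for the right split of $k$ into thirds; concretely the $m^{1/3}$ exponent is calibrated so that $n \cdot m^{1/3} = N^{k}$ when $n = N^{\Theta(1)}$ and $m = N^{\Theta(1)}$ with the correct constants. For the unconditional statement with $n = m$, one instead wants the reduction where $4$-Clique on $N$ vertices produces $n = m = \Theta(N^{3})$ — i.e., the string encodes, say, three-fourths of a clique (three vertex classes, $N^{3}$ blocks) — so that a $4$-Clique algorithm has running time equal to the Word Break time on an instance of size $n = \Theta(N^3)$. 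Then an $O(n^{1+\delta/3})$ Word Break algorithm gives $O\big(N^{3(1+\delta/3)}\big) = O(N^{3+\delta})$ for $4$-Clique, which is exactly the contrapositive of the hypothesis. So the core of the proof is to select the parameterization of the $k$-Clique-to-Word-Break reduction (for $k=4$) that realizes $n = m = \Theta(N^3)$, verify that this is indeed one of the settings the reduction of Theorem~\ref{thm:wordbreaklower} supports (it should, since that reduction is stated for general $k$ and a general split of the vertex classes), and then the exponent bookkeeping $3 \cdot (1 + \delta/3) = 3 + \delta$ closes the argument.
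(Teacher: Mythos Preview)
Your overall plan and your final arithmetic are exactly what the paper does: specialize the $k$-Clique reduction to $k=4$, observe that the resulting Word Break instance has $n=m=\Theta(N^{3})$, and then compute $3(1+\delta/3)=3+\delta$.

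However, your description of the reduction's parameters is off, and the detour through an imagined ``general split of the vertex classes'' is unnecessary and incorrect. The reduction behind Theorem~\ref{thm:wordbreaklower} (stated precisely as Theorem~\ref{thm:wordbreakreduction}) has no split parameter: for $k$-Clique on $N$ vertices it always produces a string of length $O(N^{k-1})$ (one gadget per $(k-2)$-clique, each of length $\Theta(N)$) and a dictionary of total size $O(N^{3})$ (two length-$\Theta(N)$ words per edge, plus lower-order terms). So your first guess $n=m=\Theta(N^{2})$ is wrong, and there is no alternative ``setting'' to select. The point is simply that for $k=4$ the two sizes happen to coincide: string length $O(N^{3})$ and dictionary size $O(N^{3})$. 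Pad the smaller one to obtain $n=m=\Theta(N^{3})$, and then your computation $O\big((N^{3})^{1+\delta/3}\big)=O(N^{3+\delta})$ is the entire proof. Drop the speculation about half/third splits and just cite Theorem~\ref{thm:wordbreakreduction} with $k=4$.
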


We remark that this situation of having matching conditional lower
bounds only for combinatorial algorithms is not uncommon, see, e.g.,
Sliding Window Hamming Distance~\cite{CIP09}.

\paragraph{New Almost-Linear Time Algorithms.}
We establish two more types for which the membership problem is in almost-linear time.

\begin{theorem} \label{thm:newalgos}
  We design a deterministic $\tO(n)+O(m)$ algorithm for $\pipe \plus \circ \plus$-membership and an expected time $n^{1+o(1)}+O(m)$ algorithm for $\pipe \plus \circ \pipe$-membership. 
These algorithms also work for $t$-membership for any subsequence $t$ of $\pipe \plus \circ \plus$ or $\pipe \plus \circ \pipe$, respectively. 
\end{theorem}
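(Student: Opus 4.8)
\emph{Proof proposal.} Since the root of $R$ is a $\pipe$ node, $R = R_1 \pipe \cdots \pipe R_k$ for disjoint subtrees $R_j$ with $\sum_j |R_j| \le m$, and $s \in L(R)$ iff $s \in L(R_j)$ for some $j$. So it suffices to test $s \in L(R_j)$ for every $j$ within total time $\tO(n)+O(m)$ (resp.\ $n^{1+o(1)}+O(m)$). Up to trivial cases where a leaf sits high in the tree, each $R_j$ is of the form $(S_j)^{\plus}$ with $S_j$ the level-3 subtree: for type $\pipe\plus\circ\plus$, $S_j = b_{j,1}^{\plus}\cdots b_{j,\ell_j}^{\plus}$ (writing $b^{\plus}$ for a single-letter Kleene-plus, or just $b$ when a leaf appears at level~$4$); for type $\pipe\plus\circ\pipe$, $S_j = U_{j,1}\circ\cdots\circ U_{j,\ell_j}$ where each $U_{j,i}\subseteq\Sigma$ is the character class spelled out by a $\pipe$ node (or a single leaf). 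In both cases the plan is to reduce ``$s\in L(R_j)$'' to a \emph{periodicity condition} on $s$ (resp.\ on its run-length encoding) together with a family of \emph{per-residue-class constraints}, and then to batch these checks across all $j$. The main obstacle throughout is exactly this batching: naively testing each $R_j$ costs $\Theta(n|R_j|)$, i.e.\ $\Theta(nm)$ in total.

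\textbf{Type $\pipe\plus\circ\plus$.} Compress the letter string $b_{j,1}\cdots b_{j,\ell_j}$ of $S_j$ by merging equal adjacent letters into $\beta_{j,1}^{\rho_{j,1}}\cdots\beta_{j,p_j}^{\rho_{j,p_j}}$, so that $L(S_j)=\{\beta_{j,1}^{g_1}\cdots\beta_{j,p_j}^{g_{p_j}}: g_i\ge\rho_{j,i}\}$; this takes $O(|R_j|)$ time, and $p_j\le|R_j|$. Let $s=c_1^{e_1}\cdots c_q^{e_q}$ be the run-length encoding of $s$ (with $q\le n$, computed once in $O(n)$ time) and $C=c_1\cdots c_q$ its letter string. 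Concatenating $t\ge1$ words of $L(S_j)$, one checks that $s\in L(R_j)$ iff: (i) $C$ has the prescribed periodic shape --- $C=(\beta_{j,1}\cdots\beta_{j,p_j})^{t}$ if $\beta_{j,1}\ne\beta_{j,p_j}$, $C=\beta_{j,1}(\beta_{j,2}\cdots\beta_{j,p_j})^{t}$ if $\beta_{j,1}=\beta_{j,p_j}$ and $p_j\ge2$, and $C=\beta_{j,1}$ if $p_j=1$ --- and (ii) each exponent $e_i$ is at least an explicit lower bound depending only on $i$ modulo $p_j$ (resp.\ $p_j-1$ in the merge case), with the two endpoint runs and the ``merge-point'' runs getting their own (single-$\rho$ or sum-of-two-$\rho$) bounds. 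Condition (i) says that $C$ (or $C[2..q]$) is a power of the length-$p_j$ (or length-$(p_j{-}1)$) prefix pattern coming from $R_j$; precomputing the primitive root and period structure of $C$ and of $C[2..q]$ in $O(n)$ time via the failure function, each $j$ is then verified in $O(p_j)$ time, i.e.\ $O(m)$ total. For (ii) we need, for the period $d=p_j$ (resp.\ $p_j-1$) of each surviving $j$, the quantities $\min\{e_i : i\equiv r \pmod{d}\}$ over all residues $r$. The crucial observation is that every such $d$ divides $q$ (resp.\ $q-1$) and is a period of $C$ (resp.\ $C[2..q]$), hence lies in the set $D=\{d^{*}c : c\mid q/d^{*}\}$, where $d^{*}$ is the length of the primitive root; and $\sum_{d\in D} d = d^{*}\,\sigma(q/d^{*}) = O(q\log\log q)$, $\sigma$ denoting the sum-of-divisors function. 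A single top-down sweep over the divisibility (Hasse) lattice of $D$ --- where the minima for a node are read off from those of one chosen parent in time linear in the parent's size, and each node is the chosen parent of at most $O(\log q)$ nodes --- produces all these minima in $\tO(n)$ total time; then each $j$ is checked against its $\rho_{j,\cdot}$ in $O(p_j)$ time, i.e.\ $O(m)$ total. Altogether this is deterministic $\tO(n)+O(m)$.

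\textbf{Type $\pipe\plus\circ\pipe$.} Here $s\in L(R_j)$ iff $\ell_j\mid n$ and, for each $i\in\{1,\dots,\ell_j\}$, the set $A^{(\ell_j)}_i:=\{\,s[k] : 1\le k\le n,\ k\equiv i \pmod{\ell_j}\,\}$ satisfies $A^{(\ell_j)}_i\subseteq U_{j,i}$. Now $\ell_j$ can be an arbitrary divisor of $n$ (the letters of $s$ obey no periodicity constraint), and the per-residue data are \emph{sets} of size up to $n/\ell_j$ rather than single numbers, which is why one cannot hope for the $\tO(n)$ bound of the previous case. Instead we precompute, for every divisor $d\mid n$, all the sets $A^{(d)}_i$ ($i\le d$) --- distributing the $n$ symbols of $s$ into buckets and deduplicating via hashing, in $O(n)$ expected time per $d$, hence $n^{1+o(1)}$ over the $n^{o(1)}$ divisors of $n$. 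Storing each $U_{j,i}$ as a hash set costs $O(\sum_i|U_{j,i}|)=O(|R_j|)$, i.e.\ $O(m)$ total. For each $j$ we then test $A^{(\ell_j)}_i\subseteq U_{j,i}$ for all $i$: if $|A^{(\ell_j)}_i|>|U_{j,i}|$ the test fails at once, and otherwise we probe each element of $A^{(\ell_j)}_i$ against the hash set $U_{j,i}$ in $O(|A^{(\ell_j)}_i|)=O(|U_{j,i}|)$ time; summed over $i$ this is $O(|R_j|)$, hence $O(m)$ in total. This yields expected time $n^{1+o(1)}+O(m)$.

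\textbf{Subsequences, and where the difficulty lies.} The same scheme handles any subsequence $t$ of $\pipe\plus\circ\plus$ or $\pipe\plus\circ\pipe$: dropping the level-2 $\plus$ forces the repetition count to be $1$ (so ``$s\in L(R_j)$'' becomes a single pattern match), dropping the level-3 $\circ$ makes $S_j$ a single level-4 object (an all-equal-letters block, resp.\ a single character class), and dropping leaf-level nodes only shortens patterns; in every case the reduction to ``periodicity plus per-residue constraints'' survives and the batching bounds above only improve. The one genuinely delicate step --- and the point at which the two running times diverge --- is precisely this batching: for $\pipe\plus\circ\plus$ it rests on the arithmetic fact that the periods that can ever occur form the lattice $D$ with $\sum_{d\in D}d=\tO(n)$, so one lattice sweep delivers every residue-class minimum we will need; whereas for $\pipe\plus\circ\pipe$ the periods are unrestricted and the per-residue objects are \emph{sets}, so the best available route is the $n^{1+o(1)}$ pass over all divisors of $n$, keeping the per-$R_j$ verification down to $O(|R_j|)$ by a size-comparison-then-hash-probe argument.
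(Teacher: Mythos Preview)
Your approach is essentially the paper's, with cosmetic differences: for $\pipe\plus\circ\pipe$ you hash the pattern side and use a size-comparison trick, while the paper hashes the text side and counts intersections; for $\pipe\plus\circ\plus$ you handle the ``first letter $=$ last letter'' case by shifting to $C[2..q]$ with period $p_j-1$, while the paper handles it by cyclically rotating both $T$ and the $R_i$'s. Your restriction of the relevant periods to the set $D$ of multiples of the primitive-root length is a nice observation the paper does not use (it simply sweeps all divisors), but both routes land at $\tilde O(n)$.

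There is, however, one genuine gap in the $\pipe\plus\circ\plus$ case. You explicitly allow bare leaves at level~4 (``or just $b$ when a leaf appears at level~4''), but then assert
\[
L(S_j)=\{\beta_{j,1}^{g_1}\cdots\beta_{j,p_j}^{g_{p_j}} : g_i\ge\rho_{j,i}\},
\]
which is false whenever a merged run consists \emph{entirely} of bare leaves: e.g.\ $S_j=a\circ a\circ b^\plus$ gives $L(S_j)=\{a^2 b^g : g\ge 1\}$, not $\{a^{g_1} b^{g_2} : g_1\ge 2,\ g_2\ge 1\}$. Consequently your condition~(ii) (``each exponent $e_i$ is at least an explicit lower bound'') is too weak; for such runs you need $e_i=\rho_{j,i}$ exactly, and the same distinction propagates to your merge-point bounds. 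The paper handles this by recording, for each run in $r(R_i)$, whether it is a ``$=\ell$'' or a ``$\ge\ell$'' block, and by precomputing \emph{both} the minimum $\alpha_j^s$ and the maximum $\beta_j^s$ over each residue class (so that a ``$=\ell$'' block is tested via $\alpha_j^s=\beta_j^s=\ell$). Your divisor-lattice sweep extends verbatim to compute the maxima alongside the minima, so the fix is local and does not affect your running time; but as written the algorithm is incorrect.
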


This generalizes \emph{all} previously known almost-linear time algorithms for any $t$-membership problem, as all such types $t$ are proper subsequences of $\pipe \plus \circ \plus$ or $\pipe \plus \circ \pipe$. Moreover, no further generalization of our algorithms is possible, as shown below.

\paragraph{Dichotomy.}
We enhance the classification of $t$-membership started by Backurs and Indyk for $d \le 3$ to a complete dichotomy for all types $t$. To this end, we first establish the following simplification rules.

\begin{lemma} \label{lem:simplify}
  For any type $t$, applying any of the following rules yields a type $t'$ such that $t$-membership and $t'$-membership are equivalent under linear-time reductions:
  \begin{enumerate}
    \item replace any substring $p p$, for any $p \in \{\circ,\pipe,\star,\plus\}$, by $p$,
    \item replace any substring $\plus \pipe \plus$ by $\plus \pipe$,
    \item replace prefix $r \star$ by $r \plus$ for any $r \in \{\plus,\pipe\}^*$.
  \end{enumerate}
  We say that $t$-membership \emph{simplifies} if one of these rules applies.
  Applying these rules in any order will eventually lead to an unsimplifiable type.
\end{lemma}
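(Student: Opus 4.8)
The plan is to prove each of the three simplification rules separately, in each case exhibiting linear-time reductions in both directions between $t$-membership and $t'$-membership, and then to argue termination of the rewriting process by exhibiting a potential function that strictly decreases under every rule.

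For the reductions themselves, I would work directly on the tree representation of a homogeneous regular expression. The key structural observation is that each rule corresponds to a local equivalence of regular-expression operators that can be realized by a purely syntactic transformation of the tree, at the cost of inserting or deleting ``trivial'' nodes (nodes of degree one, or leaves), which affects the size only by a constant factor and leaves the described language unchanged.

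\begin{itemize}
  \item \textbf{Rule 1 (idempotence $pp \to p$).} For $p \in \{\circ, \pipe\}$ this is associativity/flattening: a node of type $p$ whose children are again of type $p$ can be merged with its children without changing the language, turning a type-$t$ tree with a $pp$ substring into a type-$t'$ tree. For $p \in \{\star, \plus\}$ it is the identity $(r^\star)^\star = r^\star$ and $(r^+)^+ = r^+$. The reverse direction (given a $t'$-instance, produce an equivalent $t$-instance) is the easy one: re-expand each type-$p$ node into two nested type-$p$ nodes, e.g.\ replace a $\circ$-node by a $\circ$-node each of whose children is a unary $\circ$-node, so that the type has the extra $p$; this at most doubles the size.
  \item \textbf{Rule 2 ($\plus\pipe\plus \to \plus\pipe$).} Here the relevant identity is that for languages $L_1,\dots,L_k$ one has $\bigl(L_1^+ \cup \dots \cup L_k^+\bigr)^+ = \bigl(L_1 \cup \dots \cup L_k\bigr)^+$, since both sides equal the set of nonempty concatenations of nonempty words each drawn from some $L_i$ (using $L_i^+ \subseteq (\bigcup_j L_j)^+$ for one inclusion and $L_i \subseteq L_i^+$ for the other). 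Concretely, in a type-$t$ tree the bottom $\plus$ of the $\plus\pipe\plus$ block sits below a $\pipe$ below a $\plus$; deleting these bottom $\plus$-nodes (splicing their unique relevant subtrees up, or rather promoting their children) yields a type-$t'$ tree of the same language, up to the constant-size bookkeeping of unary nodes. The reverse is again easy: insert a fresh $\plus$-node above each child of the $\pipe$-block.
  \item \textbf{Rule 3 (prefix $r\star \to r\plus$ with $r \in \{\plus,\pipe\}^*$).} The point is that at the root the star and plus differ only in whether the empty string is accepted, and a prefix of $\plus$'s and $\pipe$'s below does not reintroduce the empty string in a way that matters: a $\plus$ or $\pipe$ applied to languages all of which are nonempty-word languages again yields a nonempty-word language, so the subtree hanging below the top $\star$ vs.\ $\plus$ describes the same nonempty part, and $L^\star = L^+ \cup \{\varepsilon\}$. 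Hence to reduce $t$-membership to $t'$-membership one simply first tests separately whether $s = \varepsilon$ (answering directly), and otherwise relabels the root $\star$ to $\plus$; the languages agree on all nonempty strings. The reverse direction is the trivial relabel $\plus \to \star$ together with the same $\varepsilon$ case analysis. (One must be slightly careful when $r$ is empty, i.e.\ the type is literally $\star \dots$; then $r\star = \star$ and $r\plus = \plus$ and the same argument applies verbatim.)
\end{itemize}

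For termination, I would assign to a type $t$ a potential such as the pair $(|t|, \#\{\,i : t_i = \star\,\})$ ordered lexicographically, or more simply observe that Rules 1 and 2 strictly decrease $|t|$ while Rule 3 keeps $|t|$ fixed but strictly decreases the number of $\star$'s (and Rule 3 can be applied to a given prefix at most once, after which no new $\star$ can be created at a position keeping the prefix in $\{\plus,\pipe\}^*$). Since $|t|$ is a nonnegative integer that never increases and strictly decreases except in the finitely-applicable Rule 3 case, no infinite rewriting sequence exists, so every sequence of rule applications terminates in an unsimplifiable type. I expect the main obstacle to be purely bookkeeping: making precise that inserting or deleting unary nodes (and the occasional leaf) to make the node-types line up with a prescribed type string $t'$ changes the instance size by only a constant factor and is computable in linear time, and handling the degenerate cases where a block like $\plus\pipe\plus$ or the prefix $r\star$ overlaps the very top or bottom of the tree, or where some child of an affected node is a leaf rather than an internal node — none of these is deep, but they need to be stated carefully to make the ``equivalent under linear-time reductions'' claim airtight.
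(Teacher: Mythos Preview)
Your proposal is correct and follows essentially the same approach as the paper: the same operator identities for each rule, the same device of inserting or deleting degree-one nodes to pass between $t$ and $t'$, and the same termination argument (Rules~1--2 shorten $|t|$, Rule~3 decreases the $\star$-count).

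One wording issue worth fixing in Rule~3: you write ``relabels the root $\star$ to $\plus$'' and speak of ``a prefix of $\plus$'s and $\pipe$'s below,'' but in the paper's convention $t_1$ is the root level, so for a prefix $r\star$ with $r \in \{\plus,\pipe\}^*$ the $\star$-layer sits at level $|r|+1$, \emph{below} the $r$-layers, not at the root. The argument you actually need (and essentially give) is that replacing $\star$ by $\plus$ at that layer removes only $\varepsilon$ from each node's language there, and that this ``remove $\varepsilon$'' operation commutes upward through $\pipe$ and $\plus$: $(L \setminus \{\varepsilon\})^+ = L^+ \setminus \{\varepsilon\}$ and $(L_1 \setminus \{\varepsilon\}) \cup (L_2 \setminus \{\varepsilon\}) = (L_1 \cup L_2) \setminus \{\varepsilon\}$. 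The paper phrases this instead as a top-down trace of a match of a nonempty $s$ through the $r$-layers, discarding empty pieces at each $\plus$; both arguments are equivalent. Just be careful to relabel \emph{all} internal nodes at level $|r|+1$, and to leave leaves at that level untouched.
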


We show the following dichotomy. Note that we do not have to consider simplifying types, as they are equivalent to some unsimplifiable type.

\begin{theorem} \label{thm:main}
  For any $t \in \types$ one of the following holds:
  \begin{itemize}
    \item $t$-membership simplifies,
    \item $t$ is a subsequence of $\pipe \plus \circ \plus$ or $\pipe \plus \circ \pipe$, and thus $t$-membership is in almost-linear time (by Theorem~\ref{thm:newalgos}),
    \item $t = \plus \pipe \circ$, and thus $t$-membership is the Word Break problem taking time $(n m^{1/3} + m)^{1 \pm o(1)}$ (by Theorems~\ref{thm:wordbreakalgo} and~\ref{thm:wordbreaklower}, assuming Conjecture~\ref{conj:clique}), or
    \item $t$-membership takes time $(nm)^{1-o(1)}$, assuming SETH.
  \end{itemize}
\end{theorem}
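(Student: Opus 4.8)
The plan is to prove the dichotomy by a structural case analysis on the type $t$, after first normalizing it via the simplification rules of Lemma~\ref{lem:simplify}. Concretely, I would argue: if any simplification rule of Lemma~\ref{lem:simplify} applies, we are in the first bullet and done; otherwise $t$ is \emph{unsimplifiable}, and I must show that an unsimplifiable type is either a subsequence of $\pipe \plus \circ \plus$ or of $\pipe \plus \circ \pipe$, or equals $\plus \pipe \circ$, or else has a SETH-hard membership problem. The first step is therefore to understand the combinatorial structure of unsimplifiable types. Rule~1 forbids two equal consecutive symbols, rule~2 forbids the factor $\plus\pipe\plus$, and rule~3 forbids a prefix of the form $r\star$ with $r \in \{\plus,\pipe\}^*$ (in particular an unsimplifiable type that begins with $\star$ or with a block of $\plus$'s and $\pipe$'s followed by $\star$ is ruled out). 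Combining these gives a fairly rigid picture: consecutive symbols alternate, every occurrence of $\star$ must be ``guarded'' by a $\circ$ somewhere to its left, and $\plus$'s are separated by a $\circ$ unless only a single $\pipe$ sits between them. I expect this to cut the space of unsimplifiable types down to a short, explicit list organized by where the first $\circ$ (if any) occurs and what the prefix before it looks like.

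The second step handles the ``easy'' unsimplifiable types. I would enumerate all unsimplifiable $t$ with at most one $\circ$, or with no ``hard'' pattern, and check directly that each is a subsequence of $\pipe\plus\circ\plus$ or of $\pipe\plus\circ\pipe$; for those, Theorem~\ref{thm:newalgos} gives the almost-linear algorithm, noting that Theorem~\ref{thm:newalgos} explicitly also covers subsequences. The singleton type $\plus\pipe\circ$ is pulled out separately as the Word Break case, with its bounds supplied by Theorems~\ref{thm:wordbreakalgo} and~\ref{thm:wordbreaklower}. The bookkeeping here is routine but must be exhaustive: I would present it as a table or a short case list over the possible prefixes in $\{\plus,\pipe\}^*$ (which, being alternating and $\plus\pipe\plus$-free, are themselves limited to $\varepsilon$, $\plus$, $\pipe$, $\plus\pipe$, $\pipe\plus$, $\pipe\plus\pipe$ and their cyclic variants), times the possible ``tails'' after the first $\circ$.

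The third and main step is the hardness side: every remaining unsimplifiable type $t$ — i.e., every one that is \emph{not} a subsequence of $\pipe\plus\circ\plus$ or $\pipe\plus\circ\pipe$ and is not $\plus\pipe\circ$ — must be shown to require time $(nm)^{1-o(1)}$ under SETH. The natural strategy is a small set of ``core hard gadgets'' plus monotonicity: identify a finite list of minimal hard types (the candidates are short strings like $\circ\plus\circ$, $\circ\pipe\circ$, $\plus\circ\pipe\circ$, $\pipe\circ\pipe\circ$, $\circ\plus\pipe$, etc., i.e.\ the depth-$2$ and depth-$3$ hard types from Backurs--Indyk together with a few new depth-$\le 4$ ones), prove the SETH lower bound for each directly by a reduction from Orthogonal Vectors / satisfiability in the style of \cite{backursindyk}, and then show that \emph{every} unsimplifiable non-easy $t$ contains one of these cores as a ``subtype'' in a sense that allows the hardness to be lifted (padding the regular expression with trivial constructs at the unused levels, which only blows up $m$ by a constant factor). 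The crux — and the step I expect to be the real obstacle — is twofold: (a) proving the combinatorial claim that the complement of the easy set, within unsimplifiable types, is exactly covered by these finitely many cores up to the subtype relation (this requires the structural analysis of step one to be genuinely tight, and is where an overlooked unsimplifiable family would break the theorem), and (b) designing the actual SETH reductions for the new core types of depth $4$, where an extra level of alternation has to be exploited or absorbed without destroying the vector-indexing structure of the reduction. Once both the structural enumeration and the core reductions are in place, assembling them into the four-way dichotomy is immediate.
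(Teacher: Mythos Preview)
Your overall architecture matches the paper's: normalize via Lemma~\ref{lem:simplify}, enumerate the unsimplifiable types (the paper does this literally, as tree diagrams indexed by the first symbol), handle the easy ones and Word Break separately, and for everything else reduce from a finite stock of SETH-hard core types. So the strategy is right.

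There is, however, a real gap in your ``padding'' step. You propose to lift hardness from a core type $t'$ to any $t$ that contains $t'$ as a subtype ``by padding the regular expression with trivial constructs at the unused levels''. This is exactly the move the paper warns does \emph{not} work in general: the definition of ``homogeneous of type $t$'' does not let you freely skip layers, and while a degree-$1$ $\pipe$-node is semantically trivial, there is no analogous trick for $\circ$, $\plus$, or $\star$ that keeps the language unchanged and stays within the type (the paper discusses this explicitly, noting that for $\pipe$ one can also pad with fresh-symbol leaves, but for $\circ$ no such device is known). What the paper actually uses is a short list of very specific, hand-checked reductions (Lemma~\ref{lem:subsequencehardness}): $t'$ a prefix of $t$; insert a $\pipe$ at any position; replace a $\star$ by $\plus\star$; and prepend a $\plus$ when $t'$ starts with $\circ$. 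These four rules, and only these, are what propagate hardness; the case analysis then has to verify that every hard unsimplifiable $t$ is reachable from a core type by some composition of them. Your proposal would need to replace the vague padding by these (or equivalent) concrete rules and redo the enumeration accordingly.

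Relatedly, your candidate list of core types is off. The Backurs--Indyk cores are $\circ\star$, $\circ\pipe\circ$, $\circ\plus\circ$, $\circ\pipe\plus$, $\circ\plus\pipe$, and the new ones the paper must add are $\plus\pipe\circ\plus$, $\plus\pipe\circ\pipe$, and $\pipe\plus\pipe\circ$ (Theorems~\ref{thm:hardnessone}--\ref{thm:hardnessthree}). Types like $\plus\circ\pipe\circ$ or $\pipe\circ\pipe\circ$ that you list are not needed as cores: with the insert-$\pipe$ and prepend-$\plus$ rules they already follow from $\circ\pipe\circ$. Conversely, the three new cores are genuinely necessary precisely because none of the allowed reductions can manufacture a $\plus\pipe$ prefix or a $\pipe\plus\pipe$ prefix out of the depth-$\le 3$ cores.
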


This yields a complete dichotomy for any constant depth $d$. We discussed the algorithmic results and the results for Word Break before. Regarding the hardness results,
Backurs and Indyk~\cite{backursindyk} gave SETH-hardness proofs for $t$-membership on types $\circ \star$, $\circ \pipe \circ$, $\circ \plus \circ$, $\circ \pipe \plus$, and $\circ \plus \pipe$. We provide further SETH-hardness for types $\plus \pipe \circ \plus$, $\plus \pipe \circ \pipe$, and $\pipe \plus \pipe \circ$. 
To get from these (hard) \emph{core types} to all remaining hard types, we would like to argue that all hard types contain one of the core types as a \emph{subsequence} and thus are at least as hard. However, arguing about subsequences fails in general, since the definition of ``homogeneous with type $t$'' does not allow to leave out layers. This makes it necessary to proceed in a more ad-hoc way.

In summary, we provide matching upper and lower bounds for any type of bounded-depth homogeneous regular expressions, which yields a full dichotomy for the membership problem.

%
%

\subsection{Organization}
The paper is organized as follows. We start with preliminaries in Section~\ref{sec:preliminaries}.
For the Word Break problem, we prove the conditional lower bounds in Section \ref{sec:wb_lower}, followed by the matching upper
bound in Section \ref{sec:wb_upper}. In Section \ref{sec:almostlinear}
we present our new almost-linear time algorithms for
two types, and in Section \ref{sec:sethlower} we prove SETH-hardness for three types.
Finally, in Section \ref{sec:dichotomy} we
prove that our results yield a full dichotomy for homogenous regular
expression membership testing.


\section{Preliminaries}
\label{sec:preliminaries}

A regular expression is a tree with leafs labelled by symbols in an alphabet $\Sigma$ and inner nodes labelled by $\circ$ (at least one child), $\pipe$ (at least one child), $\plus$ (exactly one child), or $\star$ (exactly one child)\footnote{\label{foot:shortdiscussion}All our algorithms work in the general case where $\circ$ and $\pipe$ may have degree 1. For the conditional lower bounds, it may be unnatural to allow degree 1 for these operations. If we restrict to degrees at least 2, it is possible to adapt our proofs to prove the same results, but this is tedious and we think that the required changes would be obscuring the overall point. We discuss this issue in more detail when it comes up, see footnote~\ref{foot:discussion} on page~\pageref{foot:discussion}.}. 
The language described by the regular expression is recursively defined as follows. A leaf $v$ labelled by $c \in \Sigma$ describes the language $L(v) := \{c\}$, consisting of one word of length~1. Consider an inner node $v$ with children $v_1,\ldots,v_\ell$. If $v$ is labelled by $\circ$ then it describes the language $\{ s_1 \ldots s_\ell \mid s_1 \in L(v_1), \ldots, s_k \in L(v_\ell)\}$, i.e., all concatenations of strings in the children's languages. If $v$ is labelled by $\pipe$ then it describes the language $L(v) := L(v_1) \cup \ldots L(v_\ell)$. If $v$ is labelled $\plus$, then its degree $\ell$ must be 1 and it describes the language $L(v) := \{ s_1 \ldots s_k \mid k \ge 1 \text{ and } s_1,\ldots, s_k \in L(v_1)\}$, and if $v$ is labelled $\star$ then the same statement holds with ``$k \ge 1$'' replaced by ``$k \ge 0$''. we say that a string $s$ \emph{matches} a regular expression $R$ if $s$ is in the language described by $R$.

We use the following definition from~\cite{backursindyk}. 
We let $\types$ be the set of all finite sequences over $\{\circ,\pipe,\star,\plus\}$; we also call this the set of \emph{types}.
For any $t \in \types$ we denote its length by $|t|$ and its $i$-th entry by $t_i$. We say that a regular expression is \emph{homogeneous of type $t$} if it has depth at most $|t|+1$ (i.e., any inner node has level in $\{1,\ldots,|t|\}$), and for any $i$, any inner node in level $i$ is labelled by $t_i$. We also say that the type of any inner node at level $i$ is $t_i$. This does not restrict the appearance of leafs in any level. 
 
\begin{definition}
  \label{def:lineartimereduction}
A linear-time reduction from $t$-membership to $t'$-membership is an
algorithm that, given a regular expression $R$ of type $t$ and
length $m$ and a string $s$ of length $n$, in total time $O(n+m)$
outputs a regular expression $R'$ of type $t'$ and size $O(m)$,
and a string $s'$ of length $O(n)$ such that $s$ matches $R$ if
and only if $s'$ matches $R'$.
\end{definition}

The Strong Exponential Time Hypothesis (SETH) was introduced by
 Impagliazzo, Paturi, and Zane~\cite{ImpagliazzoPZ01} and is defined as follows.
\begin{conjecture}
\label{conj:seth}
  For no $\varepsilon>0$, k-SAT can be solved in time $O(2^{(1-\varepsilon)N})$
  for all $k\geq 3$.
\end{conjecture}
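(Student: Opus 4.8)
The final displayed statement is the Strong Exponential Time Hypothesis (Conjecture~\ref{conj:seth}), which is an \emph{open conjecture} rather than a theorem; consequently there is no proof to sketch, and the only honest ``plan'' is to explain why one is not to be expected and how the hypothesis is actually used in this paper. A proof of SETH would in particular settle $\mathrm{P} \neq \mathrm{NP}$ (a polynomial-time algorithm for CNF-SAT would already refute it), so any argument would have to defeat all the standard barriers to proving strong algorithmic and circuit lower bounds --- relativization, natural proofs, and algebrization --- and at present we cannot even rule out a linear-time algorithm for CNF-SAT. Thus a direct combinatorial, algebraic, or information-theoretic derivation is out of reach with current techniques, and I would not attempt one.

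Moreover, SETH is a considerably more delicate statement than mere $\mathrm{NP}$-hardness of SAT: it asserts that the brute-force bound $2^N$ for $k$-SAT cannot be beaten by a \emph{uniform} savings $2^{(1-\varepsilon)N}$ as $k \to \infty$. This is subtle because, for each fixed $k$, algorithms in the PPSZ and local-search families already achieve a savings of the form $2^{N(1-c_k)}$ with $c_k > 0$; SETH predicts only that $c_k \to 0$. Establishing it would therefore require a tight, uniform trade-off between clause width and achievable speed-up --- a limitation on an entire family of algorithmic paradigms simultaneously --- which is well beyond anything that present lower-bound methods can certify. The main obstacle here is thus not a missing lemma but the long-standing open problem of proving (or refuting) SETH itself.

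Accordingly, in this paper SETH is invoked purely as a standard working hypothesis, exactly as throughout the fine-grained complexity literature: it is the assumption under which the quadratic lower bounds in Theorem~\ref{thm:main} (and the earlier hardness results of Backurs and Indyk) are stated, and our contribution on the hardness side is to construct linear-time reductions from $k$-SAT to the relevant $t$-membership problems, not to analyze the complexity of $k$-SAT directly. Removing the assumption would require a breakthrough entirely orthogonal to the techniques developed here; until such a breakthrough occurs, ``assuming SETH'' is the strongest form in which these conditional lower bounds can be stated, and no proof of the conjecture is claimed.
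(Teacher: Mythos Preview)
Your assessment is correct and matches the paper's treatment: Conjecture~\ref{conj:seth} is stated as an unproven hypothesis, with no proof given or claimed, and is used solely as the assumption underpinning the conditional lower bounds. One minor inaccuracy: the paper's hardness reductions are not directly from $k$-SAT but from Orthogonal Vectors (Conjecture~\ref{conj:ov}), which in turn follows from SETH via~\cite{RW05}; otherwise your explanation of the role of SETH here is on point.
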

 Very often it is easier to show SETH-hardness based on the intermediate problem Orthogonal Vectors
 (OV): Given two sets of $d$-dimensinal vectors $A,B \subseteq
 \{0,1\}^d$ with $|A|=|B|=n$, determine if there exist
 vectors $a\in A, b\in B$ such that $\sum_{i=1}^d a[i] \cdot b[i] =
 0$.
 The following OV-conjecture follows from SETH \cite{RW05}.
 %
 \begin{conjecture}
   \label{conj:ov}
   For any $\varepsilon >0$ there is no
   algorithm for OV that runs in time
   $O(n^{2-\varepsilon} \poly(d))$.
 \end{conjecture}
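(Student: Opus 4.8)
The plan is to show that the OV-conjecture (Conjecture~\ref{conj:ov}) follows from SETH (Conjecture~\ref{conj:seth}) via the classical ``split-and-list'' reduction of Williams~\cite{RW05} from $k$-SAT to Orthogonal Vectors. So I would argue by contradiction: assume there is an $\varepsilon>0$ and an algorithm solving OV in time $O(n^{2-\varepsilon}\poly(d))$, and use it to refute SETH by solving $k$-SAT too fast for every $k$.

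The construction is as follows. Fix a constant $k\ge 3$ and let $\varphi$ be a $k$-CNF formula on variables $x_1,\dots,x_N$ with clauses $C_1,\dots,C_M$. Split the variables into two halves $X_1=\{x_1,\dots,x_{\lceil N/2\rceil}\}$ and $X_2=\{x_{\lceil N/2\rceil+1},\dots,x_N\}$ and enumerate all assignments of each half. For an assignment $\alpha\colon X_1\to\{0,1\}$ build a vector $a_\alpha\in\{0,1\}^M$ with $a_\alpha[j]=0$ if $\alpha$ already satisfies $C_j$ (through some literal over $X_1$) and $a_\alpha[j]=1$ otherwise; symmetrically build $b_\beta\in\{0,1\}^M$ for each $\beta\colon X_2\to\{0,1\}$. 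Let $A=\{a_\alpha\}$ and $B=\{b_\beta\}$, padding the smaller set with duplicates of one of its vectors so that $|A|=|B|=:n\le 2^{(N+1)/2}$, and take the OV dimension to be $d=M$. All $2n$ vectors are produced in time $O(nM)$ by scanning each partial assignment against every clause once.

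Correctness is immediate: for fixed $\alpha,\beta$ the joint assignment satisfies $C_j$ iff $\alpha$ or $\beta$ does, i.e.\ iff $a_\alpha[j]\cdot b_\beta[j]=0$; hence it satisfies all of $\varphi$ iff $\sum_{j=1}^{M}a_\alpha[j]\cdot b_\beta[j]=0$, so $\varphi$ is satisfiable iff the OV instance $(A,B)$ has an orthogonal pair. For the parameters, we may assume (removing duplicate clauses) that $M\le\binom{N}{k}2^k=O(N^k)$, and since $k$ is constant and $N=\Theta(\log n)$ this gives $d=M=\poly(\log n)=n^{o(1)}$, so $\poly(d)=n^{o(1)}$. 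Thus the assumed OV algorithm decides satisfiability of $\varphi$ in time $O(n^{2-\varepsilon}\poly(d))+O(nM)=O(2^{(1-\varepsilon/2)N+o(N)})=O(2^{(1-\varepsilon/3)N})$. Since the exponent $1-\varepsilon/3<1$ is independent of $k$ and the argument works for every $k\ge 3$, this contradicts SETH, which proves the OV-conjecture.

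I expect the only delicate point to be the interplay of quantifiers rather than any hard combinatorics: SETH is a statement about a sequence of growing $k$, whereas the reduction must hold for every fixed $k$ with a single savings in the exponent, and one must keep the OV dimension $d$ subpolynomial in $n$ throughout. The latter is automatic here because $k$ is constant, so $M$ is polynomial in $N$; if instead one insisted on dimension $d=O(\log n)$, the extra ingredient would be the Sparsification Lemma~\cite{ImpagliazzoPZ01}, which lets one reduce to $k$-CNF instances with $M=O(N)$ clauses at the cost of producing $2^{o(N)}$ subinstances, and this does not change the conclusion.
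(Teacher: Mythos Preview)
Your argument is correct and is precisely the standard split-and-list reduction of Williams~\cite{RW05}. Note, however, that the paper does not actually prove this statement: Conjecture~\ref{conj:ov} is simply asserted to follow from SETH with a citation to~\cite{RW05}, and no proof is given. So there is no ``paper's own proof'' to compare against; what you have written is essentially the content of the cited reference, with the quantifier bookkeeping (a single $\varepsilon'$ working uniformly for all $k$) handled correctly and the optional Sparsification Lemma refinement noted appropriately.
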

 

To start off the proof for the dichotomy, we have the following hardness
results from~\cite{backursindyk}.
\begin{theorem} \label{thm:backursindyklower}
  For any type $t$ among $\circ \star$, $\circ \pipe \circ$, $\circ \plus \circ$, $\circ \pipe \plus$, and $\circ \plus \pipe$, any algorithm for $t$-membership takes time $(nm)^{1-o(1)}$ unless SETH fails.
\end{theorem}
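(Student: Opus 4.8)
The plan is to prove hardness by reducing from Orthogonal Vectors (OV); by Conjecture~\ref{conj:ov} this suffices, and Conjecture~\ref{conj:ov} follows from SETH (Conjecture~\ref{conj:seth}). Given an OV instance with sets $A=\{a_1,\dots,a_n\}$, $B=\{b_1,\dots,b_n\}\subseteq\{0,1\}^d$, I would, for each of the five types $t$, build in linear time a string $s=s(A)$ and a regular expression $R=R(B)$ that is homogeneous of type $t$, both of size $n\cdot\poly(d)$, such that $s\in L(R)$ iff the OV instance is a ``yes'' instance --- or, for the types where this is more convenient, such that $s\in L(R)$ is equivalent to the \emph{negation} of OV, i.e.\ to a $\forall\exists$- or $\forall\forall$-variant that is SETH-hard by the same reduction. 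An $O((nm)^{1-\eps})$ algorithm for $t$-membership would then decide OV in time $O((n\cdot\poly(d))^{2(1-\eps)})=O(n^{2-\eps}\poly(d))$, contradicting Conjecture~\ref{conj:ov}.

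The reductions are built from three kinds of gadgets, and the whole difficulty is fitting all three into the rigid layer pattern of $t$ --- each level of the tree must carry one fixed operator, and no level may be skipped. A \emph{coordinate-check} gadget reads one encoded coordinate of $a_i$ and tests the clause $a_i[\ell]\cdot b_j[\ell]=0$; a \emph{chaining} step ANDs the $d$ coordinate checks, which the $\circ$-layer provides for free; and a \emph{selector/skip} gadget lets the match decide which vector pair to inspect and pass over the remaining encoded vectors. For the $\circ\star$, $\circ\plus\circ$, and $\circ\plus\pipe$ families a convenient coordinate check encodes the bit $a_i[\ell]$ by the \emph{length} ($1$ or $2$) of a run of a coordinate symbol $x_\ell$ in $s$, and puts into $R$ either the bare leaf $x_\ell$ when $b_j[\ell]=1$ (forcing the run to have length $1$, i.e.\ $a_i[\ell]=0$) or the starred/plussed leaf $x_\ell^{\star}$, $x_\ell^{\plus}$ when $b_j[\ell]=0$ (accepting any run length); concatenating these over $\ell$ yields a gadget that the encoded block of $a_i$ matches exactly when $a_i\perp b_j$, and an all-starred copy passes over any block. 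For $\circ\pipe\circ$ and $\circ\pipe\plus$ one instead uses the unbounded fan-out of a $\pipe$-node as the selector --- a dictionary with one alternative per vector, or per block position --- also exploiting that dictionary words may have different lengths, so that a single $\pipe$-node can double as a bounded-budget ``skip'' shifting the alignment of $R$ against $s$.

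The main obstacle, and the reason these constructions are delicate and genuinely type-specific, is the mismatch between the semantics of membership --- which is conjunctive, since $R$ must match \emph{all} of $s$ --- and the existential $\exists i\,\exists j$ form of OV. Bridging it requires choosing the OV variant whose quantifier shape matches ``every block of $s$ must be consumed'' (typically a statement of the form: \emph{no} $a_i$ is orthogonal to any $b_j$) and then engineering the one ``flexible'' layer available in $t$ --- the $\pipe$-layer, or the $\star/\plus$-layer --- so that the match effectively ranges over all $\Theta(n)$ candidate partners of a block while the other layers perform the deterministic coordinate check. The remaining work is technical: ruling out unintended alignments of gadget boundaries (a few dedicated separator symbols plus the run-structure argument above), respecting the convention on node degrees (cf.\ the footnote on degree-$1$ $\circ$ and $\pipe$ nodes), and keeping the total size at $n\cdot\poly(d)$ so that the bound is the full $(nm)^{1-o(1)}$. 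These constructions are carried out in detail by Backurs and Indyk~\cite{backursindyk}; the above only indicates their shape.
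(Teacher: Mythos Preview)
This theorem is not proved in the present paper at all: it is stated in the preliminaries as a result imported verbatim from Backurs and Indyk~\cite{backursindyk} (``we have the following hardness results from~\cite{backursindyk}''), and the paper never gives its own argument for it. So there is no proof in this paper to compare your proposal against.

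Your write-up is therefore not a competing proof but a narrative sketch of the cited constructions, and indeed you say as much in your last sentence. As a sketch it is plausible in broad strokes --- OV reduction, per-coordinate gadgets chained by the $\circ$-layer, a $\pipe$- or $\star/\plus$-layer acting as selector/skip, separator symbols to enforce alignment --- but several of the specific design choices you describe (e.g.\ encoding bits by run lengths for the $\circ\star$ family, or passing to the \emph{complement} of OV to match the conjunctive semantics of membership) are speculative and would need to be checked against what \cite{backursindyk} actually does; the real reductions there are type-specific and do not all follow a single template. If this is meant to stand in for a proof in the present paper, the correct thing to write is simply a one-line citation, as the paper does; if it is meant as an independent proof, you would have to carry out the five constructions in full rather than gesture at them.
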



\section{Conditional Lower Bound for Word Break}
\label{sec:wb_lower}

In this section we prove our conditional lower bounds for the Word Break problem, Theorems~\ref{thm:wordbreaklower} and~\ref{thm:wordbreaklowertwo}. Both theorems follow from the following reduction.

\begin{theorem} \label{thm:wordbreakreduction}
  For any $k \ge 4$, given a $k$-Clique instance on $n$ vertices, we can  construct an equivalent Word Break instance on a string of length $O(n^{k-1})$ and a dictionary $D$ of total size $\|D\| = \sum_{d \in D} |d| = O(n^3)$. The reduction is combinatorial and runs in linear time in the output size.
\end{theorem}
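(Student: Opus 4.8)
The plan is to give an explicit gadget reduction. First I would normalize the $k$-Clique instance to \emph{$k$-partite transversal} form: replace $V$ by $k$ disjoint copies $V_1,\dots,V_k$, keep an edge between $u\in V_i$ and $w\in V_j$ ($i\ne j$) iff $uw\in E$, and put no edges inside a part; then a $k$-clique in $G$ corresponds exactly to picking one vertex from each $V_i$ so that all $\binom{k}{2}$ cross-pairs are adjacent. Write $N=|V|$, so each $|V_i|=N$. Next, recall that Word Break is exactly reachability in the DAG on string positions $\{0,\dots,|s|\}$ with an arc $i\to j$ whenever $s(i..j]\in D$; I would build $s$ and $D$ so that a source-to-sink path encodes a valid transversal clique.

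The string $s$ is a concatenation of $N^{k-1}$ constant-length \emph{tuple blocks}, one per candidate $(v_1,\dots,v_{k-1})\in V_1\times\cdots\times V_{k-1}$ (this is where the $O(n^{k-1})$ length comes from). Positions are split into a default ``track'' and a ``success track''. The default track always gets from the end of one block to the end of the next using a constant number of small, tuple-independent dictionary words, so it is free; the success track can be entered \emph{inside} a block only if that block's tuple is a $(k-1)$-clique that has a common neighbour $v_k\in V_k$, and a short epilogue at the very end of $s$ is reachable only from the success track. Thus $s$ is splittable iff some block is good iff $G$ has a $k$-clique. Inside a block, switching tracks requires threading a fixed sequence of $O(k^2)$ \emph{verification} sub-gadgets that let the split re-spell the $v_i$'s, choose $v_k$, and verify the relevant edges; every edge check is folded into a \emph{triangle} check on three of the chosen vertices (every edge of $K_k$ lies in a triangle, $k\ge 3$), so a verification sub-gadget is crossed by a single constant-length dictionary word encoding a triple $(x,y,z)$, which we place in $D$ exactly when $\{x,y,z\}$ is a triangle of $G$. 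With $O(k^2)$ sub-gadget ``slots'' and $N^3$ triples this gives $\|D\|=O(k^2N^3)=O(N^3)$, plus $O(1)$ default-track words. Building $D$ needs only a constant-time adjacency-matrix lookup per triple, so the reduction is combinatorial and runs in linear time in the output size $O(N^{k-1}+N^3)$.

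The main obstacle — and the part needing genuine care — is enforcing \emph{global consistency}: a Word Break split has no memory beyond its current position, yet we must guarantee that the vertex named $v_i$ is the same in every triangle check mentioning it (and that the chosen $v_k$ is common to all of its checks), while keeping every dictionary word of constant length (a word re-reading all of $v_1,\dots,v_{k-1}$ at once would blow $\|D\|$ up to $N^{k-1}$). I would order the triangle checks along a short walk in ``triple space'' in which consecutive checks share two of their three vertices and which collectively covers all $\binom{k}{2}$ required edges, and design each verification sub-gadget so that the one dictionary word crossing it simultaneously consumes the two vertices handed over by the previous sub-gadget and re-emits the two shared with the next. A coordinate that temporarily leaves this size-three window is re-selected afresh, so the block layout must be arranged so that exactly the $k-1$ coordinates fixed by the enclosing block are the ones allowed to ``re-enter'' (their values pinned by the block structure, not by local words), while $v_k$ is never dropped between its own checks and is thereby forced to be consistent. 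Verifying that this layout makes a successful split possible iff the enclosed tuple extends to a $k$-clique, and re-checking all length/size bounds, is the bulk of the remaining (delicate but routine) work.
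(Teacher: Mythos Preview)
Your overall architecture---enumerate tuples as blocks in the string, implement an OR over blocks via a two-track/offset construction, and append an epilogue reachable only from the success track---is the same as the paper's. The gap is exactly where you flag the ``main obstacle'', and your proposed resolution does not close it.

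In Word Break the only state handed from one dictionary word to the next is the current text position. If a block has length $O(1)$ (independent of $N$), there are only $O(1)$ possible word boundaries inside it, so only $O(1)$ bits of state can cross a sub-gadget transition. Carrying the guessed vertex $v_k\in[N]$ through its $\Omega(k)$ triangle checks needs $\log_2 N$ bits, i.e.\ $\Omega(N)$ distinct boundary positions at each transition. Saying ``$v_k$ is never dropped'' does not help: a constant-length triangle word matches only the fixed text symbols of the block (which can spell $v_{\sigma(t)},v_{\sigma(t+1)}$ but never the unknown $v_k$), and it can communicate $v_k$ to the next word \emph{only} through where it ends. Hence either each block has length $\Omega(N)$---giving $|s|=\Omega(N^k)$ with your $N^{k-1}$ blocks---or $v_k$ is not actually forced to be the same across checks and the reduction is unsound. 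Your triangle dictionary of total size $O(N^3)$ is fine; the breakdown is on the string side.

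The paper fixes this by moving one more vertex from ``enumerated'' to ``guessed'': it enumerates $(k-2)$-cliques ($\le N^{k-2}$ blocks), lets each block have length $\Theta(kN)$, and uses dictionary words of length $\Theta(N)$ rather than $O(1)$. A block interleaves the $k-2$ named vertices with copies of the literal sequence $1\,2\cdots n$, and the position inside such a sequence encodes the carried unknown vertex. An edge-word $i(i{+}1)\cdots n\,\#\,j\,\#\,1\cdots(i{-}1)$ simultaneously verifies that the carried vertex $v_i$ is adjacent to the next named vertex $v_j$ and re-establishes position $i$ in the following $1\cdots n$ sequence. Two vertices must now be guessed, but they are carried one at a time via a gadget $T_S\gamma T_S$: thread the first guess through $T_S$, switch to the second across $\gamma$ with a single edge-word, then thread the second through the next $T_S$. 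This yields $O(N^2)$ edge-words of length $O(N)$, so $\|D\|=O(N^3)$, and $|s|=O(N^{k-2})\cdot O(N)=O(N^{k-1})$ as required.
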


First let us see how this reduction implies Theorems~\ref{thm:wordbreaklower} and~\ref{thm:wordbreaklowertwo}.

\begin{proof}[Proof of Theorem~\ref{thm:wordbreaklower}]
  Suppose for the sake of contradiction that Word Break can be solved combinatorially in time $O(n m^{1/3-\eps} + m)$. Then our reduction yields a combinatorial algorithm for $k$-Clique in time $O( n^{k-1} \cdot (n^3)^{1/3 - \eps}) = O(n^{k-3 \eps})$, contradicting Conjecture~\ref{conj:clique}. This proves the theorem.
\end{proof}

\begin{proof}[Proof of Theorem~\ref{thm:wordbreaklower}]
  Assuming that 4-Clique has no $O(n^{3+\delta})$ algorithm for some $\delta > 0$, we want to show that Word Break has no $O(n^{1+\delta/3})$ algorithm for $n=m$. 
  
  Setting $k = 4$ in the above reduction yields a string and a dictionary, both of size $O(n^3)$ (which can be padded to the same size). Thus, an $O(n^{1+\delta/3})$ algorithm for Word Break with $n=m$ would yield an $O(n^{3+\delta})$ algorithm for 4-Clique, contradicting the assumption.
\end{proof}

It remains to prove Theorem~\ref{thm:wordbreakreduction}.
Let $G = (V,E)$ be an $n$-node graph on which we want to determine whether
there is a $k$-clique. The main idea of our reduction is to construct a
gadget that for any $(k-2)$-clique $S \subset V$ can determine whether
there are two nodes $u,v \in V \setminus S$ such that $(u,v) \in E$
and both $u$ and $v$ are connected to all nodes in $S$, i.e., $S \cup
\{u,v\}$ forms a $k$-clique in $G$. For intuition, we first present a
simplified version of our gadgets and then show how to modify them to
obtain the final reduction.

\paragraph{Simplified Neighborhood Gadget.}
Given a $(k-2)$-clique $S$, the purpose of our first gadget is to test
whether there is a node $u \in
V$ that is connected to all nodes in $S$. Assume the nodes in
$V$ are denoted
$v_1,\dots,v_n$. The alphabet $\Sigma$ over which we construct strings
has a symbol $i$ for each $v_i$. Furthermore, we assume $\Sigma$ has
special symbols $\#$ and $\$$. The simplified neighborhood gadget for $S = \{v_{i_1},\dots,v_{i_{k-2}}\}$
has the text $T$ being 
$$
\$123\cdots n\#i_1\#123\cdots n \#i_2\#123 \cdots n\# \cdots n \# i_{k-2}\# 123\cdots n\$
$$
and the dictionary $D$ contains for every edge $(v_i, v_j) \in E$, the
string:
$$
i(i+1)\cdots n \# j \# 123\cdots (i-2)(i-1)
$$
and for every node $v_i$, the two strings
$$
\$123 \cdots
(i-2)(i-1)
$$ 
and
$$
i(i+1)\cdots n\$
$$
The idea of the above construction is as follows: Assume we want to
break $T$ into words.  The crucial observation is that to match $T$ using $D$, we have to
start with $\$123 \cdots (i-2)(i-1)$ for some node $v_i$. The only way we can possibly
match the following part $i(i+1)\cdots n \#i_1\#$ is if $D$ has the
string $i(i+1)\cdots n \# i_1 \# 123 \cdots (i-2)(i-1)$. But this is the
case if and only if $(v_i,v_{i_1}) \in E$, i.e. $v_{i}$ is a neighbor
of $v_{i_1}$. If indeed this is the case, we have now matched the
prefix $\# i_1 \#1 2 3\cdots (i-2)(i-1)$ of the next block. This means
that we can still only use strings starting with $i(i+1)\cdots$ from
$D$. Repeating this argument for all $v_{i_j} \in S$, we conclude that we
can break $T$ into words from $D$ if and only if there is some node $v_i$ that is a neighbor of every node $v_{i_j} \in S$.

\paragraph{Simplified $k$-Clique Gadget.}
With our neighborhood gadget in mind, we now describe the main ideas
of our gadget that for a given $(k-2)$-clique $S$ can test whether
there are two nodes $v_i,v_j$ such that $(v_i,v_j) \in E$ and $v_i$
and $v_j$ are both connected to all nodes of $S$, i.e., $S \cup \{v_i,
v_j\}$ forms a $k$-clique. 

Let $T_S$ denote the text used in the neighborhood gadget for $S$,
i.e.
$$
T_S = \$123\cdots n\#i_1\#123\cdots n \#i_2\#123 \cdots n\# \cdots n \# i_{k-2}\# 123\cdots n\$
$$
Our $k$-clique gadget for $S$ has the following text $T$:
$$
T_S\gamma T_S
$$
where $\gamma$ is a special symbol in $\Sigma$. The dictionary $D$ has
the strings mentioned in the neighborhood gadget, as well as the
string
$$
i(i+1)\cdots n\$ \gamma \$123 \cdots (j-1)
$$
for every edge $(v_i,v_j) \in E$. The idea of this gadget is as
follows: Assume we want to break $T$ into words. We have to start
using the dictionary string $\$123\cdots (i-1)$ for some node $v_i$. For such a
candidate node $v_i$, we can match the prefix
$$
\$123\cdots n\#i_1\#123\cdots n \#i_2\#123 \cdots n\# \cdots n \# i_{k-2}\#
$$
of $T_S \gamma T_S$ if and only if $v_i$ is a neighbor of every node
in $S$. Furthermore, the only way to match this prefix (if we start
with $\$123\cdots (i-1)$) covers
precisely the part:
$$
\$123\cdots n\#i_1\#123\cdots n \#i_2\#123 \cdots n\# \cdots n \#
i_{k-2}\#123 \cdots (i-2)(i-1)
$$
Thus if we want to also match the $\gamma$, we can only use strings
$$
i(i+1)\cdots n \$ \gamma \$123 \cdots (j-1)
$$
for an edge $(v_i, v_j) \in E$. Finally, by the second neighborhood
gadget, we can match the whole string $T_S \gamma T_S$ if and only if
there are some nodes $v_i,v_j$ such that $v_i$ is a neighbor of every
node in $S$ (we can match the first $T_S$), and $(v_i,v_j) \in E$ (we
can match the $\gamma$) and $v_j$ is a neighbor of every
node in $S$ (we can match the second $T_S$), i.e., $S \cup \{v_i, v_j
\}$ forms a $k$-clique.

\paragraph{Combining it all.} The above gadget allows us
to test for a given $(k-2)$-clique $S$ whether there are some two nodes
$v_i$ and $v_j$ we can add to $S$ to get a $k$-clique. Thus, our next
step is to find a way to combine such gadgets for all the
$(k-2)$-cliques in the input graph. The challenge is to compute an
\textit{OR} over all of them, i.e. testing whether at least one can be
extended to a $k$-clique. For this, our idea is to replace every
symbol in the above constructions with 3 symbols and then carefully
concatenate the gadgets. When we start matching the string $T$ against
the dictionary, we are matching against the first symbol of the first
$(k-2)$-clique gadget, i.e. we start at an offset of zero. We want to add strings to the dictionary that
\emph{always} allow us to match a clique gadget if we have an offset
of zero. These strings will then leave us at offset zero in the next
gadget. Next, we will add a string that allow us to change from offset
zero to offset one. We will then ensure that if we have an offset of
one when starting to match a $(k-2)$-clique gadget, we can only match
it if that
clique can be extended to a $k$-clique. If so, we ensure that we will
start at an offset of two in the next gadget. Next, we will also
add strings to the dictionary that allow us to match any gadget if we
start at an offset of two, and these strings will ensure we continue
to have an offset of two. Finally, we append symbols at the end of the
text that can \emph{only} be matched if we have an offset of two after
matching the last gadget. To summarize: Any breaking of $T$ into words
will start by using an offset of zero and simply skipping over
$(k-2)$-cliques that cannot be extended to a $k$-clique. Then once a
proper $(k-2)$-clique is found, a string of the dictionary is used to
change the start offset from zero to one. Finally, the clique is
matched and leaves us at an offset of two, after which the remaining
string is matched while maintaining the offset of two.

We now give the final details of the reduction. Let $G=(V,E)$ be the
$n$-node input graph to $k$-clique. We do as follows:
\begin{enumerate}
\item Start by iterating over every set of $(k-2)$ nodes $S$ in
  $G$. For each such set of nodes, test whether they form a
  $(k-2)$-clique in $O(k^2)$ time. Add each found $(k-2)$-clique $S$
  to a list $\List$. 
\item Let $\alpha, \beta, \gamma, \mu, \#$ and $\$$ be special symbols in the
  alphabet. For a string $T=t_1t_2t_3\cdots t_m$, let
$$
[T]^{(0)}_{\alpha, \beta} = \alpha t_1 \beta \alpha t_2 \beta \cdots
\alpha t_m \beta
$$
and
  $$
[T]^{(1)}_{\alpha,\beta} = t_1 \beta \alpha  t_2 \beta \alpha t_3 \beta \cdots
  \alpha t_m \beta \alpha
$$
For each node $v_i \in V$, add the following two strings
  to the dictionary $D$:
$$
[\$ 1 2 3 \cdots
(i-2) (i-1)]^{(1)}_{\alpha,\beta}
$$ 
and
$$
[i(i+1)\cdots n]^{(1)}_{\alpha,\beta}
$$
\item For each edge $(v_i,v_j) \in E$, add the following two strings
  to the dictionary:
$$
[i(i+1)\cdots n \# j \# 123\cdots (i-2)(i-1)]^{(1)}_{\alpha, \beta}
$$
and
$$
[i(i+1)\cdots n\$ \gamma \$123 \cdots (j-1)]^{(1)}_{\alpha, \beta}
$$
\item For each symbol $\sigma$ amongst $\{1,\dots, n, \$, \#, \gamma, \mu\}$, add the
  following two string to $D$:
$$
\alpha \sigma \beta
$$
and
$$
\beta \alpha \sigma
$$
Intuitively, the first of these strings is used for skipping a gadget
if we have an offset of zero, and the second is used for skipping a gadget
if we have an offset of two.
\item Also add the three strings 
$$
\alpha \mu \beta \alpha\ \  \$ \beta  \alpha \mu\ \  \beta \mu \mu
$$
to the dictionary. The first is intuitively used for changing from an
offset of zero to an offset of one (begin matching a clique gadget),
the second is used for changing from an offset of one to an offset of
two in case a clique gadget could be matched, and the last string is
used for matching the end of $T$ if an offset of two has been achieved.
\item We are finally ready to describe the text $T$. For a
  $(k-2)$-clique $S=\{v_{i_1},\dots,v_{i_{k-2}}\}$, let $T_S$ be the neighborhood gadget from above,
  i.e.
$$
T_S := \$123\cdots n\#i_1\#123\cdots n \#i_2\#123 \cdots n\# \cdots n \# i_{k-2}\# 123\cdots n\$
$$
For each $S \in \List$ (in some arbitrary order), we append the
string:
$$
[\mu T_S \gamma T_S \mu]^{(0)}_{\alpha, \beta}
$$
to the text $T$. Finally, once all these strings have been appended,
append another two $\mu$'s to $T$. That is, the text $T$ is:
$$
T:= \left(\circ_{S \in \List} [\mu T_S \gamma T_S \mu]^{(0)}_{\alpha,
    \beta}\right) \mu \mu
$$
\end{enumerate}
We want to show that the text $T$ can be broken into words from the
dictionary $D$ iff there is a $k$-clique in the input graph. Assume
first there is a $k$-clique $S$ in $G=(V,E)$. Let $S'$ be an arbitrary
subset of $k-2$ nodes from $S$. Since these form a $(k-2)$-clique, it
follows that $T$ has the substring $[\mu T_{S'} \gamma T_{S'}
\mu]^{(0)}_{\alpha, \beta}$. To
match $T$ using $D$, do as follows: For each $S''$ preceeding $S'$ in
$\List$, keep using the strings $\alpha \sigma \beta$ from step
4 above to match. This allows us to match everything preceeding $[\mu T_{S'} \gamma T_{S'}
\mu]^{(0)}_{\alpha, \beta}$ in $T$. Then use the string $\alpha \mu
\beta \alpha$ to match the beginning of $[\mu T_{S'} \gamma T_{S'}
\mu]^{(0)}_{\alpha, \beta}$. Now let $v_i$ and $v_j$ be the two nodes
in $S \setminus S'$. Use the string $[\$123 \cdots
(i-2)(i-1)]^{(1)}_{\alpha, \beta}$ to match the next part of $[\mu T_{S'} \gamma T_{S'}
\mu]^{(0)}_{\alpha, \beta}$. Then since $S$ is a $k$-clique, we have
the string 
$[i(i+1)\cdots n \# h \# 123\cdots (i-2)(i-1)]^{(1)}_{\alpha, \beta}$
in the dictionary for every $v_h \in S'$. Use these strings for
each $v_h \in S'$. Again, since $S$ is a $k$-clique, we also have the
edge $(v_i, v_j) \in E$. Thus we can use the string $$
[i(i+1)\cdots n\$ \gamma \$123 \cdots (j-1)]^{(1)}_{\alpha, \beta}
$$ to match across the $\gamma$ in $[\mu T_{S'} \gamma T_{S'}
\mu]^{(0)}_{\alpha, \beta}$. We then repeat the argument for $v_j$ and
repeatedly use the strings 
$[j(j+1)\cdots n \# h \# 123\cdots (j-2)(j-1)]^{(1)}_{\alpha, \beta}$
to match the second $T_{S'}$. We finish by using the string $[j(j+1) \cdots n]_{\alpha, \beta}^{(1)}$
followed by using $\$ \beta \alpha \mu$. We are now at an offset where we can
repeatedly use $\beta \alpha \sigma$ to match across all remaining $[\mu T_{S''} \gamma T_{S''}
\mu]^{(0)}_{\alpha, \beta}$. Finally, we can finish the match by using
$\beta \mu \mu$ after the last substring $[\mu T_{S''} \gamma T_{S''}
\mu]^{(0)}_{\alpha, \beta}$.

For the other direction, assume it is possible to break $T$ into words
from $D$. By construction, the last word used has to be $\beta \mu
\mu$. Now follow the matching backwards until a string not of the form
$\beta \alpha \sigma$ was used. This must happend eventually since $T$
starts with $\alpha$. We are now at a position in $T$ where the
suffix can be matched by repeatedly using $\beta \alpha \sigma$, and
then ending with $\beta \mu \mu$. By construction, $T$ has $\alpha
\sigma$ just before this suffix for some $\sigma  \in \{1,\dots, n,
\$, \#, \gamma, \mu\}$. The only string in $D$ that
could match this without being of the form $\beta \alpha \sigma$ is
the one string $\$\beta \alpha \mu$. It follows that we must be at the
end of some substring $[\mu T_{S'} \gamma T_{S'}
\mu]^{(0)}_{\alpha, \beta}$ and used $\$\beta \alpha \mu$ for matching
the last $\mu$. To match the preceeding $n$ in the last $T_{S'}$, we
must have used a string $[j(j+1)\cdots n]_{\alpha, \beta}^{(1)}$ for
some $v_j$. The only strings that can be used preceeding this are
strings of the form
$[j(j+1)\cdots n \# h \# 123\cdots (j-2)(j-1)]^{(1)}_{\alpha,
  \beta}$. Since we have matched $T$, it follows that $(v_j, v_h)$ is in
$E$ for every $v_h \in S'$. Having traced back the match across the
last $T_{S'}$ in $[\mu T_{S'} \gamma T_{S'}
\mu]^{(0)}_{\alpha, \beta}$, let $v_i$ be the node such that the
string $[i(i+1)\cdots n\$ \gamma \$123 \cdots (j-1)]^{(1)}_{\alpha,
  \beta}$ was used to match the $\gamma$. It follows that we must have
$(v_i,v_j) \in E$. Tracing the matching through the first $T_{S'}$ in $[\mu T_{S'} \gamma T_{S'}
\mu]^{(0)}_{\alpha, \beta}$, we conclude that we must also have $(v_i,
v_h) \in E$ for every $v_h \in S'$. This establishes that
$S' \cup \{v_i,v_j\}$ forms a $k$-clique in $G$.

\paragraph{Finishing the proof.} From the input graph $G$, we constructed the
Word Break instance in time $O(n^{k-2}k^2)$ plus the time needed to
output the text and the dictionary.  For every edge $(v_i,v_j) \in E$,
we added two strings to $D$, both of length $O(n)$. Furthermore, $D$
had two $O(n)$ length strings for each node $v_i \in V$ and another
$O(n)$ strings of constant length. Thus the total length of the
strings in $D$ is $M = O(|E|n+n) = O(n^3)$. The text $T$ has the
substring $[\mu T_{S'} \gamma T_{S'} \mu]^{(0)}_{\alpha, \beta}$ for
every $(k-2)$-clique $S$.  Thus $T$ has length $N = O(n^{k-1})$
(assuming $k$ is constant).
The entire reduction takes $O(n^{k-1}+n^3)$ time for constant $k$.
This finishes the reduction and proves Theorem~\ref{thm:wordbreakreduction}.


\section{Algorithm for Word Break}
\label{sec:wb_upper}

In this section we present an $\tO(n m^{1/3} + m)$ algorithm for the Word Break problem, proving Theorem~\ref{thm:wordbreakalgo}. Our algorithm uses many ideas of the randomized $\tO(n m^{1/2 - 1/18} + m)$ algorithm by Backurs and Indyk~\cite{backursindyk}, in fact, it can be seen as a cleaner execution of their main ideas. 
Recall that in the Word Break Problem we are given a set of strings $D = \{d_1,\ldots,d_k\}$ (the dictionary) and a string $s$ (the text) and we want to decide whether $s$ can be ($D$-)\emph{partitioned}, i.e., whether we can write $s = s_1 \ldots s_r$ such that $s_i \in D$ for all $i$. We denote the length of $s$ by $n$ and the total size of $D$ by $m := \|D\| := \sum_{i=1}^k |d_i|$.

We say that we can ($D$-)\emph{jump} from $j$ to $i$ if the substring $s[j+1..i]$ is in $D$. Note that if $s[1..j]$ can be partitioned and we can jump from $j$ to $i$ then also $s[1..i]$ can be partitioned. Moreover, $s[1..i]$ can be partitioned if and only if there exists $0 \le j < i$ such that $s[1..j]$ can be partitioned and we can jump from $j$ to $i$. For any power of two $q \ge 1$, we let $D_q := \{ d \in D \mid q \le |d| < 2q \}$.

In the algorithm we want to compute the set $T$ of all indices $i$ such that $s[1..i]$ can be partitioned (where $0 \in T$, since the empty string can be partitioned). The trivial $O(nm)$ algorithm computes $T \cap \{0,\ldots,i\}$ one by one, by checking for each $i$ whether for some string $d$ in the dictionary we have $s[i-|d|+1..i] = d$ and $i-|d| \in T$, since then we can extend the existing partitioning of $s[1..i-|d|]$ by the string $d$ to a partitioning of $s$. 

In our algorithm, when we have computed the set $T \cap \{0,\ldots,x\}$, we want to compute all possible ``jumps'' from a point before $x$ to a point after $x$ using dictionary words with length in $[q,2q)$ (for any power of two $q$). This gives rise to the following query problem. 

\begin{lemma} \label{lem:WBsubproblem}
On dictionary $D$ and string $s$, consider the following queries:

\begin{itemize}
\item \emph{Jump-Query}: Given a power of two $q \ge 1$, an index $x$ in $s$, and a set $S \subseteq \{x-2q+1,\ldots,x\}$, compute the set of all $x < i \le x + 2q$ such that we can $D_q$-jump from some $j \in S$ to $i$.
\end{itemize}

We can preprocess $D,s$ in time $O(n \log m + m)$ such that queries of the above form can be answered in time $O(\min\{ q^2, \sqrt{q m \log q} \})$, where $m$ is the total size of~$D$ and $n = |s|$. 
\end{lemma}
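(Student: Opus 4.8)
The plan is to handle the two regimes of the claimed bound $O(\min\{q^2,\sqrt{qm\log q}\})$ separately, since they correspond to two different strategies, and then the preprocessing has to support whichever one is cheaper for the given query.

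For the $O(q^2)$ regime, I would not use any dictionary data structure at all: given a Jump-Query with parameters $q,x,S$, there are at most $2q$ candidate start positions $j\in S$ and at most $2q$ candidate end positions $i\in\{x+1,\dots,x+2q\}$, and a $D_q$-jump from $j$ to $i$ requires $q\le i-j<2q$, so for each $j$ there are fewer than $q$ relevant values of $i$. The only thing I need is, for a given substring $s[j+1..i]$ of length in $[q,2q)$, to decide membership in $D_q$ in $O(1)$ time. This is a classic application of hashing / suffix structures: during preprocessing I build a (perfect or Karp--Rabin) hash of all dictionary words, bucketed by length, in time $O(m)$ expected or $O(m)$ via a suffix automaton / suffix array of $D$, and I precompute hashes (or the rank in the suffix array) of all prefixes of $s$ so that any substring hash is computable in $O(1)$; this costs $O(n\log m + m)$ in total (the $\log m$ covering the cost of locating a substring among dictionary words). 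Then each query costs $O(q)$ start positions times $O(q)$ end positions (or more carefully $\sum_{j\in S}(\text{number of }i)$, which is $O(q^2)$) membership checks, i.e.\ $O(q^2)$.

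For the $O(\sqrt{qm\log q})$ regime, I would process the dictionary words of $D_q$ in bulk against the window $s[x-2q+1..x+2q]$, which has length $4q$. The key observation is that a $D_q$-jump from some $j\in S$ landing at $i$ exists iff some word $d\in D_q$ occurs in this window ending at position $i$ and starting at a position in $S+1$. So I want all occurrences of all words of $D_q$ inside a text window of length $\Theta(q)$. Using an Aho--Corasick automaton built on $D_q$ (or, to get the stated $\sqrt{\cdot}$ shape, a more careful count) and scanning the length-$\Theta(q)$ window, one finds all such occurrences; the subtlety is that the number of occurrences could be large, so instead I would build the automaton for $D_q$ once in the preprocessing — but $\|D_q\|$ can be as large as $m$, so I cannot afford $O(\|D_q\|)$ per query. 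The trick to get $\sqrt{qm\log q}$: split $D_q$ into ``short-to-process'' handling versus noticing that only words that actually fit in a length-$4q$ window matter and each contributes occurrences that can be charged; more precisely, run the scan and bound the work by $O(q + (\text{number of reported occurrences}))$, and separately bound distinct words examined. Balancing the two terms $q^2$ and $\sqrt{qm\log q}$ happens at $q\approx (m\log q)^{1/3}$, which is exactly where the outer algorithm will switch, so for the data structure I just need: (i) an $O(m)$-space structure over all of $D$ answering ``is this substring in $D_{q}$'' in $O(1)$, built in $O(m)$ (plus $O(n\log m)$ for the $s$-side hashes), and (ii) for each power of two $q$, an Aho--Corasick-type structure on $D_q$ of size $O(\|D_q\|)$; since $\sum_q \|D_q\| = O(m)$, the total preprocessing is $O(n\log m + m)$ as claimed.

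The main obstacle I expect is the second regime: naively, answering a query by running a length-$\Theta(q)$ text through the Aho--Corasick automaton of $D_q$ costs $O(q + \#\text{occ})$, and $\#\text{occ}$ is not obviously $O(\sqrt{qm\log q})$. The fix — and I believe this is the heart of Backurs--Indyk's approach that the authors are streamlining — is to not enumerate individual occurrences but to exploit that we only need, for each end-position $i$, a single witness; combined with a length-based bucketing of $D_q$ and a counting argument (there are at most $\|D_q\|/q$ words of length $\ge q$, and within a window of length $4q$ each matched prefix can be represented succinctly), one bounds the work by $O(\sqrt{q\|D_q\|\log q}) \le O(\sqrt{qm\log q})$. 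I would carry out the argument by: (1) stating the hashing lemma for substring membership with the $O(n\log m + m)$ preprocessing; (2) giving the trivial $O(q^2)$ query algorithm; (3) giving the window-scan query algorithm and carefully bounding its running time via the counting argument above; (4) taking the minimum. The delicate calculation is entirely in step (3).
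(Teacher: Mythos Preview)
Your $O(q^2)$ regime is fine (the paper happens to use Aho--Corasick on the length-$O(q)$ window rather than hashing, but your approach works too). The genuine gap is in the $O(\sqrt{qm\log q})$ regime: what you sketch does not lead to that bound, and the key algorithmic idea is missing.

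Concretely, running Aho--Corasick for $D_q$ over the length-$\Theta(q)$ window costs $O(q+\#\text{occ})$, and $\#\text{occ}$ can be as large as $\Theta(q\cdot |D_q|)=\Theta(m)$, not $O(\sqrt{qm\log q})$. Your ``single witness per $i$'' observation is the right instinct, but it only tells you that for each of the $O(q)$ end positions $i$ you need to check whether \emph{some} length $\ell\in L_i:=\{\,|d|:d\in D_q,\ s[i-|d|+1..i]=d\,\}$ satisfies $i-\ell\in S$. The set $L_i$ can still have size $\Theta(m/q)$, so naively this is again $\Theta(m)$ work; your ``length-based bucketing and counting argument'' does not close this gap, and no balancing of the two terms you wrote down produces a square root.

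What the paper actually does is quite different. It builds the trie $\mathcal{T}_q$ of the \emph{reversed} words in $D_q$ and precomputes, for every position $i$ in $s$, the trie node $v(q,i)$ corresponding to the longest $D_q$-suffix ending at $i$ (this is the source of the $O(n\log m)$ preprocessing). The marked ancestors of $v(q,i)$ are exactly the words in $D_q$ that end at $i$, i.e.\ they encode $L_i$. The crucial trick is a \emph{$\lambda$-packing} of $\mathcal{T}_q$ into disjoint root-to-descendant paths, each containing exactly $\lambda$ marked nodes; for each such path $B$ one precomputes the sumset $S+S_B$ (where $S_B$ is the set of depths of marked nodes on the root-to-$r_B$ path) in $O(q\log q)$ time via integer multiplication. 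Then for each $i$ you walk up at most $O(\lambda)$ marked ancestors from $v(q,i)$ until you hit the root of some packed path $B$, and finish with a single lookup ``$i\in S+S_B$?''. The total cost is $O(q\lambda)$ for the walks plus $O\big(\tfrac{m}{q\lambda}\cdot q\log q\big)$ for the sumsets; choosing $\lambda=\sqrt{(m/q)\log q}$ balances these to $O(\sqrt{qm\log q})$. The two ingredients you are missing are (i) the sumset/FFT step that batches many length-checks against $S$ at once, and (ii) the packing of the trie that lets many positions $i$ share the same precomputed sumset.
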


Before we prove that jump-queries can be answered in the claimed running time, let us show that this implies an $\tilde O(n m^{1/3} + m)$-time algorithm for the Word Break problem.

\begin{proof}[Proof of Theorem~\ref{thm:wordbreakalgo}]
  The algorithm works as follows. After initializing $T := \{0\}$, we iterate over $x = 0,\ldots,n-1$. For any $x$, and any power of two $q \le n$ dividing $x$, define $S := T \cap \{x-2q+1,\ldots,x\}$. Solve a jump-query on $(q,x,S)$ to obtain a set $R \subseteq \{x+1..x+2q\}$, and set $T := T \cup R$. 
  
  To show correctness of the resulting set $T$, we have to show that $i \in \{0,\ldots,n\}$ is in $T$ if and only if $s[1..i]$ can be partitioned. Note that whenver we add $i$ to $T$ then $s[1..i]$ can be partitioned, since this only happens when there is a jump to $i$ from some $j \in T$, $j < i$, which inductively yields a partitioning of $s[1..i]$. For the other direction, we have to show that whenever $s[1..i]$ can be partitioned then we eventually add $i$ to $T$.
  This is trivially true for the empty string ($i = 0$). 
  For any $i >0$ such that $s[1..i]$ can be partitioned, consider any $0 \le j < i$ such that $s[1..j]$ can be partitioned and we can jump from $j$ to $i$. Round down $i-j$ to a power of two $q$, and consider any multiple $x$ of $q$ with $j \le x < i$. Inductively, we correctly have $j \in T$. Moreover, this holds already in iteration $x$, since after this time we only add indices larger than $x$ to $T$. Consider the jump-query for $q$, $x$, and $S := T \cap \{x-2q+1,\ldots,x\}$ in the above algorithm. In this query, we have $j \in S$ and we can jump from $j$ to $i$, so by correctness of Lemma~\ref{lem:WBsubproblem} the returned set $R$ contains~$i$. Hence, we add $i$ to $T$, and correctness follows.
  
  For the running time, since there are $O(n/q)$ multiples of $1 \le q \le n$ in $\{0,\ldots,n-1\}$, there are $O(n/q)$ invocations of the query algorithm with power of two $q \le n$. Thus, the total time of all queries is up to constant factors bounded by 
  $$ \sum_{i=0}^{\log n} \frac n{2^\ell} \cdot \min\Big\{ (2^\ell)^2, \sqrt{2^\ell m \log (2^\ell)} \Big\} 
  = n \cdot \sum_{i=0}^{\log n} \min\Big\{ 2^\ell, \sqrt{ m \ell / 2^\ell} \Big\}. $$
  We split the sum at a point $\ell^*$ where $2^{\ell^*} = \Theta((m \log m)^{1/3})$ and use the first term for smaller~$\ell$ and the second for larger. Using $\sum_{i=a}^b 2^i = O(2^b)$ and $\sum_{i=a}^b \sqrt{i / 2^i} = O(\sqrt{a / 2^a})$, we obtain the upper bound
  $$ \le n \cdot \sum_{i=0}^{\ell^*} 2^\ell + n\cdot \sum_{i=\ell^*+1}^{\log n} \sqrt{m \ell / 2^\ell} = O\Big(n 2^{\ell^*} + n \sqrt{ m \ell^* / 2^{\ell^*}} \Big) 
  = O\big( n (m \log m)^{1/3} \big), $$
  since $\ell^* = O(\log m)$ by choice of $2^{\ell^*} = \Theta((m \log m)^{1/3})$.
  Together with the preprocessing time $O(n \log m + m)$ of Lemma~\ref{lem:WBsubproblem}, we obtain the desired running time $O(n (m \log m)^{1/3} + m)$.
\end{proof}
 
It remains to design an algorithm for jump-queries. We present two methods, one with query time $O(q^2)$ and one with query time $O(\sqrt{q m \log q})$. The combined algorithm, where we first run the preprocessing of both methods, and then for each query run the method with the better guarantee on the query time, proves Lemma~\ref{lem:WBsubproblem}.

\subsection{Jump-Queries in Time $O(q^2)$}

The dictionary matching algorithm by Aho and Corasick~\cite{AC75} yields the following statement. 

\begin{lemma} \label{lem:ahocorasick}
  Given a set of strings $D'$, in time $O(\|D'\|)$ one can build a data structure allowing the following queries. Given a string $s'$ of length $n'$, we compute the set $Z$ of all substrings of $s'$ that are contained in $D'$, in time $O(n' + |Z|) \le O(n'^2)$. 
\end{lemma}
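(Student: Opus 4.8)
The plan is to take the data structure to be exactly the Aho--Corasick automaton~\cite{AC75} of $D'$, enhanced with \emph{output links} so that reporting becomes output-sensitive. Here I read the set $Z$ as the set of \emph{occurrences} of dictionary words as substrings of $s'$, each occurrence recorded as (say) a pair consisting of a starting position in $s'$ together with the matched word $d\in D'$; this is the form the application to jump-queries needs, and the occurrence count is what the bound $O(n'+|Z|)$ really refers to.

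For the preprocessing I would first build the trie of $D'$: one node per distinct prefix of a word in $D'$, with a node \emph{marked} (and annotated with $|d|$, and a pointer to $d$ if desired) whenever it spells out a complete word $d\in D'$. This trie has $O(\|D'\|)$ nodes and can be built in $O(\|D'\|)$ time (handling the alphabet by the usual means, e.g.\ hashing the outgoing edges of each node, or sorting all edges once). Then, in a single breadth-first sweep over the trie, I would compute the standard failure links $f(v)$ (the node spelling the longest proper suffix of $v$'s string that is still a trie node) together with the goto function $\delta(v,c)$; this is the textbook Aho--Corasick construction and runs in $O(\|D'\|)$ time. In the same sweep I would also compute, for each node $v$, an output link $o(v)$ to the nearest strictly-ancestral marked node along the chain $v, f(v), f(f(v)),\dots$, via the recursion $o(v)=f(v)$ if $f(v)$ is marked and $o(v)=o(f(v))$ otherwise; this again costs $O(\|D'\|)$ in total, so the preprocessing bound holds.

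For a query on $s'$ of length $n'$ I would run $s'$ through the automaton, maintaining the state $v_p$ reached after reading the prefix $s'[1..p]$ via $v_p:=\delta(v_{p-1},s'[p])$ with $v_0$ the root; the usual potential argument bounds the total cost of these transitions by $O(n')$. The state $v_p$ represents the longest suffix of $s'[1..p]$ that is a prefix of some word in $D'$, and the words of $D'$ occurring in $s'$ and ending exactly at position $p$ are precisely the marked nodes on the chain $v_p, f(v_p), f(f(v_p)),\dots$; following the output links $v_p\to o(v_p)\to o(o(v_p))\to\cdots$ enumerates exactly these marked nodes, and at each one (spelling a word $d$) I report the occurrence $s'[p-|d|+1..p]=d$. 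Since every hop along the output-link chain yields a fresh element of $Z$, the reporting time summed over all positions $p$ is $O(|Z|)$, so the query runs in $O(n'+|Z|)$. Finally, $s'$ has at most $\binom{n'+1}{2}=O(n'^2)$ pairs of endpoints, hence $|Z|=O(n'^2)$, giving the worst-case bound $O(n'^2)$.

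The one point that genuinely needs care — and I expect it to be the only subtlety rather than a real obstacle — is the reporting step: naively walking the full failure chain $v_p, f(v_p),\dots$ at each position could blow a single query up to $\Omega(n'^2)$ even when $Z=\emptyset$, so one must use the precomputed output links, which let every step of the walk be charged to a distinct reported occurrence and thereby deliver the output-sensitive $O(n'+|Z|)$ bound. Everything else is a routine invocation of the classical Aho--Corasick construction and its analysis.
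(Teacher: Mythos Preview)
Your proposal is correct and is exactly the standard Aho--Corasick argument that the paper invokes; the paper itself does not give a proof of this lemma but simply attributes it to~\cite{AC75}. Your write-up spells out the construction (trie, failure links, output links) and the output-sensitive query analysis in more detail than the paper does, and your emphasis on output links to avoid the $\Omega(n'^2)$ blowup when $Z$ is small is precisely the point that makes the $O(n'+|Z|)$ bound work.
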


With this lemma, we design an algorithm for jump-queries as follows.
In the preprocessing, we simply build the data structure of the above lemma for each $D_q$, in total time $O(m)$.

For a jump-query $(q,x,S)$, we run the query of the above lemma on the substring $s[x-2q+1..x+2q]$ of~$s$. This yields all pairs $(j,i)$, $x-2q < j < i \le x+2q$, such that we can $D_q$-jump from $j$ to $i$. Iterating over these pairs and checking whether $j \in S$ gives a simple algorithm for solving the jump-query. The running time is $O(q^2)$, since the query of Lemma~\ref{lem:ahocorasick} runs in time quadratic in the length of the substring $s[x-2q+1..x+2q]$.

\subsection{Jump-Queries in Time $O(\sqrt{q m \log q})$}

The second algorithm for jump-queries is more involved. Note that if $q > m$ then $D_q = \emptyset$ and the jump-query is trivial. Hence, we may assume $q \le m$, in addition to $q \le n$.

\paragraph{Preprocessing.}
We denote the reverse of a string $d$ by $\rev{d}$, and let $\rev{D}_q := \{ \rev{d} \mid d \in D_q\}$.
We build a trie $\cT_q$ for each $\rev{D}_q$. Recall that a trie on a set of strings is a rooted tree with each edge labeled by an alphabet symbol, such that if we orient edges away from the root then no node has two outgoing edges with the same labels. We say that a node $v$ in the trie \emph{spells} the word that is formed by concatenating all symbols on the path from the root to $v$.
The set of strings spelled by the nodes in $\cT_q$ is exactly the set of all prefixes of strings in $\rev{D}_q$. Finally, we say that the nodes spelling strings in $\rev{D}_q$ are \emph{marked}.
We further annotate the trie $\cT_q$ by storing for each node $v$ the lowest marked ancestor $m_v$. 

In the preprocessing we also run the algorithm of the following lemma.

\begin{lemma} \label{lem:stringdatastruct}
  The following problem can be solved in total time $O(n \log m + m)$. For each power of two $q \le \min\{n,m\}$ and each index $i$ in string $s$, compute the minimal $j=j(i)$ such that $s[j..i]$ is a suffix of a string in $D_q$. Furthermore, compute the node $v(q,i)$ in $\cT_q$ spelling the string $s[j(i)..i]$. 
\end{lemma}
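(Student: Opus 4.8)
The plan is to handle each power of two $q \le \min\{n,m\}$ separately, spending only $O(n+\|D_q\|)$ time on it; since there are $O(\log m)$ such $q$ and $\sum_q \|D_q\| = O(m)$, this yields the claimed total $O(n\log m + m)$. Fix $q$. First observe that $s[j..i]$ is a suffix of a word $d\in D_q$ if and only if $\rev{(s[j..i])} = s[i]s[i-1]\cdots s[j]$ is a prefix of $\rev d$, i.e.\ is spelled by a node of $\cT_q$. Hence the node we want is exactly the deepest node of $\cT_q$ reached by starting at the root and following the edges labelled $s[i],s[i-1],s[i-2],\dots$ as far as possible, and then $j(i) = i-\mathrm{depth}(v(q,i))+1$ (with $j(i)=i+1$ when $v(q,i)$ is the root). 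Writing $u := \rev s$ and $p := n+1-i$, so that $s[i],s[i-1],\dots$ is the same as $u[p],u[p+1],\dots$, the task becomes: for every position $p$ of $u$, find the longest prefix of $u[p..]$ that is a prefix of some string in $\rev D_q$, together with the corresponding node of $\cT_q$.

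To do this for all $p$ at once in linear time, I would build a suffix tree $\cS_q$ of the string $W_q := u\,\#\,\rev d_1\,\#\,\rev d_2\,\#\cdots\#\,\rev d_r$, where $\rev d_1,\dots,\rev d_r$ enumerate $\rev D_q$ and $\#$ is a fresh symbol; $W_q$ has length $O(n+\|D_q\|)$ and $\cS_q$ is built in linear time. Call a leaf of $\cS_q$ \emph{special} if the suffix of $W_q$ it represents begins at the first symbol of some $\rev d_k$, and call a node of $\cS_q$ \emph{marked} if its subtree contains a special leaf; the marked nodes are identified by one bottom-up traversal. For each position $p$ of $u$, let $w(p)$ be the deepest marked ancestor of the leaf of $\cS_q$ representing the suffix $u[p..n]\#\cdots$ of $W_q$ (with answer $0$ if there is none); all the $w(p)$ are computed by a single depth-first search maintaining the deepest marked node on the current root path. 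I claim the string-depth of $w(p)$ equals the desired length. Finally, $v(q,i)$ is recovered from $w(p)$ via a pointer, set up once, from each marked node of $\cS_q$ to the node of $\cT_q$ at the same string-depth along its path (the root-to-special-leaf paths of $\cS_q$, truncated at their special leaves, form a compacted copy of $\cT_q$), and $j(i) = i - (\text{string-depth of } w(p)) + 1$.

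Correctness of the claim is a standard lowest-common-ancestor argument. On one hand, the string of $w(p)$ is a prefix of $u[p..n]$, hence contains no $\#$, hence (being also a prefix of $\rev d_k\#\cdots$ for any $\rev d_k$ whose special leaf lies below $w(p)$) is no longer than $|\rev d_k|$ and is therefore a genuine prefix of $\rev d_k$; so its length is a valid candidate. On the other hand, if $u[p..p+L-1]$ is a prefix of some $\rev d_k$, then the lowest common ancestor of the leaf of $u[p..]\#\cdots$ and the special leaf of $\rev d_k$ is a marked ancestor of the former of string-depth at least $L$, so the string-depth of $w(p)$ is at least $L$; taking $L$ maximal gives equality. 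All the work for a fixed $q$ — building $\cS_q$, the bottom-up marking, the depth-first search, and the pointer setup — is $O(|W_q|) = O(n+\|D_q\|)$, and summing over the $O(\log m)$ relevant powers of two gives $O(n\log m + m)$.

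The point that needs care — and the reason a direct Aho--Corasick scan does not suffice — is that we need the longest prefix of $u[p..]$ that is a \emph{prefix of a dictionary word}, whereas running $u$ through the Aho--Corasick automaton of $\rev D_q$ would report the longest \emph{suffix} of each processed prefix $u[1..p]$ that is a node of $\cT_q$, which is the wrong notion; moreover the Aho--Corasick failure links point to shorter suffixes and do not help extend a prefix match. The ``special leaf / marked ancestor'' filter inside the suffix tree is exactly what enforces the prefix-of-a-word condition, so I expect this to be the part requiring the most care to state and verify. A minor secondary issue is that linear-time suffix-tree (or suffix-array-plus-LCP) construction over a large alphabet first needs the symbols of $W_q$ to be ranked; assuming the input alphabet is $\{1,\dots,(n+m)^{O(1)}\}$ this is an $O(n+m)$ radix-sort preprocessing done once, which does not affect the stated bound.
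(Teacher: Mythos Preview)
Your approach is essentially the same as the paper's: reverse everything, reduce to finding for each suffix of $u=\rev{s}$ the deepest point in a suffix tree that still spells a prefix of some $\rev{d}_k$, and read off the answers by one DFS. The paper builds the suffix tree $\cS$ of $\rev{s}$ alone and then overlays $\cT_q$ onto it by simultaneously walking both trees (recording, for each explicit node and each edge of $\cS$, the deepest position that corresponds to a node of $\cT_q$); you instead throw the dictionary words into the suffix tree via the concatenation $W_q=u\,\#\,\rev{d}_1\,\#\cdots\#\,\rev{d}_r$ and detect the same positions through the ``has a special leaf below'' marking. Both routes give $O(n+\|D_q\|)$ per $q$ and hence the claimed $O(n\log m+m)$.

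There is, however, a real bug in your version as written. You assert that ``the string of $w(p)$ is a prefix of $u[p..n]$, hence contains no $\#$'', but $w(p)$ is only guaranteed to spell a prefix of the suffix of $W_q$ beginning at position $p$, which is $u[p..n]\,\#\,\rev{d}_1\,\#\cdots$, not of $u[p..n]$ itself. Concretely, if $u[p..n]$ equals some $\rev{d}_k$ (equivalently $s[1..i]\in D_q$), then the suffixes $u[p..n]\,\#\,\rev{d}_1\,\#\cdots$ and $\rev{d}_k\,\#\,\rev{d}_{k+1}\,\#\cdots$ agree through the $\#$ and possibly beyond, so the string-depth of their LCA --- and hence of $w(p)$ --- strictly exceeds $|\rev{d}_k|$. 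Then the label of $w(p)$ contains $\#$, it has no counterpart in $\cT_q$, and your formula $j(i)=i-\mathrm{depth}(w(p))+1$ returns a nonpositive index. The standard remedy is to use pairwise distinct separators $\#_0,\#_1,\dots,\#_r$ (equivalently, build a genuine generalized suffix tree of $u,\rev{d}_1,\dots,\rev{d}_r$); then no common prefix of two suffixes from different pieces can cross a separator, and your LCA argument goes through verbatim. The paper's overlay approach sidesteps this entirely because its $\cS$ contains only suffixes of $\rev{s}$, so the separator issue never arises.
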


Note that the second part of the problem is well-defined: $\cT_q$ stores the reversed strings $\rev{D}_q$, so for each suffix $x$ of a string in $D_q$ there is a node in $\cT_q$ spelling $x$.

\begin{proof}
  First note that the problem decomposes over $q$. Indeed, if we solve the problem for each~$q$ in time $O(\|D_q\| + n)$, then over all $q$ the total time is $O(m + n \log m)$, as the $D_q$ partition $D$ and there are $O(\log m)$ powers of two $q \le m$.
  
  Thus, fix a power of two $q \le \min\{n,m\}$. It is natural to reverse all involved strings, i.e., we instead want to compute for each $i$ the maximal $j$ such that $\rev{s}[i..j]$ is a prefix of a string in~$\rev{D}_q$. 
  
  Recall that a suffix tree is a compressed trie containing all suffixes of a given string $s'$. In particular, ``compressed'' means that if the trie would contain a path of degree 1 nodes, labeled by the symbols of a substring $s'[i..j]$, then this path is replaced by an edge, which is succinctly labeled by the pair $(i,j)$. 
  We call each node of the uncompressed trie a \emph{position} in the compressed trie, in other words, a position in a compressed trie is either one of its nodes or a pair $(e,k)$, where $e$ is one of the edges, labeled by $(i,j)$, and $i < k < j$. A position $p$ is an \emph{ancestor} of a position $p'$ if the corresponding nodes in the uncompressed tries have this relation, i.e., if we can reach $p$ from $p'$ by going up the compressed trie.
  It is well-known that suffix trees have linear size and can be computed in linear time~\cite{Weiner73}. In particular, iterating over all \emph{nodes} of a suffix tree takes linear time, while iterating over all \emph{positions} can take up to quadratic time (as each of the $n$ suffixes may give rise to $\Omega(n)$ positions on average).
  
  We compute a suffix tree $\cS$ of $\rev{s}$. 
  Now we determine for each node $v$ in $\cT_q$ the position $p_v$ in $\cS$ spelling the same string as $v$, if it exists. This task is easily solved by simultaneously traversing $\cT_q$ and $\cS$, for each edge in $\cT_q$ making a corresponding move in $\cS$, if possible. During this procedure, we store for each node in $\cS$ the corresponding node in $\cT_q$, if it exists. Moreover, for each edge $e$ in $\cS$ we store (if it exists) the pair $(v,k)$, where $k$ is the lowest position $(e,k)$ corresponding to some node in $\cT_q$, and $v$ is the corresponding node in $\cT_q$. Note that this procedure runs in time $O(\|D_q\|)$, as we can charge all operations to nodes in $\cT_q$.
  
  Since $\cS$ is a suffix tree of $\rev{s}$, each leaf $u$ of $\cS$ corresponds to some suffix $\rev{s}[i..n]$ of $\rev{s}$. With the above annotations of $\cS$, iterating over all nodes in $\cS$ we can determine for each leaf $u$ the lowest ancestor position $p$ of $u$ corresponding to some node $v$ in $\cT_q$. It is easy to see that the string spelled by $v$ is the longest prefix shared by $\rev{s}[i..n]$ and any string in $\rev{D}_q$. In other words, denoting by $\ell$ the length of the string spelled by $v$ (which is the depth of $v$ in $\cT_q$), the index $j := i+\ell-1$ is maximal such that $\rev{s}[i..j]$ is a prefix of a string in $\rev{D}_q$. Undoing the reversing, $j' := n+1-j$ is minimal such that $s[j'..n+1-i]$ is a suffix of a string in $D_q$. Hence, setting $v(q,n+1-i) := v$ solves the problem. 
  
  This second part of this algorithm performs one iteration over all nodes in $\cS$, taking time $O(n)$, while we charged the first part to the nodes in $\cT_q$, taking time linear in the size of $D_q$. In total over all $q$, we thus obtain the desired running time $O(n \log m + m)$.
%
%
%
\end{proof}

For each $\cT_q$, we also compute a maximal packing of paths with many marked nodes, as is made precise in the following lemma. Recall that in the trie $\cT'$ for dictionary $D'$ the marked nodes are the ones spelling the strings in $D'$.

\begin{lemma}
  Given any trie $\cT$ and a parameter $\lambda$, a \emph{$\lambda$-packing} is a family $\cB$ of pairwise disjoint subsets of $V(\cT)$ such that (1) each $B \in \cB$ is a directed path in $\cT$, i.e., it is a path from some node $r_B$ to some descendant $v_B$ of $r_B$, (2) $r_B$ and $v_B$ are marked for any $B \in \cB$, and (3) each $B \in \cB$ contains exactly $\lambda$ marked nodes. 
  
  In time $O(|V(\cT)|)$ we can compute a maximal (i.e., non-extendable) $\lambda$-packing.
\end{lemma}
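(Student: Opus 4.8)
The plan is to build the packing with a single bottom-up pass over $\cT$. Root $\cT$ and process its nodes in post-order, so each node is handled after all of its children. For a node $v$ we maintain $g(v)$, which is either the symbol $\bot$ or an integer in $\{1,\dots,\lambda-1\}$ together with a pointer $b(v)$ to a marked descendant of $v$; the invariant is that $g(v)\neq\bot$ means the algorithm currently holds a \emph{partial path}, namely the unique directed path in $\cT$ from $b(v)$ up to $v$, which uses only vertices not yet assigned to an output path, has $b(v)$ marked, and contains exactly $g(v)$ marked vertices. When processing $v$ with children $v_1,\dots,v_k$, let $c^\ast$ be the largest of the values $g(v_i)$ over children with $g(v_i)\neq\bot$, or $c^\ast:=0$ if no child has an active partial path, and let $v_{j^\ast}$ be a child attaining $c^\ast$ (if any). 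Put $\kappa:=c^\ast+[v\text{ marked}]$ if some child is active and $\kappa:=[v\text{ marked}]$ otherwise --- i.e.\ we either prepend $v$ to the ``fullest'' child's partial path, or, if no child offers one, start a fresh partial path at $v$ when $v$ is marked. If $\kappa=\lambda$, recover the complete path (the directed path from $b(v_{j^\ast})$, resp.\ from $v$, up to $v$) by following parent pointers, add it to $\cB$, declare all of its vertices \emph{used}, and set $g(v):=\bot$. If $1\le\kappa\le\lambda-1$, set $g(v):=\kappa$ and $b(v):=b(v_{j^\ast})$ (resp.\ $v$). If $\kappa=0$, set $g(v):=\bot$. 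Since $c^\ast\le\lambda-1$ we always have $\kappa\le\lambda$, so these cases are exhaustive; the partial paths of the non-selected children are simply discarded.

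The running time is linear: each node is touched once, and the work at $v$ is $O(1)$ plus the $O(\deg v)$ scan of its children for $c^\ast$, summing to $O(|V(\cT)|)$; recovering a complete path costs time proportional to its length, and since the output paths are pairwise vertex-disjoint this is $O(|V(\cT)|)$ in total. Validity of the output follows from the invariant: a partial path always starts at a marked vertex and has between $1$ and $\lambda-1$ marked vertices, and the count can only reach $\lambda$ by being incremented at $v$, so whenever we close a path $v$ is marked; hence every output path runs between two marked vertices and contains exactly $\lambda$ of them, and a vertex enters an output path at most once since it is then declared used and never revisited.

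The substantive part, which I expect to be the main obstacle, is proving \emph{maximality}. The case $\lambda=1$ is immediate, since every unused marked vertex is instantly turned into a singleton path, so no unused marked vertex survives. For $\lambda\ge 2$, suppose $\cB$ were extendable, i.e.\ there is a directed path $B$ from a marked vertex $r_B$ up to a marked descendant $v_B$, with exactly $\lambda$ marked vertices, all of them unused at the end. The key observation is that along a directed path all of whose vertices stay unused forever, the algorithm never closes a path at one of these vertices (closing a path declares its top vertex used), so when it processes such a vertex it is always in the ``extend/start'' branch with $\kappa\le\lambda-1$. Combining this with the fact that $c^\ast$ is a \emph{maximum} over children --- hence at least the value coming from the child on $B$ --- a short induction up $B':=B\setminus\{v_B\}$ from $r_B$ shows that $g$ never underestimates the running marked-count of $B'$, so that at its top, the child $c$ of $v_B$ on $B$, we get $g(c)\ge\lambda-1$; since $g$-values are always $\le\lambda-1$, in fact $g(c)=\lambda-1$.

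But then, processing the marked vertex $v_B$, the algorithm has $c^\ast\ge g(c)=\lambda-1$, so $\kappa=c^\ast+1\ge\lambda$, forcing it to close a complete path whose top is $v_B$ and thereby declaring $v_B$ used --- contradicting that $v_B$ is unused. Hence $\cB$ is maximal. The points needing care in a full proof are the base of the induction (at $r_B$, where one uses that $r_B$ is unused to rule out $\kappa=\lambda$ there, giving $g(r_B)\ge 1$) and the fact that discarding the non-selected children's partial paths is harmless; both are handled by the same ``no closing at never-used vertices'' observation, which guarantees that the relevant counters propagate additively up $B$ regardless of which child is selected at each step.
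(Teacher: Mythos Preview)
Your argument is correct; the bottom-up scheme and the maximality induction both go through as you sketch. One terminological slip: in the maximality paragraph you call $v_B$ a ``marked descendant'' of $r_B$, but in your own orientation $v_B$ is the \emph{top} of $B$ (you induct upward from $r_B$ and arrive at the child $c$ of $v_B$ as the top of $B'$), so $v_B$ is an ancestor of $r_B$. The logic is unaffected.

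The paper takes the opposite, top-down route. It runs a DFS from the root of $\cT$, maintaining the count $\ell_v$ of marked nodes on the current root-to-$v$ path; as soon as $\ell_v=\lambda$ it strips the unmarked prefix from that path, outputs the remaining segment as a set $B\in\cB$, and restarts the DFS on all unvisited subtrees hanging off the root-to-$v$ path. Maximality is then essentially immediate: any unused directed path with marked endpoints must lie entirely inside one of the residual subtrees (its top endpoint is marked, hence not in an unmarked prefix, and unused, hence not in an extracted $B$), and within each such subtree every root-to-leaf path carries fewer than $\lambda$ marked nodes by construction. Your approach trades this short structural observation for an explicit monotonicity induction on the counters $g(\cdot)$; the benefit is that your maximality proof is fully spelled out rather than declared obvious. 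Both algorithms are linear time and produce (possibly different) maximal $\lambda$-packings.
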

\begin{proof}
  We initialize $\cB = \emptyset$.
  We perform a depth first search on $\cT$, remembering the number $\ell_v$ of marked nodes on the path from the root to the current node $v$. When $v$ is a leaf and $\ell_v < \lambda$, then $v$ is not contained in any directed path containing $\lambda$ marked nodes, so we can backtrack. When we reach a node $v$ with $\ell_v = \lambda$, then from the path from the root to $v$ we delete the (possibly empty) prefix of unmarked nodes to obtain a new set $B$ that we add to $\cB$. Then we restart the algorithm on all unvisited subtrees of the path from the root to $v$. Correctness is immediate.
\end{proof}

For any power of two $q \le \min\{n,m\}$, we set $\lambda_q := \big(\frac mq \log q\big)^{1/2}$ and compute a $\lambda_q$-packing $\cB_q$ of $\cT_q$, in total time $O(m)$. In $\cT_q$, we annotate the highest node $r_B$ of each path $B \in \cB$ as being the \emph{root} of $B$. This concludes the preprocessing.

\paragraph{Query Algorithm.}
Consider a jump-query $(q,x,S)$ as in Lemma~\ref{lem:WBsubproblem}. For any $B \in \cB$ let $d_B$ be the string spelled by the root $r_B$ of $B$ in $\cT_q$, and let $\pi_B = (u_1,\ldots,u_k)$ be the path from the root of $\cT$ to the root $r_B$ of $B$ (note that the labels of $\pi_B$ form $d_B$). We set $S_B := \{ 1 \le i \le k \mid u_i \text{ is marked} \}$, which is the set containing the length of any prefix of $d_B$ that is contained in $\rev{D}_q$, as the marked nodes in $\cT_q$ correspond to the strings in $\rev{D}_q$.

As the first part of the query algorithm, we compute the sumsets $S+S_B := \{ i + j \mid i \in S,\, j \in S_B \}$ for all $B \in \cB$. 

Now consider any $x < i \le x+2q$.
By the preprocessing (Lemma~\ref{lem:stringdatastruct}), we know the minimal~$j$ such that $s[j..i]$ is a suffix of some $d \in D_q$, and we know the node $v := v(q,i)$ in $\cT_q$ spelling $s[j..i]$. Observe that the path $\sigma$ from the root to $v$ in $\cT_q$ spells the reverse of $s[j..i]$. It follows that the strings $d \in D_q$ such that $s[i-|d|+1..i] = d$ correspond to the marked nodes on $\sigma$. To solve the jump-query (for $i$) it would thus be sufficient to check for each marked node $u$ on $\sigma$ whether for the depth $j$ of $u$ we have $i - j \in S$, as then we can $D_q$-jump from $i-j$ to $j$ and have $i-j \in S$. Note that we can efficiently enumerate the marked nodes on $\sigma$, since each node in $\cT_q$ is annotated with its lowest marked ancestor. However, there may be up to $\Omega(q)$ marked nodes on $\sigma$, so this method would again result in running time $\Theta(q)$ for each $i$, or $\Theta(q^2)$ in total. 

Hence, we change this procedure as follows. Starting in $v = v(q,i)$, we repeatedly go the lowest marked ancestor and check whether it gives rise to a partitioning of $s[1..i]$, until we reach the root $r_B$ of some $B \in \cB$. Note that by maximality of $\cB$ we can visit less than $\lambda_q$ marked ancestors before we meet any node of some $B \in \cB$, and it takes less than $\lambda_q$ more steps to lowest marked ancestors to reach the root $r_B$. Thus, this part of the query algorithm takes time $O(\lambda_q)$. Observe that the remainder of the path $\sigma$ equals $\pi_B$. We thus can make use of the sumset $S + S_B$ as follows. The sumset $S + S_B$ contains $i$ if and only if for some $1 \le j \le |\pi_B|$ we have $i-j \in S$ and we can $D_q$-jump from $i-j$ to $i$. Hence, we simply need to check whether $i \in S + S_B$ to finish the jump-query for $i$.

\paragraph{Running Time.}
As argued above, the second part of the query algorithm takes time $O(\lambda_q)$ for each $i$, which yields $O(q \cdot \lambda_q)$ in total. 

For the first part of computing the sumsets, first note that $D_q$ contains at most $m/q$ strings, since its total size is at most $m$ and each string has length at least $q$. Thus, the total number of marked nodes in $\cT_q$ is at most $m/q$. As each $B \in \cB$ contains exactly $\lambda_q$ marked nodes, we have
\begin{equation} \label{eq:cB}
  |\cB| \le m / (q \cdot \lambda_q).
\end{equation} 
For each $B \in \cB$ we compute a sumset $S + S_B$.
Note that $S$ and $S_B$ both live in universes of size $O(q)$, since $S \subseteq \{x-2q+1,\ldots,x\}$ by definition of jump-queries, and all strings in $D_q$ have length less than $2q$ and thus $|S_B| \subseteq \{1,\ldots,2q\}$. After translation, we can even assume that $S, S_B \subseteq \{1,\ldots,O(q)\}$. 
It is well-known that computing the sumset of $X,Y \subseteq \{1,\ldots,U\}$ is equivalent to computing the Boolean convolution of their indicator vectors of length $U$. The latter in turn can be reduced to multiplication of $O(U \log U)$-bit numbers, by padding every bit of an indicator vector with $O(\log U)$ zero bits and concatenating all padded bits. Since multiplication is in linear time on the Word RAM, this yields an $O(U \log U)$ algorithm for sumset computation.
Hence, performing a sumset computation $S + S_B$ can be performed in time $O(q \log q)$. Over all $B \in \cB$, we obtain a running time of $O(|\cB| \cdot q \log q) = O((m \log q) / \lambda_q)$, by the bound~(\ref{eq:cB}). 

Summing up both parts of the query algorithm yields running time $O(q \cdot \lambda_q + (m \log q) / \lambda_q)$. Note that our choice of $\lambda_q = \big(\frac mq \log q\big)^{1/2}$ minimizes this time bound and yields the desired query time $O(\sqrt{q m \log q})$. This finishes the proof of Lemma~\ref{lem:WBsubproblem}.


\section{Almost-linear Time Algorithms}
\label{sec:almostlinear}

In this section we prove Theorem~\ref{thm:newalgos}, i.e. we present
an $\tilde{O}(n)+O(m)$ time algorithm for $\pipe \plus \circ \plus$-membership and an
$n^{1+o(1)}+O(m)$ time algorithm for $\pipe \plus \circ
\pipe$-membership. We start with presenting the solution for $\pipe
\plus \circ \pipe$-membership as many of the ideas carry over to
$\pipe \plus \circ \plus$-membership.

\subsection{Almost-linear Time for $\pipe \plus \circ \pipe$}
\label{sec:firstUpper}
For a given length-$n$ string $T$ and length-$m$ regular expression $R$
of type $\pipe \plus \circ \pipe$, over an alphabet $\Sigma$, let
$R_1,\dots,R_k$ denote the regular expressions of type $\circ
\pipe$ such that $R=R_1^\plus \pipe R_2^\plus
\pipe \cdots \pipe R_k^\plus \pipe \sigma_1 \pipe \cdots \pipe
\sigma_j$. Here the $\sigma_j$'s are characters from $\Sigma$
(recall that in the definition of homogenous regular expressions we
allow leaves in any depth, so we can
have the single characters $\sigma_i$ in $R$). Since the $\sigma_i$'s are
trivial to handle, we ignore them in the remainder.

For convenience, we index the
characters of $T$ by $T[0],\dots,T[n-1]$. For $R$ to match $T$, it must be the
case that $R_i^\plus$ matches $T$ for some index $i$. Letting $\ell_i$
be the number of $\circ$'s in $R_i$, we define $S_{i,j}
\subseteq \Sigma$ for $j=0,\dots,\ell_i$ as the set of characters from $\Sigma$ such that
$$
R_i = (\pipe_{\sigma \in S_{i,0}} \sigma) \circ (\pipe_{\sigma \in
  S_{i,1}} \sigma) \circ \cdots \circ (\pipe_{\sigma \in S_{i,\ell_i}}
\sigma).
$$
Note that if a leaf appears in the $\pipe$-level, then 
the set $S_{i,j}$ is simply a singleton set.

We observe that $T$ matches $R_i^\plus$ iff $(\ell_i+1)$ divides $|T|=n$
and $T[j] \in S_{i, j \mmod (\ell_i+1)}$ for all
$j=0,\dots,n-1$. In other words, if $(\ell_i+1)$ divides $n$ and we define sets $T^{\ell_i}_j \subseteq \Sigma$
for $j=0,\dots,\ell_i$, such that 
$$
T^{\ell_i}_j = \bigcup_{h=0}^{n/(\ell_i+1)-1}
\{T[h(\ell_i+1)+j]\},$$
then we see that $T$ matches $R_i^\plus$ iff $T_j
\subseteq S_{i, j}$ for $j=0,\dots,\ell_i$.

Note that the sets $T^{\ell_i}_j$ depend only on $T$ and $\ell_i$,
i.e. the number of $\circ$'s in $R_i$.  We therefore
start by partitioning the expressions $R_i$ into groups having the
same number of $\circ$'s $\ell = \ell_i$. This takes time $O(m)$. We can immediately discard all groups
where $(\ell+1)$ does not divide $n$. The crucial property we will use
is that an integer $n$ can have no more than $2^{O(\lg n/\lg \lg n)}$
distinct divisors~\cite{WI1906}, so we have to consider at most $2^{O(\lg n/\lg \lg n)}$ groups. 

Now let $R_{i_1},\dots,R_{i_k}$ be the regular expressions in a group,
i.e., $\ell = \ell_{i_1}=\ell_{i_2}=\cdots =\ell_{i_k}$. By a linear scan
through $T$, we compute in $O(n)$ time the sets $T_j^{\ell}$ for
$j=0,\dots,\ell$. We store the sets in a hash table for expected
constant time lookups, and we store the sizes $|T^\ell_j|$. We then check whether
there is an $R_{i_h}$ such that $T^\ell_j \subseteq S_{i_h, j}$ for
all $j$. This is done by examining each $R_{i_h}$ in turn. For each
such expression, we check whether $T^\ell_j \subseteq S_{i_h, j}$ for
all $j$. For one $S_{i_h,j}$, this is done by
taking each character of $S_{i_h,j}$ and testing for membership in
$T^\ell_j$. From this, we can compute $|T^\ell_j \cap S_{i_h,j}|$. We conclude that $T^\ell_j \subseteq S_{i_h, j}$ iff $|T^\ell_j \cap S_{i_h,j}|=|T^\ell_j|$. 

All the membership testings, summed over the entire execution of the
algorithm, take expected $O(m)$ time as we make at most one query per
symbol of the input regular expression. Computing the sets $T^\ell_j$
for each divisor $(\ell+1)$ of $n$ takes $n2^{O(\lg n/\lg \lg
  n)}$ time. Thus, we conclude that $\pipe \plus \circ \pipe$-membership
testing can be solved in expected time $n^{1+o(1)}+O(m)$.

\paragraph{Sub-types.}
We argue that the above algorithm also solves any type $t$ where $t$
is a subsequence of $\pipe \plus \circ \pipe$. Type $\plus
\circ \pipe$ simply corresponds to the case of just one $R_i$ and is
thus handled by our algorithm above. Moreover, since there is only one
$R_i$ and thus only one divisor $\ell_i+1$, the running time of our algorithm
improves to $O(n+m)$. Type $\pipe \circ \pipe$ can be
solved by first discarding all $R_i$ with $\ell_i \neq n-1$ and then
running the above algorithm. Again this leaves only one value of
$\ell_i$ and thus the above algorithm runs in time $O(n+m)$. The type $\pipe \plus \pipe$ corresponds to
the case where each $\ell_i=0$ and is thus also handled by the above
algorithm. Again the running time becomes $O(n+m)$ as there is only
one value of $\ell_i$. Type $\pipe \plus \circ$ is the case where all
sets $S_{i,j}$ are singleton sets and is thus also handled by the
above algorithm. However, this type is also a subsequence of $\pipe \plus
\circ \plus$ and using the algorithm developed in the next section, we
get a faster algorithm for $\pipe \plus \circ$ than using the one above.

Type $\pipe \pipe$, $\pipe \circ$, $\pipe \plus$ are trivial. Type
$\plus \pipe$ corresponds to the case of just one $R_i$ having
$\ell_i=0$ and is thus solved in $O(n+m)$ time using our
algorithm. Type $\plus \circ$ corresponds to just one $R_i$ and only
singleton sets $S_{i,j}$ and thus is also solved in $O(n+m)$ time by
the above algorithm. The type $\circ \pipe$ is the special case of
$\pipe \circ \pipe$ in which there is only one set $R_i$ and is thus
also solved in $O(n+m)$ time. Types with just one operator are trivial.

\subsection{Near-linear Time for $\pipe \plus \circ \plus$}
For a given length-$n$ text $T$ and length-$m$ regular expression $R$
of type $\pipe \plus \circ \plus$, over an alphabet $\Sigma$, let
$R_1,\dots,R_k$ denote the regular expressions of type $\circ
\plus$ such that $R=R_1^\plus \pipe R_2^\plus
\pipe \cdots \pipe R_k^\plus \pipe \sigma_1 \pipe \cdots \pipe
\sigma_j$. As in Section~\ref{sec:firstUpper}, the $\sigma_i$'s are
single characters. These can easily be tested against $T$ and thus
from now on we ignore them.

Our new algorithm uses some of the ideas from Backurs and Indyk~\cite{backursindyk} for
$\plus \circ \plus$-membership. From the text $T$, define its
run-length encoding $r(T)$ as follows: Set $T' = T$ and let $r(T)$ be
an initially empty list of tuples. While $|T'|>0$, let $\sigma$ be the first
symbol of $T'$ and let $\ell > 0$ be the largest integer such that
$\sigma^\ell$ is a prefix of $T'$ (i.e. $T'$ starts with $\ell$
$\sigma$'s). We remove the prefix $\sigma^\ell$ from $T'$ and append the
tuple $(\sigma, \ell)$ to $r(T)$.

Following Backurs and Indyk~\cite{backursindyk}, we also define the run-length
encoding of a regular expression $R_i$ of type $\circ \plus$ as
follows: Let $R_i'=R_i$ and let $r(R_i)$ be an initially empty
sequence of tuples. While $|R_i'| > 0$, let $\ell>0$ be the largest
integer such that there exists a length-$\ell$ prefix of $R_i'$ of the form $\sigma
\sigma^\plus \sigma^\plus \sigma \sigma \dots$ (an arbitrary
concatenation of $\sigma$ and $\sigma^\plus$) for a symbol $\sigma \in
\Sigma$. Define $\ell' \geq 0$ as the number of $\sigma$'s in the
prefix (and $\ell-\ell'$ is the number of $\sigma^\plus$'s). If
$\ell'=\ell$, we append the tuple $(\sigma, =\ell)$ to
$r(R_i)$. Otherwise, we append the tuple $(\sigma, \geq \ell)$ to
$r(R_i)$. We then delete the prefix and repeat until $R_i'$ has length
$0$.

Backurs and Indyk observed the following for matching $T$ and
$R_i^\plus$ for an $R_i$
of type $\circ \plus$: If the first and last character of $R_i$ are
distinct, then $T$ matches $R_i^\plus$
iff the following two things hold:
\begin{enumerate}
\item $|r(R_i)|$ divides $|r(T)|$.
\item For every $j=0,\dots,|r(T)|-1$, if
$(\sigma, \ell)$ denotes the $j$'th tuple of $r(T)$, we must have
that the $(j \mmod |r(R_i)|)$'th tuple of $r(R_i)$ (counting from $0$) is either of the
form $(\sigma, =\ell)$ or $(\sigma, \geq \ell')$ for some $\ell' \leq
\ell$. 
\end{enumerate}

The case where the first and last character of $R_i$ are the same can be efficiently reduced to the case of distinct
characters. We argue how at the end of this section and now proceed
under the assumption that each $R_i$ has distinct first and last
character.

Compared to the $\plus \circ \plus$-case solved by Backurs and Indyk,
we need to additionally handle an outer $\pipe$. Our solution for $\pipe
\plus \circ \pipe$ hints at how: We start by partitioning the $R_i$'s
into groups such that all $R_i$'s with the same value of $|r(R_i)|$
are placed in the same group. This can easily be done in $O(m)$
time. As argued in Section~\ref{sec:firstUpper}, there are at most
$2^{O(\lg n/\lg \lg n)}$ groups we need to care about, as property~1 above implies that $|r(R_i)|$ must divide $|r(T)|$ for $T$ to
possibly match $R_i^\plus$.

For a length $s$ dividing $|r(T)|$, we check $T$ against all the $R_i^\plus$'s with
$|r(R_i)|=s$ as follows: First let $(\sigma_j , \ell_j)$ denote the
$j$'th tuple in $r(T)$ for $j=0,\dots,|r(T)|-1$. If there are two
tuples $(\sigma_j, \ell_j)$ and $(\sigma_h, \ell_h)$ with $\sigma_j
\neq \sigma_h$ and $j \mmod s = h \mmod s$, we can conclude that $T$
cannot possible match $R_i^\plus$ for any $R_i$ in the group (due to
property 2 above). If this
is not the case, assume for now that we have somehow computed the following
values $\alpha^s_j$ and $\beta^s_j$ for $j=0,\dots,s-1$:
$$
\alpha^s_j := \min_{h=0}^{|r(T)|/s - 1} \ell_{hs+j}.
$$
$$
\beta^s_j := \max_{h=0}^{|r(T)|/s - 1} \ell_{hs+j}.
$$
Also, let $\sigma_j \in \Sigma$ denote the character such that
$\sigma_h = \sigma_j$ for all $h$ satisfying $h \mmod s = j$.

We then test $T$ against each $R_i^\plus$ as follows: For
$j=0,\dots,s-1$, consider the $j$'th tuple in
$r(R_i)$. We have two cases:
\begin{enumerate}[a]
\item If the $j$'th tuple is of the form $(\mu, =\ell)$, we check
whether $\mu=\sigma_j$. If not, we conclude that $T$ and $R_i^\plus$
cannot match. Otherwise, we check whether $\alpha^s_j=\beta^s_j =
\ell$. If not, we also conclude that $T$ and $R_i^\plus$ cannot
match.
\item If the $j$'th tuple is of the form $(\mu, \geq \ell)$,
we check whether $\mu = \sigma_j$. If not, we conclude that $T$ and
$R_i^\plus$ cannot match. Otherwise, we check whether $\ell \leq
\alpha^s_j$. If not, we conclude that $T$ and $R_i^\plus$ cannot match.
\end{enumerate}
It follows from property 2 above that $T$ and $R_i^\plus$ do not match if and only if
the above procedure concludes that they do not match.

Assuming the availability of $\alpha^s_j$ and $\beta^s_j$, testing $T$
against an $R_i^\plus$ thus takes time $O(|r(R_i)|)$. Summing over all
$R_i$, this is $O(m)$ in total. Thus we only need an efficient
algorithm for computing the $\alpha^s_j$'s and $\beta^s_j$'s. We could
compute them in $O(n)$ time per divisor $s$, resulting in an algorithm
with running time $n2^{O(\lg n/\lg \lg n)}+O(m)$ as in
Section~\ref{sec:firstUpper}. However, for this problem we can do
better. To compute the $\alpha^s_j$'s and $\beta^s_j$'s for all
divisors $s$ of $|r(T)|$, we start by forming a tree over all the divisors
of $|r(T)|$. This tree is formed by first computing all $2^{O(\lg n/\lg \lg
  n)}$ divisors of $|r(T)|$ in $O(n)$ time. For each divisor $s$, we
compute a prime factorization of $s$ in time $O(s)$. We let the
divisor $|r(T)|$ be the root of the tree, and each divisor $s < |r(T)|$ is
assigned as a child of the smallest divisor $t$ of $|r(T)|$ such that $s$
divides $t$. Note that the smallest such divisor $t$ can be found in
$O(\lg n/\lg \lg n)$ time from the prime factorization of $|r(T)|$ and $s$ (simply
multiply $s$ by the smallest prime that occurs more times in the
factorization of $|r(T)|$ than in $s$ and note that $|r(T)|$ has at most $O(\lg
n/ \lg \lg n)$ distinct prime factors). We compute the values $\alpha^s_j$ and
$\beta^s_j$ for all $s$ by a top-down sweep of the constructed
tree. We start at the root divisor $|r(T)|$ where we compute the
$\alpha_j^{|r(T)|}$'s and $\beta_j^{|r(T)|}$'s trivially in $O(n)$ time. When
processing a divisor $s$, let $t$ be its parent and let $p$ be the
prime such that $p = t/s$. We observe that
$$
\alpha^s_j = \min_{h=0}^{|r(T)|/s - 1} \ell_{hs+j} = \min_{q=0}^{p-1}
\min_{h=0}^{|r(T)|/t-1} \ell_{ht+qs+j} =  \min_{q=0}^{p-1} \alpha^t_{qs+j}
$$
and
$$
\beta^t_j = \max_{q=0}^{p-1} \beta^t_{qs+j}.
$$
Thus for a fixed $s$, we can compute all these values in time $O(sp) =
O(t)$ given access to the $\alpha^t_j$'s where $t$ is the parent of
$s$. Now each node of the tree has at most $O(\lg n/ \lg \lg n)$
children, as this is the maximum number of distinct prime factors of
an integer no greater than $n$. We thus conclude that the total time
for computing all the $\alpha^s_j$'s over all the divisors $s$ is at
most $O(\lg n/\lg \lg n) \cdot \sum_{s : s \textrm{ divides } |r(T)|}
s$. The sum over all divisors of an integer $n$ is known to be $O(n \lg \lg n)$~\cite{Gronwall},
so we conclude that the total time for computing the $\alpha^s_j$'s
and $\beta^s_j$'s is $O(n \lg n) = \tilde{O}(n)$.

Note that we also need to compute for each divisor $s$ of $|r(T)|$ if
there are two distinct characters $\sigma_j \neq \sigma_h$ in tuples
$(\sigma_j, \ell_j)$ and $(\sigma_h, \ell_h)$ with $j \mmod s = h \mmod
s$. If this was the case, we knew that $T$ couldn't possibly match any
of the $R_i^\plus$'s with $|r(R_i)|=s$. Observing that if $s$ divides
$t$ and we know that two such characters exist for the divisor $t$,
then this is also the case for $s$. Thus as for the
$\alpha_j^s$'s and $\beta^s_j$'s, we can compute this for all divisors
using a top-down sweep
of the tree in time $O(n \lg n)$. We conclude that our
algorithm runs in time $O(n \lg n + m) = \tilde{O}(n)+O(m)$.

\paragraph{Handling Identical First and Last Characters.}
Given an input where some of the $R_i$'s start and end with the
same character, we extract all these $R_i$'s. 
For each $R_i$, we first
check if $R_i$ and $T$ both end and start with the same character. If
not, we conclude that $T$ cannot possible match $R_i^\plus$ and thus
we can discard $R_i$. Next, if all characters of $R_i$ are the same ($|r(R_i)|=1$), testing
it against $T$ is trivial (we can precompute whether $T$ has only one
character, and thus we can test for matching in $O(|R_i|)$ time for each
such $R_i$). 

The remaining $R_i$'s start and end with the same character as
$T$. Let $r(R_i)$ be the run-length encoding of $R_i$ and let $r(T)$
be the run-length encoding of $T$. We check whether the last tuple of 
$T$ can possible match the last tuple of $r(R_i)$: If $(\sigma,
\ell)$ is the last tuple of $T$, we check whether the last tuple of
$r(R_i)$ is either $(\sigma, =\ell)$ or $(\sigma, \geq \ell')$ for an
$\ell' \leq \ell$. If not, we can discard $R_i$. We make the same test
with the first tuple of $r(R_i)$ and $r(T)$. 

We now ``rotate'' the text and the remaining patterns $R_i$ as
follows: Let $\sigma$ be the first and last character of $T$ and all
the remaining $R_i$'s. We take all occurrences of $\sigma$ at the end
of $T$ and move them to the front (think of it as a cyclic rotation of
$T$). We do the same thing with the $R_i$'s, i.e. take all occurrences
of $\sigma$ and $\sigma^+$ at the end of $R_i$ and move to the
front. The observation is that the ``rotated'' $T$ matches a ``rotated''
remaining pattern $R_i$ iff they matched before the
rotation. Moreover, $T$ and all the $R_i$'s now start with $\sigma$
and end with something else. Thus we have reduced to the
case of distinct first and last characters. The reduction took time
$O(n+m)$.

\paragraph{Sub-types.}
Type $\plus \circ \plus$ corresponds to the case of only one $R_i$. As
this means there is only one size $s=|r(R_i)|$, we can compute the
$\alpha^s_j$ and $\beta^s_j$ values directly in $O(n)$ time, giving
total time $O(n+m)$. Type $\pipe \circ \plus$ is solved by first
discarding all $R_i$ where $|r(R_i)| \neq |r(T)|$ (assuming we have
already reduced to the case of distinct first and last symbols). This
similarly leaves just one divisor of $|r(T)|$ to compute $\alpha_j^s$
and $\beta_j^s$ values for and thus the
running time improves to $O(n+m)$. Type $\pipe \plus \plus$ is
trivial. Type $\pipe \plus \circ$ corresponds to the case where there
are no
tuples $(\sigma, \geq \ell)$ in any $r(R_i)$ (i.e. all tuples are of
the form $(\sigma, =\ell)$). We thus get $\tilde{O}(n)+O(m)$ for this
case by using the above algorithm. Using the more directly taylored
algorithm of Backurs and Indyk~\cite{backursindyk}, this can be reduced to
$O(n+m)$. 

The only length two types not covered by the algorithm in
Section~\ref{sec:firstUpper} are $\plus \plus$ and $\circ \plus$. Type
$\plus \plus$ is trivial. Type $\circ \plus$ is the type $\pipe \circ
\plus$ with just one set $R_i$. It is thus solved in time $O(n+m)$ by
the above algorithm.


\section{SETH-Based Lower Bounds}
\label{sec:sethlower}

In this section we prove SETH-based lower bounds for $t$-membership
for types $\plus \pipe \circ \plus$, $\plus \pipe \circ \pipe$, and
$\pipe \plus \pipe \circ$.

All three proofs are reductions from Orthogonal Vectors and follow the
same overall approach. Let $A, B$ be two sets of input vectors for the
Orthogonal Vectors problem of size $n$ and $m$ respectively.  Each
reduction makes the set of vectors $A$ into a string and the set of
vectors $B$ into a regular expression of the considered type, such
that the string constructed from $A$ matches the regular expression
constructed from $B$ iff the Orthogonal Vectors instance has an
orthogonal pair of vectors.

The idea is to create a regular expression for each vector $b\in B$
that is matched by strings that encode vectors that are orthogonal to
$b$. The string for the vectors in $A$ are concatenated together in
such a way that the regular expression can test for each vector in
$B$, if any vector in $A$ is orthogonal to it.  The \textit{OR} of
these checks is implemented in the the string and regular expression
by the offset construction described in the $k$-Clique reduction in
Section~\ref{sec:wb_lower}. The offset construction is implemented
with the regular expression $\plus \pipe$ that is a part of all three
considered regular expression types. As mentioned earlier the
$\plus \pipe$ type can implement a dictionary of regular expression to
match against. The $\pipe$ allows to pick a regular expression from
the dictionary (subtree) for the input string to match, and the
$\plus$ allows doing that the repeatedly, matching substrings from the
input string to dictionary elements from left to right.
The dictionary we construct contains for each $b\in B$ a regular
expression that is matched only by orthogonal vectors and then 
regular expressions for the offset construction.

We directly apply the notation from the reduction in
Section~\ref{sec:wb_lower} as needed.
\begin{theorem} \label{thm:hardnessone}
  $\plus \pipe \circ \pipe$-membership takes time $(nm)^{1-o(1)}$ unless SETH fails.
\end{theorem}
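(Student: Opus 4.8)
The plan is to reduce Orthogonal Vectors to $\plus\pipe\circ\pipe$-membership, reusing essentially verbatim the offset construction of Section~\ref{sec:wb_lower}. Let $A=\{a_1,\dots,a_n\}$ and $B=\{b_1,\dots,b_m\}$ be the two sides of an OV instance over $\{0,1\}^d$. I encode each $a\in A$ by the length-$d$ string $\mathrm{enc}(a):=a[1]a[2]\cdots a[d]$ over the alphabet $\{0,1\}$, and for each $b\in B$ I use the character-class word $\mathrm{cc}(b):=C_1^b\circ C_2^b\circ\cdots\circ C_d^b$, where $C_i^b$ is the single-character expression $0$ if $b[i]=1$ and the expression $(0\pipe 1)$ if $b[i]=0$. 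Then a length-$d$ string $w$ matches $\mathrm{cc}(b)$ exactly when $w=\mathrm{enc}(a)$ for some $a$ with $\sum_{i=1}^d a[i]b[i]=0$, i.e.\ when $w$ encodes a vector orthogonal to $b$. Each $\mathrm{cc}(b)$ has type $\circ\pipe$, so a $\pipe$ over these (together with a few auxiliary $\circ\pipe$-words described next) has type $\pipe\circ\pipe$, and placing a $\plus$ on top yields the required type $\plus\pipe\circ\pipe$; this outermost $\plus\pipe$ is exactly the ``dictionary'' used by the offset construction.

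\emph{The instance.} As in Section~\ref{sec:wb_lower}, introduce fresh symbols $\alpha,\beta,\mu$ and the wrappings $[\cdot]^{(0)}_{\alpha,\beta}$, $[\cdot]^{(1)}_{\alpha,\beta}$. The text is $T:=\bigl(\circ_{a\in A}[\mu\,\mathrm{enc}(a)\,\mu]^{(0)}_{\alpha,\beta}\bigr)\mu\mu$, a concatenation of one block per vector of $A$ followed by a fixed short suffix, so $|T|=O(nd)$ and $T$ starts with $\alpha$ (``offset $0$''). The regular expression is $R:=Q^\plus$, where $Q$ is the $\pipe$ of: (i) the offset-$0$ skip words $\alpha\sigma\beta$ and the offset-$2$ skip words $\beta\alpha\sigma$, over all symbols $\sigma$; (ii) transition words, namely an offset-$0\!\to\!1$ word $\alpha\mu\beta\alpha$, an offset-$1\!\to\!2$ word such as $\beta\alpha\mu$, and a final word matching $\mu\mu$ that is only matchable from offset $2$; and (iii) for each $b\in B$ a ``check word'' obtained from $[\mathrm{cc}(b)]^{(1)}_{\alpha,\beta}$, padded by the surrounding literal $\alpha,\beta,\mu$, which, starting from the offset-$1$ boundary that the $\alpha\mu\beta\alpha$ transition leaves inside a block, consumes exactly that block and lands on an offset-$2$ boundary, succeeding iff the block's encoded vector is orthogonal to $b$. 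All words have length $O(d)$, so $|R|=O((m+d)d)$, i.e.\ $|R|=O(md)$ assuming, as usual, $d\le m$.

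\emph{Correctness.} For the ``if'' direction, from an orthogonal pair $a\perp b$ one partitions $T$ by using offset-$0$ skip words on every block before $a$'s block, then $\alpha\mu\beta\alpha$, then the check word for $b$ on $a$'s block (which succeeds as $a\perp b$), then the offset-$1\!\to\!2$ word, then offset-$2$ skip words on all later blocks, and finally the word $\mu\mu$; every boundary matches by construction. For the ``only if'' direction one traces any valid partition backwards exactly as in Section~\ref{sec:wb_lower}: it must end with the $\mu\mu$ word, forcing offset $2$; the only way to have entered the offset-$2$ skipping regime is via the offset-$1\!\to\!2$ transition word at the end of some block, and the only way to reach that boundary is to have matched that entire block with a check word for some $b$, which forces that block's encoded vector to be orthogonal to $b$. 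Hence $T$ matches $R$ iff the OV instance is a yes-instance. An algorithm for $\plus\pipe\circ\pipe$-membership in time $O((|T||R|)^{1-\eps})$ (or even $O(|T|^{2-\eps}+|R|^{2-\eps})$) would then solve OV in time $O((nm)^{1-\eps}\poly(d))$, contradicting Conjecture~\ref{conj:ov} and hence SETH.

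\emph{Main obstacle.} The real work is the offset bookkeeping inherited from Section~\ref{sec:wb_lower}: one must verify that the skip words, the $\mu$-markers, and the $[\cdot]^{(0)}/[\cdot]^{(1)}$ alignments are arranged so that (a) no skip word can ever substitute for a check word in a valid partition, so a partition genuinely certifies an orthogonal pair, and (b) whenever an orthogonal pair exists the intended partition is consistent at every boundary. The only genuinely new point beyond Section~\ref{sec:wb_lower} is that the character-class check word $[\mathrm{cc}(b)]^{(1)}_{\alpha,\beta}$ matches a wrapped block $[\mu\,\mathrm{enc}(a)\,\mu]^{(1)}_{\alpha,\beta}$ precisely for the $a$ orthogonal to $b$, which is immediate from the definition of $\mathrm{cc}(b)$; so once the Word Break offset machinery of Section~\ref{sec:wb_lower} is in place, the remaining verification is routine.
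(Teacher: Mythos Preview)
Your approach is essentially identical to the paper's: reduce from OV, encode each $b$ by the character-class word $C(b)$ with $c(1)=0$ and $c(0)=(0\pipe 1)$ so that $\mathrm{enc}(a)$ matches $C(b)$ iff $a\perp b$, and then wrap everything in the three-offset machinery of Section~\ref{sec:wb_lower} so that the outer $\plus\pipe$ acts as a dictionary.

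One small but genuine glitch: your offset-$1\!\to\!2$ bookkeeping does not type-check as written. The paper's block is $[\mu\, a\, \$\, \mu]^{(0)}_{\alpha,\beta}$, with the extra $\$$ precisely so that the dedicated transition word $\$\beta\alpha\mu$ is the \emph{only} non-skip word that can end at a symbol position, and can only do so at the end of a vector. In your version you drop $\$$ and propose ``an offset-$1\!\to\!2$ word such as $\beta\alpha\mu$'', but $\beta\alpha\mu$ begins with $\beta$ and is therefore an offset-$2$ skip word, not an offset-$1$ word; moreover your check word cannot both ``consume exactly that block'' (which would leave you at offset $0$) and ``land on an offset-$2$ boundary''. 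The fix is trivial---either reinstate the $\$$ exactly as the paper does, or append a literal $\mu$ to the check word so that it ends at the block's trailing $\mu$ (one then verifies that $c(b[d])\in\{0,(0\pipe 1)\}$ can never match a stray $\mu$, so the backward trace still forces an orthogonality check). With either fix, the argument goes through verbatim as in Section~\ref{sec:wb_lower}.
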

\begin{proof}
  The string $T(A)$ is constructed as
  $$
  T(A) = \circ_{a\in A} [\mu a \$ \mu]^{(0)}_{\alpha,\beta} \circ \mu \mu 
  $$
  using the $\alpha,\beta$ encoding as defined
  Section~\ref{sec:wb_lower}. This is the concatenation of all vectors
  in $A$ with $\alpha,\beta$ surrounding each bit symbol, $\mu, \$
  \mu$ symbols around each encoded vector, and finally two $\mu$
  symbols at the end of the string.
  
  The regular expression must be of the form $(p_1\pipe p_2
  \pipe \dots \pipe p_t)^\plus$ where each $p_i$ is of type $\circ
  \pipe$ and these we construct the following way.
%
  Let $c(1) = 0, c(0) = 0 \pipe 1$ encode simple regular expressions,
  and note that $c(1)$ is matched only by 0 while $c(0)$ is matched by
  both 0 and 1. For each $b\in B$ define $ C(b) = \circ_{i=1}^d
  c(b[i])$, the concatenation of the $c-$encoding of the bits in the
  vector. By construction, a vector $a \in A$ is orthogonal to the
  vector $b \in B$ iff $a$ read as a string matches the regular
  expression $C(b)$. Note that $C(b)$ is of the required form $\circ
  \pipe$.
  
  The full list $\mathcal{L}=p_1,\dots,p_t$ of patterns is defined as
  follows.  First, the regular expressions for checking orthogonality
  (notice the different offsets compared to the encoding of $A$ as a
  string)
$$
[C(b)]_{\alpha,\beta}^{(1)}, \quad \forall b\in B
$$
Next, add regular expressions that allows skipping a prefix of the $A$
vectors in the string $T(A)$ before matching an orthogonal vectors
pair, and the patterns that allows skipping vectors in the string
$T(A)$ after an orthogonal vectors pair has occured:
$$
\alpha \sigma \beta \textrm{ and } \beta \alpha \sigma \textrm{ for }\sigma \in \{0,1,\$,\mu\}.
$$
Then, we need the regular expressions for controlling the offset which are 
$$
\alpha \mu \beta \alpha, \$ \beta \alpha \mu, \beta \mu \mu,
$$
the regular expression that must be used before the start of an
orthogonal vectors match, the regular expression for changing offset
after a succesfull orthogonal vectors match, and finally the regular
expression that allows finishing the string match if an orthogonal
vector match regular expression has been succesfully used. Note
that all the regular expressions for implementing the offset are
concatenation of symbols (leafs) and thus off the right form.
The final regular expression becomes $R(B) = (\pipe_{\ell \in
  \mathcal{L}} \ell)^\plus$.

This finishes the reduction. There is a pair of orthogonal vectors
$a\in A,b \in B$ iff the string $T(A)$ matches the regular expression $R(B)$.
The proof of that is essentially the $K$-clique proof for Word-Break
from Section \ref{sec:wb_lower} and is omitted.
The string $T(A)$ has size $O(nd)$ and and the regular expression $R(B)$
has size $O(md)$ and the total construction time is linear
$O(d(n+m))$.
Thus, any algorithm for $\plus \pipe \circ \pipe$-membership that runs
in time $f(n,m)$ gives an algorithm for Orthogonal Vectors that runs
in time $O(d(n+m)+f(dn,dm))$.
In particular any algorithm for $\plus \pipe \circ \pipe$ membership
that runs in $O((nm)^{1-\eps})$ time for any $\eps>0$ gives an
algorithm for Orthogonal Vectors that runs in time $O((nd)^{2-2\eps})
= O(n^{2-2\eps} \poly(d))$ contradicting Conjecture~\ref{conj:ov}.




\end{proof}

\begin{theorem} \label{thm:hardnesstwo}
  $\plus \pipe \circ \plus$-membership takes time $(nm)^{1-o(1)}$ unless SETH fails.
\end{theorem}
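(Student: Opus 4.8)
The plan is to run essentially the same reduction from Orthogonal Vectors as in the proof of Theorem~\ref{thm:hardnessone}, changing only the sub-expression that tests orthogonality so that it has type $\circ \plus$ rather than $\circ \pipe$. Reusing the notation of Section~\ref{sec:wb_lower}, the outer $\plus \pipe$ again implements a ``dictionary'' of sub-expressions together with the offset construction, the text $T(A)$ is a concatenation of encodings of the vectors $a \in A$ (wrapped in the $\alpha,\beta,\mu,\$$ machinery), and the regular expression is $R(B) = (\pipe_{\ell \in \mathcal{L}} \ell)^\plus$, where $\mathcal{L}$ contains one orthogonality gadget per $b \in B$ together with a constant number of skip- and offset-control elements. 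Given such a construction, the correctness proof and the size/time bounds (text of length $O(nd)$, regular expression of size $O(md)$, linear construction time, hence no $O((nm)^{1-\eps})$ algorithm under Conjecture~\ref{conj:ov}) carry over unchanged from Theorem~\ref{thm:hardnessone}.

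The only genuinely new ingredient is the orthogonality gadget. With just the inner $\plus$ at our disposal we cannot write ``$0 \pipe 1$'' for a don't-care coordinate, so I would instead encode bits by run lengths: a coordinate $a[i] = 0$ is written as the single symbol $0$ and $a[i] = 1$ as the run $00$. A don't-care coordinate of $b$ is then matched by the atom $0^\plus$ (which accepts runs of length $1$ and $2$), whereas a coordinate with $b[i] = 1$, which must force $a[i] = 0$, is matched by the single leaf $0$ (which accepts only a run of length $1$). Concatenating these atoms gives $C(b) = c_1(b[1]) \circ \cdots \circ c_d(b[d])$ with $c_i(0) = 0^\plus$ and $c_i(1) = 0$; this expression has type $\circ\plus$ and, read as a regular expression, it matches the encoding of $a$ exactly when no coordinate satisfies $a[i] = b[i] = 1$, i.e.\ exactly when $a$ and $b$ are orthogonal.

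I expect the main obstacle to be that this run-length trick does not survive the per-character $\alpha,\beta$-interleaving used by the offset construction: the transform $[\cdot]^{(0)}_{\alpha,\beta}$ turns the run $00$ into $\alpha 0 \beta \alpha 0 \beta$, which no longer contains a run of length two. The fix I would use is to interleave $\alpha,\beta$ at the granularity of single-symbol \emph{blocks} instead of individual characters: encode each vector as the block sequence $\mu, e_1(a[1]), \dots, e_d(a[d]), \$, \mu$ with $e_i(0) = 0$ and $e_i(1) = 00$, and let $[\cdot]^{(0)}_{\alpha,\beta}$ surround each block with $\alpha \cdots \beta$, so that consecutive bit blocks are separated by a $\beta\alpha$ pair and no two runs merge (no extra separator symbol is needed). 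Correspondingly, the skip elements that previously consumed one symbol at a time, $\alpha\sigma\beta$ and $\beta\alpha\sigma$, are upgraded for the bit symbol to $\alpha 0^\plus \beta$ and $\beta\alpha 0^\plus$ (still of type $\circ\plus$), so an entire block is consumed in one step while the offset is preserved; the elements for $\mu$ and $\$$ and the offset-control elements $\alpha\mu\beta\alpha$, $\$\beta\alpha\mu$, $\beta\mu\mu$ remain pure concatenations of leaves, which are permitted under type $\plus\pipe\circ\plus$. With these adjustments the whole argument — skip at offset $0$ to the relevant gadget, switch to offset $1$, match $[C(b)]^{(1)}_{\alpha,\beta}$ iff $a \perp b$, switch to offset $2$, and skip to the trailing $\mu\mu$ — is structurally identical to that of Theorem~\ref{thm:hardnessone}, and only routine checking should remain.
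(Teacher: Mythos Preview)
Your proposal is correct and follows essentially the same approach as the paper: reuse the $\plus\pipe$ offset machinery from Theorem~\ref{thm:hardnessone} and replace only the per-coordinate orthogonality atom by a run-length gadget of type $\circ\plus$. The paper's variant attaches an explicit terminator to each coordinate ($t(0)=001$, $t(1)=01$ on the string side and $c(0)=0^\plus 1$, $c(1)=01$ on the expression side), whereas you drop the terminator and let the $\beta\alpha$ block boundaries keep adjacent runs from merging; both encodings work for the same reason. You are in fact more careful than the paper on one point: you correctly observe that a $0^\plus$ atom cannot survive per-character $\alpha,\beta$-interleaving and that the interleaving must therefore be done at the granularity of coordinate blocks (with the skip elements upgraded to $\alpha 0^\plus \beta$ and $\beta\alpha 0^\plus$), a detail the paper hides behind ``the remaining part of the reduction is identical.''
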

\begin{proof}
  This reduction is very similar to the reduction above, the only
  difference is how the ortogonality check is encoded in the regular
  expressions now using $\circ \plus$ instead of $\circ \pipe$. This
  is achieved as follows.

  For each vector $a \in A$ construct a string where each zero bit is
  replaced with $t(0) = 001$, and each one bit is replaced with $t(1) =
  01$.
  Let $c(0) = 0^\plus 1, c(1) = 01$ and for each $b\in B$,
  define $C(b) = \circ_{i=1}^d c(b[i])$.
  An encoded zero in $b$ is matched by both $t(0)$ and $t(1)$, an
  encoded one is matched only by $t(0)$, and each may only be matched
  by the encoding of one bit from the string.  Thus, a vector $a$ is
  orthogonal to a vector $b$ iff the string encoding of vector $a$
  matches the regular expression $C(b)$.
  
  The remaining part of the reduction is identical to above and
  Theorem~\ref{thm:hardnesstwo} follows.
\end{proof}

\begin{theorem} \label{thm:hardnessthree}
  $\pipe \plus \pipe \circ$-membership takes time $(nm)^{1-o(1)}$ unless SETH fails.
\end{theorem}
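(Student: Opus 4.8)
The plan is to reuse the Orthogonal Vectors framework of the previous two proofs, but to distribute the three regular expression symbols $\pipe,\plus,\pipe,\circ$ in the only way the type permits: the top $\pipe$ chooses among several $\plus$-subexpressions, each $\plus$-subexpression repeats a $\pipe$-choice over $\circ$-concatenations of leaves. Thus $R(B)$ will have the form $R_1^\plus \pipe R_2^\plus \pipe \cdots$, where the ``interesting'' branch $R_1^\plus$ plays exactly the role of the dictionary-and-offset $\plus\pipe$ machinery from Section~\ref{sec:wb_lower}: $R_1 = \pipe_{\ell \in \mathcal L} \ell$ with each $\ell$ of type $\circ$ (a concatenation of leaves). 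The extra outer $\pipe$ over the $R_i^\plus$'s is harmless — we simply make all but one branch trivial (e.g. single leaves $\sigma_i$ that can be precomputed and ignored, exactly as in Section~\ref{sec:almostlinear}), so effectively $T(A)$ must match $R_1^\plus$.

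The key difference from Theorem~\ref{thm:hardnessone} is that in $R_1$ the orthogonality-test gadget must itself be a \emph{concatenation of single leaves}, i.e. of type $\circ$, with no inner $\pipe$ available below the $\plus$. So I cannot encode ``bit of $b$ is $0$'' by the disjunction $0\pipe 1$ as before. First I would fix the encoding: for each vector $a\in A$, encode each bit by two characters — say $t(1)=01$ and $t(0)=00$ — while for $b\in B$ the orthogonality gadget $C(b)=\circ_{i=1}^d c(b[i])$ uses $c(1)=01$ (forces the corresponding $a$-bit to be $0$, matched only by $t(0)$) and $c(0)=$ a two-symbol string matched by both $t(0)$ and $t(1)$. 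The obstacle is that ``matched by both $00$ and $01$'' with only $\circ$ of leaves is impossible inside a single gadget. The resolution is to exploit that the top-level $\plus$ lets us \emph{repeat} choices from $\mathcal L$: rather than one gadget per $b$, put \emph{two} gadgets into $\mathcal L$ for each bit-pattern, one handling the $00$ case and one the $01$ case, so that the $\plus$-iteration can pick, bit by bit, whichever two-character word matches the current piece of $a$. Concretely, for each $b\in B$ and each index $i$ we add to $\mathcal L$ the leaf-concatenations $00$ and (if $b[i]=0$) also $01$; the $\plus$ then walks across the $2d$ symbols of the encoded $a$ choosing a matching two-symbol word at each step, which succeeds iff $a$ is orthogonal to $b$. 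To prevent mixing pieces belonging to different $b$'s, I would pad each encoded $a$-block and each per-$b$ word list with a distinct per-$b$ delimiter symbol $\#_b$, and include $\#_b$-bracketing in $\mathcal L$ only for that $b$; combined with the $\alpha,\beta$ offset trick of Section~\ref{sec:wb_lower}, this localizes the match to exactly one $(a,b)$ pair and realizes the OR over all pairs.

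After the encoding is fixed, the remaining steps are routine and parallel the earlier proofs: (i) define $T(A) = \bigl(\circ_{a\in A} [\mu a \$ \mu]^{(0)}_{\alpha,\beta}\bigr)\mu\mu$ with each bit of $a$ expanded to its two-symbol code and surrounded by $\alpha,\beta$; (ii) put into $\mathcal L$ the orthogonality words described above in their $[\cdot]^{(1)}_{\alpha,\beta}$ form, the skip words $\alpha\sigma\beta$ and $\beta\alpha\sigma$ for each relevant $\sigma$, and the three offset-control words $\alpha\mu\beta\alpha$, $\$\beta\alpha\mu$, $\beta\mu\mu$ — all of which are concatenations of leaves and hence of type $\circ$, as required below the $\plus$; (iii) set $R(B) = (\pipe_{\ell\in\mathcal L}\ell)^\plus \pipe (\text{trivial leaf branches})$, which is of type $\pipe\plus\pipe\circ$; (iv) argue equivalence exactly as in Section~\ref{sec:wb_lower} (start at offset $0$, skip $A$-blocks, switch to offset $1$ via $\alpha\mu\beta\alpha$, match one block against some $C(b)$ iff the corresponding $a$ is orthogonal to $b$, switch to offset $2$ via $\$\beta\alpha\mu$, skip the rest, finish with $\beta\mu\mu$); (v) observe $|T(A)|=O(nd)$, $|R(B)|=O(md)$, construction time $O(d(n+m))$, so an $O((nm)^{1-\eps})$ algorithm yields an $O(n^{2-2\eps}\poly(d))$ algorithm for OV, contradicting Conjecture~\ref{conj:ov}. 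I expect the main obstacle to be precisely step (ii)'s encoding trick — getting a purely-$\circ$-of-leaves orthogonality gadget under the $\plus$ while still allowing the ``don't care'' bit — and verifying in step (iv) that the freedom of the $\plus$ to choose words arbitrarily cannot create spurious matches across block boundaries, which is what the per-$b$ delimiters $\#_b$ and the parity of the $\alpha,\beta$ offsets are there to rule out.
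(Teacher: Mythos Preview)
Your plan has a genuine gap: you throw away the outer $\pipe$ by making all but one branch trivial, and then try to pack the orthogonality tests for \emph{all} $b\in B$ into a single dictionary $\mathcal L$ under one $\plus$. This cannot work with a string of size $O(nd)$. Concretely, your per-bit words $00$ and $01$ carry no position information, so once both are present in $\mathcal L$ (which happens as soon as some $b$ has a zero coordinate) the $\plus$ can match \emph{every} encoded $a$-block, independent of orthogonality. You notice this and propose per-$b$ delimiters $\#_b$, but the string $T(A)$ is built once from $A$ alone; it cannot contain $\#_b$ markers for each of the $m$ vectors in $B$ without blowing up to size $\Theta(nmd)$. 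There is no way, with a single shared dictionary and an $O(nd)$-size text, to force all words used inside one $a$-block to come from the same $b$.

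The paper's proof uses the outer $\pipe$ in an essential way: it ranges over $b\in B$, with one branch $R(b)=(\pipe_{p\in\mathcal L_b}\,p)^\plus$ per vector, each carrying its own dictionary $\mathcal L_b$. The ``don't care'' bit is then handled not by a per-$b$ delimiter but by \emph{position-indexed} symbols $\#_1,\ldots,\#_d$: the text encodes $a$ as $\#_1 a[1]\cdots\#_d a[d]$, and $\mathcal L_b$ contains $\#_i 0$ always and $\#_i 1$ iff $b[i]=0$. The $\#_i$'s force the $\plus$ to consume bits in order and against the correct coordinate of $b$, and the outer $\pipe$ (not a trivialized one) is what separates the $b$'s. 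The offset machinery and the size and time analysis are then exactly as you outline in steps (i)--(v).
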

\begin{proof}
  The regular expression we consider is now of the form $(p_1\pipe
  \dots \pipe p_m)$. In this reduction we construct regular
  expressions $p_i$ that is matched by the string constructed from the
  vectors in $A$ iff the $i$'th vector in $B$ is orthogonal to a
  vector in $A$.  The outer $\pipe$ functions acts as an \textit{OR},
  the following $\plus \pipe$ allows the offset implementation as
  above, and finally the concatenation operator $\circ$ is used to
  create regular expression that implements orthogonality checking.
  
  The orthogonality checks for vectors in $A,B$ are constructed as
  follows. For each $a = (a[1],\dots,a[d])$ we encode the position in the
  vector for each bit into a string as follows:
  $$
  t(a) =\#_1a[1] \dots \#_da[d],
  $$
  where $\#_{[1-d]}$ are $d$ unique symbols.

  The string $T(A)$ is defined as
  $$
  T(A) = \circ_{a\in A} [\mu t(a) \$ \mu]^{(0)}_{\alpha,\beta} \circ
  \mu \mu
  $$
  the concatenation of the encoding of each vector in $A$, again
  encoded for the offset construction.
  
  The regular expression for each $b \in B$ is constructed by making a
  list $\mathcal{L}$ of regular expressions of the from $\pipe \circ$
  which is then used to make the regular expression $R(b) =
  (\pipe_{p\in \mathcal{L}} p)^\plus$.
  The list $\mathcal{L}$ for $b$ is defined as follows. For each bit
  $b[i]$, we add the regular expression $\#_i 0$ to $\mathcal{L}$
  and if $b[i]$ is equal to zero we also add the regular expression
  $\#_i 1$. Denote that list of expressions $\mathcal{L}_{b}$. Then by
  construction vectors $a$ and $b$ are orthgonal if and only if $t(a)$
  matches the regular expression $(\pipe_{p\in \mathcal{L}_{b_i}}
  p)^\plus$.

  To finalize the construction, we replace the regular expressions in
  $\mathcal{L}_{b_i}$ with their $\alpha,\beta$ encodings and add the
  offset regular expression exactly as above to get $\mathcal{L}$.
  This construction is used for each input vector $b \in B$, and
  combined with $\pipe$ to get the  regular expression for $R(B) = (\pipe_{b\in B} R(b))$.

  The correctness arguments remain the same.  The string $T(A)$ only
  matches the regular expression $R(b)$ if there is a vector $a\in A$
  that is orthogonal to $b$. The top level $\pipe$ for the regular
  expression then allows to pick any $b$ that is orthogonal to a
  vector in $A$.
  Thus, if there is a pair of orthogonal vectors $a\in A,b \in B$ the
  string $T(A)$ matches the regular expression $R(b)$, and thus the
  regular expression for $B$.  If there is no orthogonal pair of
  vectors in $A,B$ then no $R(b)$ is matched by $A$, and thus the
  regular expression for $B$, the union of them, is not matched by
  $A$.

  The reduction constructs a string linear in the size of $A$, and a
  regular expression linear in the size of $B$ in total time linear in
  the size of $A$ and $B$. Thus, for the same reasons as above,
  Theorem \ref{thm:hardnessthree} follows
\end{proof}


\section{Dichotomy}
\label{sec:dichotomy}
In this section we prove Theorem~\ref{thm:main}, i.e., we show that the remaining results in this paper yield a complete dichotomy. 


We first provide a proof of Lemma~\ref{lem:simplify}.
\begin{proof}[Proof of Lemma~\ref{lem:simplify}]
  Let $t \in \types$ and let $R$ be a homogeneous regular expression of type~$t$. For each claimed simplification rule (from $t$ to some type $t'$) we show that there is an easy transformation of $R$ into a regular expression $R'$ which is homogeneous of type $t'$ and describes the same language as~$R$ (except for rule 3, where the language is slightly changed by removing the empty string). This transformation can be performed in linear time. Together with a similar transformation in the opposite direction, this yields an equivalence of $t$-membership and $t'$-membership under linear-time reductions.
  
  (1) Suppose that $t$ contains a substring $pp$, say $t_i = t_{i+1} = p \in \{\circ,\pipe,\star,\plus\}$, and denote by $t'$ the simplified type, resulting from $t$ by deleting the entry $t_{i+1}$. In $R$, for any node on level $i$ put a direct edge to any descendant in level $i+2$ and then delete all internal nodes in level $i+1$. This yields a regular expression $R'$ of type $t'$. For any $p \in \{\circ,\pipe,\star,\plus\}$ it is easy to see that both regular expressions descibe the same language. In particular, this follows from the facts $(E^\star)^\star = E^\star$ for any regular expression $E$ (and similarly for $\plus$) and $(E_{1,1} \circ \ldots \circ E_{1,k(1)}) \circ \ldots \circ (E_{\ell,1} \circ \ldots \circ E_{\ell,k(\ell)}) = E_{1,1} \circ \ldots \circ E_{1,k(1)} \circ \ldots \circ E_{\ell,1} \circ \ldots \circ E_{\ell,k(\ell)}$ for any regular expressions $E_{i,j}$ (and similarly for $\pipe$). This yields a linear-time reduction from $t$-membership to $t'$-membership. For the opposite direction, if the $i$-th layer is labeled $p$ then we may introduce a new layer between $i-1$ and $i$ containing only degree 1 nodes, labelled by $p$. This means we replace $E^\star$ by $(E^\star)^\star$, and similarly for $\plus$. For $\circ$ and $\pipe$ the degree 1 vertices are not visible in the written form of regular expressions\footnote{\label{foot:discussion}This is the only place where we need to use degree 1 vertices; all other proofs in this paper also work with the additional requirement that each $\pipe$- or $\circ$-node has degree at least 2 (cf. footnote~\ref{foot:shortdiscussion} on page~\pageref{foot:shortdiscussion}). Note that for our construction here in Lemma~\ref{lem:simplify} it is essential to use degree 1 nodes: Two levels of $\circ$-nodes with each node having degree at least 2 have at least 4 children, so it seems to be impossible to embed a regular expression of the shorter type (where the degree could be 2 or 3) into a regular expression of the longer type (where the combined degree is at least 4). For $\pipe$-nodes, there is an easy trick by adding dummy leaves that are labeled by fresh symbols not appearing in the input string $s$, cf. footnote~\ref{foot:getridofor} on page \pageref{foot:getridofor}. For $\circ$-nodes, we are not aware of any similar trick. 
  
  An option to still make our proofs work is to consider homogeneous regular expressions of type $t$ where each $\circ$- or $\pipe$-node has degree at least $d$, giving rise to the $(t,d)$-membership problem. 
  Then $(t,d)$-membership has the same complexity as the problem $t$-membership studied in this paper, irrespective of the (constant) value of $d$. 
  To show this, for each reduction rule from $t$-membership to $t'$-membership of Lemmas~\ref{lem:simplify} and~\ref{lem:subsequencehardness} one can easily establish a similar reduction from $(t,d)$-membership to $(t',d')$-membership \emph{for some $d' \ge 2$} depending only on $t,t',d$.
  Moreover, since $(t,d)$-membership is a special case of $(t,d')$-membership for any $d \ge d'$, it suffices that we prove all algorithmic results for $(t,2)$-membership, and all negative results for $(t,d)$-membership for any constant $d\ge 2$ to show that the complexity of $(t,d)$-membership does not depend on $d$. As our algorithms even work for degree 1 nodes, the former is satisfied. For the latter, it suffices to observe that in the direct hardness proofs of Backurs and Indyk~\cite{backursindyk} and our Theorems~\ref{thm:hardnessone}--\ref{thm:hardnessthree}, all degrees grow with the input size, and thus any constant lower bound does not break the reduction. This is a viable option for proving our results in the more restricted setting with degrees at least two, but we think that it would obscure the overall point of the paper.}.

  (2) For any regular expressions $E_1,\ldots,E_k$, the expression $\big((E_1^\plus) \pipe \ldots \pipe (E_k^\plus)\big)^\plus$ describes the same language as $\big(E_1 \pipe \ldots \pipe E_k\big)^\plus$. Indeed, any string described by the former expression can be written as a concatenation of some number of strings  in the union over all languages described by $E_1,\ldots,E_k$, which is exactly the language described by the latter expression.
  Thus, the inner $\plus$\nobreakdash-operation is redundant and can be removed. Specifically, for a homogeneous regular expression of type $t$ with $t_{i},t_{i+1},t_{i+2} = \plus \pipe \plus$ we may contract all edges between layer $i+1$ and $i+2$ to obtain a homogeneous regular expression $R'$ of type $t'$ describing the same language as $R$. This yields a linear-time reduction from $t$-membership to $t'$-membership. For the opposite direction, we may introduce a redundant $\plus$-layer below any $\plus \pipe$-layers without changing the language. 
  
  (3) Note that for any regular expression $E$ we can check in linear time whether it describes the empty string, by a simple recursive algorithm: No leaf describes the empty string. Any $\star$-node describes the empty string. A $\plus$-node describes the empty string if its child does so. A $\pipe$-node describes the empty string if at least one of its children does so. A $\circ$-node describes the empty string if all of its children do so. Perform this recursive algorithm to decide whether the root describes the empty string.
  
  Now suppose that $t$ has prefix $r \star$ for some $r \in \{\plus,\pipe\}^*$, and denote by $t'$ the type where we replaced the prefix $r \star$ by $r \plus$. (Note that we could restrict our attention to the case where $r$ is a subsequence of $\pipe \plus \pipe$, since otherwise rules 1 or 2 apply.) Let $R$ be homogeneous of type $t$, and $s$ a string for which we want to decide whether it is described by $R$. If $s$ is the empty string, then as described above we can solve the instance in linear time. Otherwise, we adapt $R$ by labeling any internal node in layer $|r|+1$ by $\plus$, obtaining a homogeneous regular expression~$R'$ of type $t'$. Then $R$ describes $s$ if and only $R'$ describes $s$. Indeed, since $\star$ allows more strings than $\plus$, $R$ describes a superset of $R'$. Moreover, if $R$ describes $s$, then we trace the definitions of $\pipe$ and $\plus$ as follows. We initialize node $v$ as the root and string $x$ as $s$. If the current vertex $v$ is labelled $\pipe$, then the language of $v$ is the union over the children's languages, so the current string $x$ is contained in the language of at least one child $v'$, and we recursively trace $(v',x)$. If $v$ is labelled $\plus$, then we can write $x$ as a concatenation $x_1 \ldots x_k$ for some $k \ge 1$ and all $x_i$ in the language of the child $v'$ of $v$. Note that we may remove all $x_\ell$ which equal the empty string. For each remaining $x_\ell$ we trace $(v',x_\ell)$. Running this traceback procedure until layer $|r|+1$ yields trace calls $(v_i,x_i)$ such that $v_i$ describes $x_i$ for all~$i$, and the $x_i$ partition $s$. 
  Since by construction each $x_i$ is non-empty, $x_i$ is still in the language of $v_i$ if we relabel $v_i$ by $\plus$. This shows that $s$ is also described by $R'$, where we relabeled each node in layer $|r|+1$ by $\plus$. 
  
  Finally, applying these rules eventually leads to an unsimplifiable type, since rules 1 and 2 reduce the length of the type and rule 3 reduces the number of $\star$-operations. Thus, each rule reduces the sum of these two non-negative integers.
\end{proof}

\begin{lemma} \label{lem:subsequencehardness}
  For types $t,t'$, there is a linear-time reduction from $t'$-membership to $t$-membership if one of the following sufficient conditions holds:
  \begin{enumerate}
    \item $t'$ is a prefix of $t$,
    \item we may obtain $t$ from the sequence $t'$ by inserting a $\pipe$ at any position,
    \item we may obtain $t$ from the sequence $t'$ by replacing a $\star$ by $\plus \star$, or
    \item $t'$ starts with $\circ$ and we may obtain $t$ from the sequence $t'$ by prepending a $\plus$.
  \end{enumerate}
\end{lemma}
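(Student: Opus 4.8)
\textbf{Overall approach.} For each of the four conditions I would exhibit a linear-time transformation that, given a regular expression $R'$ homogeneous of type $t'$ and a string $s$, produces a regular expression $R$ homogeneous of type $t$ and a string $s$ (unchanged, or changed only by padding with a few fresh symbols) such that $s$ matches $R'$ iff the new string matches $R$. The common theme is: the extra layers present in $t$ but not in $t'$ must be ``filled in'' by degree-1 inner nodes, or by inner nodes all of whose children but one are dummy leaves labelled by fresh symbols $\notin \Sigma$, so that the language is not actually changed. This is exactly the trick referenced in footnotes~\ref{foot:shortdiscussion} and~\ref{foot:discussion}.

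\textbf{Case 1 ($t'$ a prefix of $t$).} Here $R'$ already has depth at most $|t'|+1 \le |t|+1$, and each of its inner nodes at level $i \le |t'|$ is labelled $t'_i = t_i$. So $R'$ is \emph{literally} a homogeneous regular expression of type $t$ already (the definition only requires depth \emph{at most} $|t|+1$ and allows leaves at any level). Hence the identity map works; this case is essentially immediate from the definition of ``homogeneous of type $t$''.

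\textbf{Case 2 (insert a $\pipe$).} Suppose $t$ is obtained from $t'$ by inserting a $\pipe$ between positions $i-1$ and $i$. Take $R'$ and, for every inner node $v$ at level $i$ in $R'$ (which after the shift sits at level $i+1$ in the new tree) as well as every leaf at level $i$ in $R'$, insert a new $\pipe$-node as its parent at level $i$; give this new node exactly one child (namely $v$), or equivalently one real child plus a dummy leaf labelled by a fresh symbol $\# \notin \Sigma$ if we insist on degree $\ge 2$ (this is the trick of footnote~\ref{foot:getridofor}). A single-child $\pipe$-node describes exactly its child's language, and a dummy leaf labelled by a symbol absent from $s$ never contributes a match, so the language restricted to strings over $\Sigma$ is unchanged. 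This gives a homogeneous regular expression of type $t$ equivalent to $R'$ on the input $s$. For the reverse direction of ``equivalent under linear-time reductions'' in Lemma~\ref{lem:simplify} we do not need anything here, since the lemma only claims a one-way reduction from $t'$ to $t$.

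\textbf{Case 3 (replace $\star$ by $\plus\star$) and Case 4 (prepend $\plus$ when $t'$ starts with $\circ$).} For Case 3, say position $i$ of $t'$ is $\star$ and in $t$ it becomes $\plus$ at level $i$ followed by $\star$ at level $i+1$. In $R'$, every $\star$-node $v$ at level $i$ has a single child; replace $v$ by a $\plus$-node at level $i$ whose single child is a new $\star$-node at level $i+1$ whose single child is the old child of $v$. Since $(E^\star)^\plus = E^\star$ as languages, equivalence is clear, and the resulting tree is homogeneous of type $t$. For Case 4, $t' = \circ \,\cdots$ and $t = \plus \circ \,\cdots$; take $R'$, shift it down one level so its old root (a $\circ$-node, or possibly a single leaf if $|t'|$ allows) sits at level $2$, and attach a new $\plus$-node as the root at level $1$ with that node as its single child. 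Since $E^\plus$ contains $E$, and since for a \emph{concatenation} expression $E = E_1 \circ \cdots \circ E_k$ we have... actually the cleanest argument: a single-child $\plus$ over $E$ still has language a superset of $L(E)$; and to see we do not over-generate, note that here the point is the \emph{reduction} preserves ``$s$ matches'' — if $s \in L(E^\plus)$ we can write $s = x_1\cdots x_r$ with each $x_j \in L(E)$, but for the reduction direction we only need $s \in L(R') \Rightarrow s \in L(R)$ (clear since $L(R') \subseteq L(R)$) and $s \in L(R) \Rightarrow s\in L(R')$; the latter fails in general for $\plus$! So for Case 4 one must instead argue as in rule 3 of Lemma~\ref{lem:simplify}: handle the empty string separately in linear time, and for nonempty $s$ note a $\circ$-rooted expression's language, when $s \ne \varepsilon$, satisfies $s \in L(E^\plus) \Rightarrow s \in L(E)$ is \emph{false} too in general, so the correct move is the symmetric one — build $R$ with the new $\plus$ having a single child and observe $s \in L(E^\plus)$ forces, by the partition $s = x_1 \cdots x_r$ with $x_j \in L(E)$ and $E$ a $\circ$-expression whose first factor's language and last factor's language pin down prefixes/suffixes, that $r$ must be $1$. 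I would spell this out using the structure $E = E_1 \circ \cdots \circ E_k$: each $x_j$ begins with a string in $L(E_1)$ and ends with one in $L(E_k)$, and a careful length/boundary argument (identical in spirit to the ``rotation'' argument in Section~\ref{sec:almostlinear}) shows a multi-piece decomposition is impossible unless it collapses. \textbf{This length/boundary argument for Case 4 is the main obstacle}; Cases 1--3 are routine tree surgery with single-child inner nodes.

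\textbf{Linear time.} In every case the transformation inserts $O(1)$ new nodes per existing node (Cases 2,3,4) or does nothing (Case 1), and the empty-string check in Case 4 is linear by the recursive procedure from Lemma~\ref{lem:simplify}; the string is unchanged up to $O(1)$ fresh padding symbols. Hence each is a linear-time reduction in the sense of Definition~\ref{def:lineartimereduction}.
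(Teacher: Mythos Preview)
Cases 1--3 of your proposal are fine and match the paper's argument essentially verbatim: Case~1 is the identity, Case~2 inserts degree-1 $\pipe$-nodes (or the fresh-symbol variant), and Case~3 uses $(E^\star)^\plus = E^\star$.

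Case 4, however, has a real gap. Your plan is to set $R := (R')^\plus$ with $s$ unchanged and then argue that for a $\circ$-rooted $R' = E_1 \circ \cdots \circ E_k$ any decomposition $s = x_1 \cdots x_r$ with $x_j \in L(R')$ must have $r=1$, via some ``length/boundary'' argument on $L(E_1)$ and $L(E_k)$. No such argument exists in general: take $R' = a \circ a$, so $L(R') = \{aa\}$ and $L((R')^\plus) = \{a^{2k} : k \ge 1\}$; then $s = aaaa$ matches $R$ but not $R'$. Knowing that each $x_j$ begins with something in $L(E_1)$ and ends with something in $L(E_k)$ does not pin down $r$, and the rotation trick from Section~\ref{sec:almostlinear} is about a different issue entirely (merging identical first/last characters in run-length encodings), not about forcing a $\plus$ to iterate once.

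The paper's fix is to \emph{modify both $R'$ and $s$} using a fresh sentinel symbol $x \notin \Sigma$: set $R := (x \circ E_1 \circ \cdots \circ E_k \circ x)^\plus$ and $s := x\, s'\, x$. Now every word in $L(x \circ R' \circ x)$ begins and ends with $x$, while the new string $s$ contains exactly two occurrences of $x$; hence any decomposition $s = w_1 \cdots w_r$ with $w_j \in L(x \circ R' \circ x)$ forces $r=1$, giving $s' \in L(R')$ iff $s \in L(R)$. This is the missing idea, and it keeps the reduction linear-time since you add two leaves to $R'$ and two symbols to $s'$.
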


\begin{proof}
  (1) The definition of ``homogeneous with type $t$'' does not restrict the appearance of leafs in any level. Thus, any homogeneous regular expression of type $t'$ (i.e., the prefix) can be seen as a homogeneous regular expression of type $t$ (i.e., the longer string) where all leaves appear in levels at most $|t'|+1$. This shows that trivially $t'$-membership is a special case of $t$-membership.
  
  (2) Let $t$ be obtained from $t'$ by inserting a $\pipe$ at position $i$. 
  Consider any homogeneous regular expression $R'$ of type $t'$. Viewed as a tree, in $R$ we subdivide any edge from a node in layer $i-1$ to a node in layer $i$, and mark the newly created nodes by $\pipe$. This yields a regular expression $R$ of type $t$. Since a $\pipe$ with degree 1 is trivial, the language described by $R$ is the same as for $R'$.\footnote{\label{foot:getridofor}An alternative to using $\pipe$-nodes of degree 1 is as follows: Let $x$ be a fresh symbol, not occuring in the input string $s'$ for which we want to know whether it matches $R'$. For each newly created node $v$ by the subdivision process, add a new leaf node $\ell_v$ marked by $x$ and connect it to $v$. This again yields a regular expression of type~$t$. Since $x$ is a fresh symbol, the newly added leaves cannot match any symbol in $s'$, and thus $s'$ matches $R'$ if and only if $s'$ matches $R$.}
  
  (3) Let $R'$ be a homogeneous regular expression of type $t'$ with $t'_i = \star$. Subdivide any edge in $R'$ from a node in layer $i-1$ to an internal node in layer $i$, and label the newly created nodes by $\plus$. This yields a regular expression $R$ of type $t$. (Note that by this construction any leaf of $R'$ in layer $i$ stays a leaf of $R$ in layer $i$.) Consider any newly created node $v$ with child $u$. Since $u$ is an internal node, it is labeled by $\star$ and (its subtree) represents a regular expression $E^\star$. The newly created node $v$ thus represents the regular expression $(E^\star)^\plus$. Using the fact $(E^\star)^\plus = E^\star$ for any regular expression $E$, it follows that $R$ and $R'$ describe the same language. Hence, we obtain a linear-time reduction from $t'$-membership to $t$-membership.

  (4) Let $R' = E_1 \circ \ldots \circ E_k$ be a homogeneous regular expression of type $t'$. Let $s'$ be the input string for which we want to know whether it matches $R'$, and let $x$ be a fresh symbol not occuring in $s'$. We construct the expression $R := (x \circ E_1 \circ \ldots \circ E_k \circ x)^+$ and the string $s := x s' x$. Note that $R$ is homogeneous of type $t$. Moreover, there are exactly two occurences of $x$ in $s$, and thus $s$ matches $R$ if and only if $s'$ matches $E_1 \circ \ldots \circ E_k = R'$. Thus, we obtain a linear-time reduction from $t'$-membership to $t$-membership.
\end{proof}

We are now ready to prove the dichotomy theorem for the membership problem.

\begin{proof}[Proof of Theorem~\ref{thm:main}]
  The natural way of enumerating all types yields a tree with vertex set $\types$, where types $t$ and $t'$ are connected by an edge if we obtain $t$ from $t'$ by appending one of the operations $\circ,\pipe,\star,\plus$. To keep this tree simple, we directly apply the simplification rule Lemma~\ref{lem:simplify} (1), i.e., we do not consider types with consecutive equal operations. We split this tree into the three Figures~\ref{fig:membcirc}-\ref{fig:membpipe} according to the first operation $\circ/\star, \plus, \pipe$. 
  
  In the following we describe these figures. If $t$-membership is solvable in almost-linear time, then in the figures we mark the node corresponding to $t$ by the fastest known running time, and we refer to \cite{backursindyk} or our Theorem~\ref{thm:newalgos}. 
  We remark that most $O(n+m)$ algorithms are immediate, but for completeness we refer to~\cite{backursindyk}.
  
  If $t$-membership simplifies, then $t$-membership is equivalent to $t'$-membership for some different, non-simplifying type $t'$ in the tree.
  In this case, we mark $t$ by the corresponding simplification rule of Lemma~\ref{lem:simplify}. Note that the simplification rules have the property that if $t$ simplifies then any descendant of $t$ in the tree also simplifies. Thus, we may ignore the subtree of any simplifying type $t$.
  
  If we can show that any algorithm for $t$-membership takes time $(nm)^{1-o(1)}$ unless SETH fails, then in the figures we mark $t$ as ``hard''. Note that by Lemma~\ref{lem:subsequencehardness} (1), if $t$ is hard then any descendant of $t$ in the tree is also hard. Thus, we may ignore the subtree of $t$, and we only mark minimal hardness results. 
  From the results by Backurs and Indyk~\cite{backursindyk} (Theorem~\ref{thm:backursindyklower}) we know that $t$-membership takes time $(nm)^{1-o(1)}$ under SETH for the types $\circ \star$, $\circ \pipe \circ$, $\circ \plus \circ$, $\circ \pipe \plus$, and $\circ \plus \pipe$. Our Theorems~\ref{thm:hardnessone}-\ref{thm:hardnessthree} add the types $\plus \pipe \circ \plus$, $\plus \pipe \circ \pipe$, and $\pipe \plus \pipe \circ$.
  If there is such a direct hardness proof of $t$-membership, then in the figures we refer to the corresponding theorem. In all other minimal hard cases, there is a combination of the reduction rules, Lemma~\ref{lem:subsequencehardness} (2--4), resulting in a type $t'$ such that hardness of $t'$-membership follows from Theorem~\ref{thm:backursindyklower} and $t'$-membership has a linear time reduction to $t$-membership. In this case, in the figures we additionally mark the node corresponding to $t$ by $t'$. E.g., $\pipe \circ \pipe \star$-membership contains $\circ \star$-membership as a special case (since by Lemma~\ref{lem:subsequencehardness} (2) we may remove any $\pipe$ operations) and $\circ \star$-membership is hard by Theorem~\ref{thm:backursindyklower}, so we mark the node corresponding to $\pipe \circ \pipe \star$ by ``hard $\circ \star$''. 
  
  It is easy to check that our Figures~\ref{fig:membcirc}-\ref{fig:membpipe} indeed enumerate all cases and thus contain all maximal algorithmic results and minimal hardness results. The claimed dichotomy of Theorem~\ref{thm:main} now follows by inspecting these figures.
\end{proof}

\begin{figure}[hp]
   \includegraphics[scale=0.8]{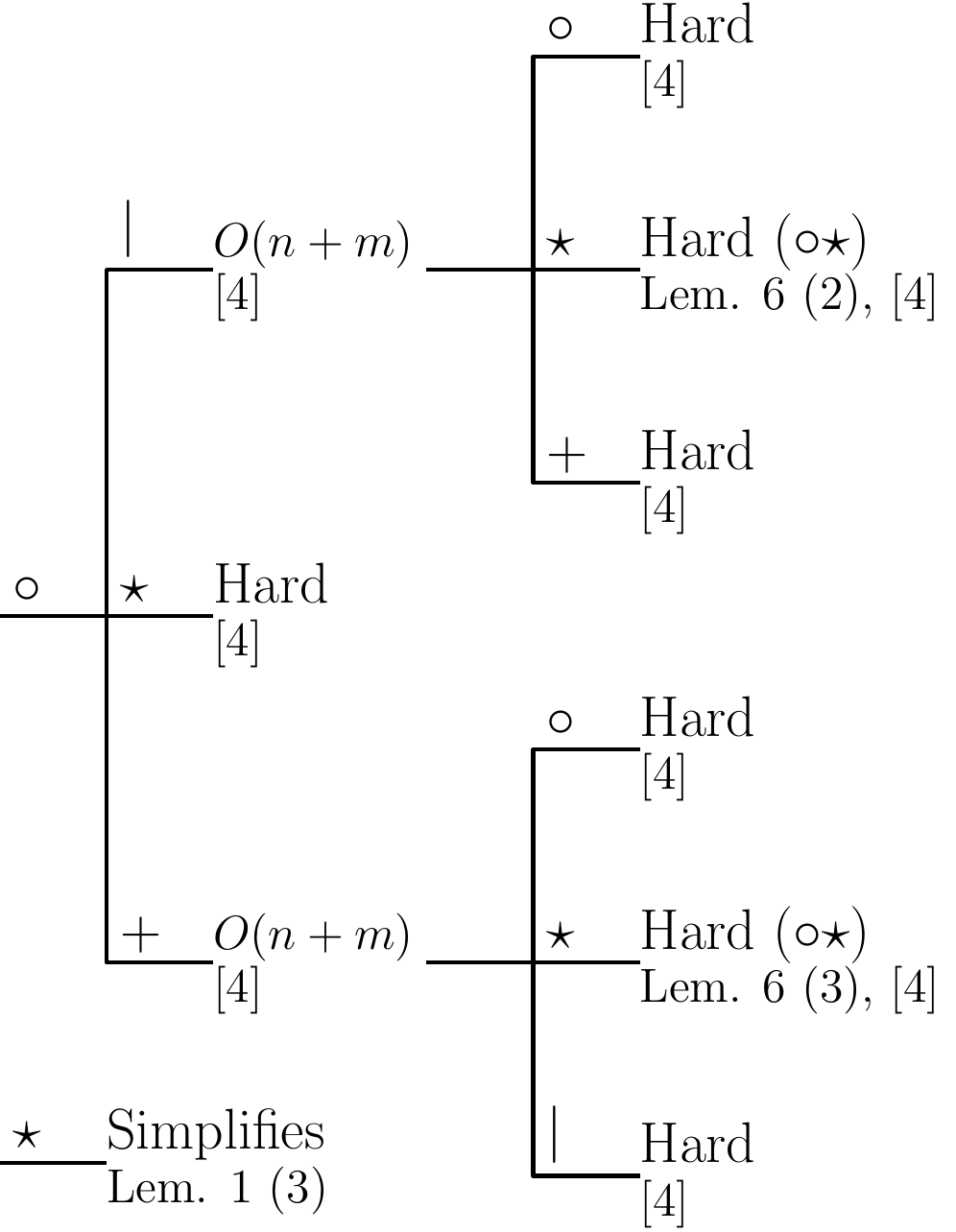}
   \caption{Tree diagram for $t$-membership with first operation $\circ$ or $\star$. For details see the proof of Theorem~\ref{thm:main}.}
\label{fig:membcirc}
\end{figure}

\begin{figure}[hp]
  \includegraphics[scale=0.8]{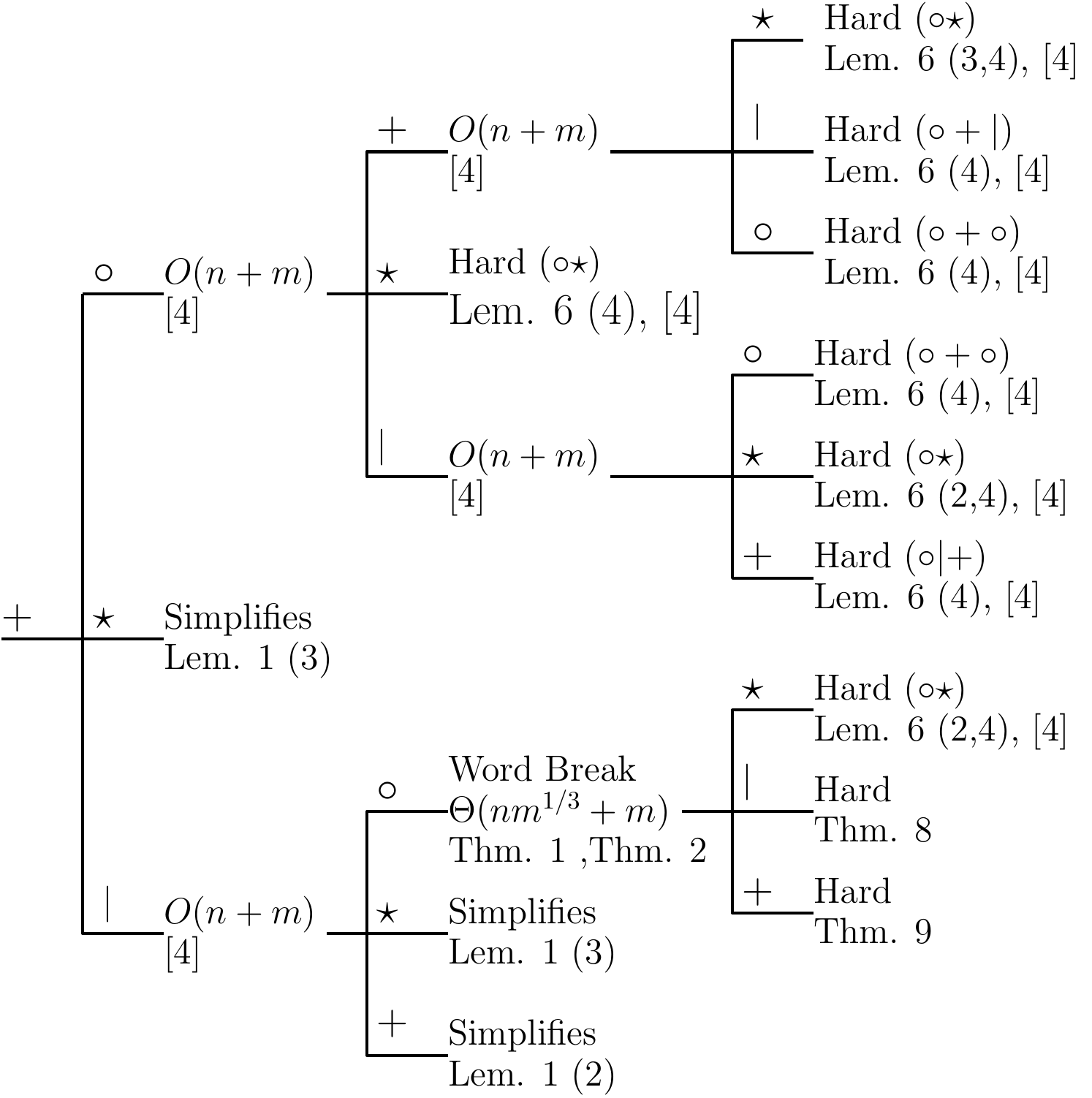}  
  \caption{Tree diagram for $t$-membership with first operation $\plus$. For details see the proof of Theorem~\ref{thm:main}.}
  \label{fig:membplus}
\end{figure}

\begin{figure}[hp]
  \includegraphics[scale=0.8]{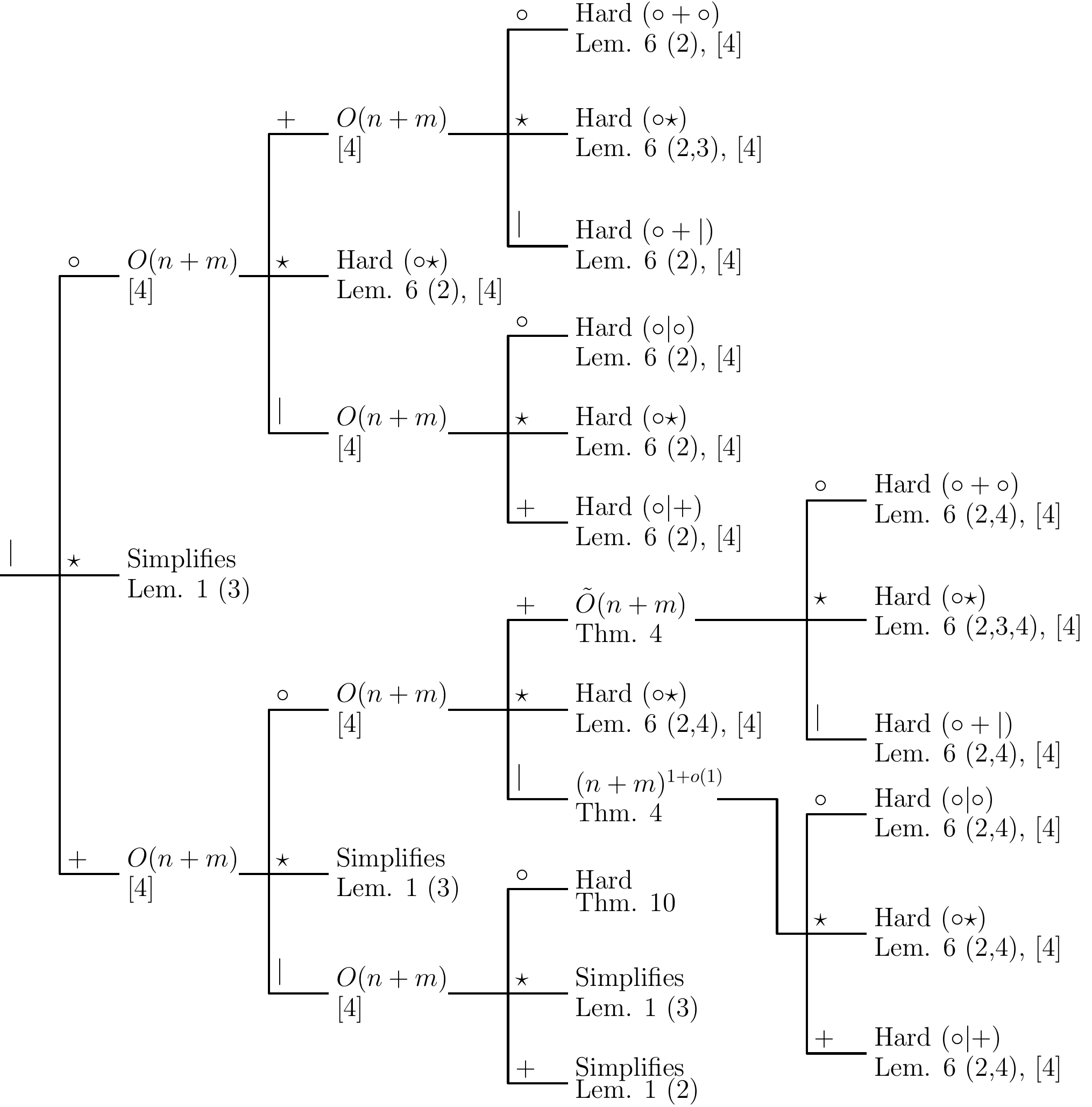}
  \caption{Tree diagram for $t$-membership with first operation $\pipe$. For details see the proof of Theorem~\ref{thm:main}.}
  \label{fig:membpipe}
\end{figure}

\clearpage
\bibliographystyle{plain}
\bibliography{pattern.bib}

\appendix
 

 \section{Dichotomy for Pattern Matching}
 \label{app:patternmatching}
 
For the regular expression pattern matching problem Backurs and Indyk~\cite{backursindyk} characterized all types of length at most 3. In this section, we show that their results in fact imply a complete dichotomy for all types of any (constant) depth.

Recall that in the pattern matching problem we are given a regular expression $R$ and a string~$s$ and want to decide whether \emph{any substring of $s$} is in the language described by $R$ (whereas in the membership problem we want to know whether the whole string $s$ is in the language of~$R$). 
For $t$-pattern matching, i.e., the natural restriction of pattern matching to regular expressions~$R$ that are homogeneous of type $t$, Backurs and Indyk established that any algorithm takes time $(nm)^{1-o(1)}$ (unless SETH fails) for any type $t$ among $\circ \star$, $\circ \pipe \circ$, $\circ \plus \circ$, $\circ \pipe \plus$, $\circ \plus \pipe$, $\pipe \circ \pipe$, and $\pipe \circ \plus$. Positive results for $t$-pattern matching date back to the 70ies, since $\circ$-pattern matching corresponds to standard string matching, which has a classic linear-time algorithm due to Knuth, Morris, and Pratt~\cite{KMP77}. Further special cases with near-linear time algorithms are $\pipe \circ$ (dictionary matching~\cite{AC75}), $\circ \pipe$ (superset matching~\cite{CH02}), and $\circ \plus$ (\cite{backursindyk}). 
An easy observation moreover shows that $t$-pattern matching is in linear time whenever $t$ starts with $\star$ or $\pipe \star$.

\begin{lemma} \label{lem:PMtrivial}
  For any type $t$ with prefix $\star$ or with prefix $\pipe \star$, $t$-pattern matching is in linear time.
\end{lemma}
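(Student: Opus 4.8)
The plan is to exploit the fact that a pattern-matching query asks only whether \emph{some} substring of $s$ matches $R$, so when the top-level operator is $\star$ the empty string is always in the language and the answer is trivially ``yes''; when the prefix is $\pipe \star$ we only need to find one branch of the union that can be matched, and a $\star$-branch again matches the empty string.

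Concretely, I would argue as follows. Suppose first that $t$ has prefix $\star$. Then the root of $R$ is a $\star$-node, whose language contains the empty word (taking $k=0$ repetitions in the definition of $\star$). Since the empty word is a substring of every string $s$ (e.g.\ $s[1..0]$), the answer to the $t$-pattern matching instance is always ``yes'', so we may output ``yes'' in $O(n+m)$ time after merely reading the input. Now suppose $t$ has prefix $\pipe \star$, so the root is a $\pipe$-node with children $v_1,\dots,v_\ell$. If any $v_i$ is an internal node, it is a $\star$-node, hence $L(v_i)$ contains the empty word, hence $L(R) \supseteq L(v_i)$ contains the empty word and again the answer is ``yes''. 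The only remaining case is that \emph{all} children $v_i$ are leaves, labelled by symbols $c_1,\dots,c_\ell \in \Sigma$; then $L(R) = \{c_1,\dots,c_\ell\}$ and the answer is ``yes'' iff some $c_i$ occurs as a symbol of $s$, which we can check in $O(n+\ell) = O(n+m)$ time by scanning $s$ once after recording which symbols appear among the leaves. In all cases the running time is linear.

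The only mild subtlety — and the one place to be careful — is the degree-$1$ convention: if $\pipe$ is allowed to have a single child, then ``prefix $\pipe \star$'' still forces that child to be a $\star$-node (a single leaf child would make $t$ start with $\pipe$ followed by a leaf, not $\pipe \star$, since a leaf at level $2$ is not an internal $\star$-node but the structure of $t$ as a sequence of inner-node labels is unaffected); in any case the argument above handles $\ell = 1$ without change, since a single $\star$-child already yields the empty word in $L(R)$. A second point worth stating explicitly is that leaves may appear at any level in a homogeneous regular expression, which is exactly why the ``all children are leaves'' case must be treated separately above; but that case reduces to single-symbol matching, which is trivially linear. Thus the lemma follows, and no genuine obstacle arises — the whole content is the observation that matching a substring against a language containing the empty word is vacuous.
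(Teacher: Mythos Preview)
Your approach is essentially identical to the paper's. There is one small gap: in the prefix-$\star$ case you assert that ``the root of $R$ is a $\star$-node,'' but the definition of a homogeneous regular expression of type $t$ allows leaves at \emph{any} level, including level~$1$ --- so the root itself may be a leaf (i.e., $R$ is a single alphabet symbol). The paper's proof explicitly covers this degenerate case: if the root is a leaf then $L(R)$ consists of a single length-$1$ string, and one checks in linear time whether that symbol occurs in $s$. You already handle the analogous degeneracy for the $\pipe\star$ prefix (all children are leaves), so once you add this one-line case for the $\star$ prefix your argument matches the paper's.
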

\begin{proof}
  In both cases, any homogeneous regular expression $R$ of type $t$ matches the empty string, and thus the empty substring of any string $s$ is in the language described by $R$, so that any string $s$ is a YES-instance -- unless all leaves of $R$ are too high, so that there is no internal $\star$-node. In the latter case, either the root of $R$ is a leaf, so that the language described by $R$ consists of a single string of length one, or the root is a $\pipe$-node all of whose children are leaves, so that the language described by $R$ is a finite set of strings of length one. In both cases, it is easy to check in linear time whether (any substring of) a given string $s$ matches $R$.
\end{proof}

For pattern matching, the following two lemmas show similar simplification rules and reduction rules as for the membership problem (cf.~Lemmas~\ref{lem:simplify} and \ref{lem:subsequencehardness}).

\begin{lemma} \label{lem:PMsimplify}
  For any type $t$, applying any of the following rules yields a type $t'$ such that $t$-pattern matching and $t'$-pattern matching are equivalent under linear-time reductions:
  \begin{enumerate}
    \item replace any substring $p p$, for some $p \in \{\circ,\pipe,\star,\plus\}$, by $p$,
    \item remove prefix $\plus$,
    \item replace prefix $\pipe \plus$ by $\pipe$.
  \end{enumerate}
  We say that $t$-pattern matching \emph{simplifies} if one of these rules applies.
  Applying these rules in any order will eventually lead to an unsimplifiable type.
\end{lemma}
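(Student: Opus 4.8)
The plan is to follow the template of the proof of Lemma~\ref{lem:simplify}: for each of the three rules, sending a type $t$ to a type $t'$, I will describe a linear-time transformation turning any homogeneous regular expression $R$ of type $t$ into a homogeneous regular expression $R'$ of type $t'$ with the property that \emph{some substring of the input string $s$ matches $R$ if and only if some substring of $s$ matches $R'$}, together with an equally simple transformation in the reverse direction. Since the input string is left untouched, this yields linear-time reductions in both directions and hence the asserted equivalence.

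For rule~1 ($pp \mapsto p$) the construction is word-for-word the one in Lemma~\ref{lem:simplify}(1): if $t_i = t_{i+1} = p$, then in $R$ one contracts all edges from layer $i$ to layer $i+1$ (equivalently, splices out layer $i+1$), obtaining $R'$ of type $t'$ with $L(R') = L(R)$ --- this uses $(E^{\star})^{\star} = E^{\star}$, $(E^{\plus})^{\plus} = E^{\plus}$, and the flattening identities for nested $\circ$ and nested $\pipe$. For the reverse direction one re-introduces a layer of degree-$1$ nodes labelled $p$ (for $\circ,\pipe$ these are invisible in the written form; for $\star,\plus$ this replaces $E^{\star}$ by $(E^{\star})^{\star}$, and similarly for $\plus$), exactly as in footnote~\ref{foot:discussion}. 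Because the language is unchanged, the pattern-matching answer is trivially preserved.

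For rule~2 (deleting a leading $\plus$), write $t = \plus\, t''$. A homogeneous $R$ of type $t$ is either a single leaf, in which case the instance is solved trivially, or its root is a $\plus$-node whose unique child $E$ is homogeneous of type $t''$, so that $L(R) = L(E)^{\plus}$. The crucial observation is that every string in $L(E)^{\plus}$ factors as $w_1 \cdots w_m$ with $m \ge 1$ and each $w_\ell \in L(E)$, and whenever the concatenation $w_1 \cdots w_m$ occurs as a substring of $s$ so does each factor $w_\ell$; conversely $L(E) \subseteq L(E)^{\plus}$. Hence $s$ has a substring in $L(R)$ iff it has a substring in $L(E)$, so $(R,s) \mapsto (E,s)$ is the desired reduction, and prepending a new $\plus$-node gives the reverse reduction. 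Rule~3 ($\pipe\plus \mapsto \pipe$) is the same argument carried out underneath the top-level $\pipe$: a homogeneous $R$ of type $\pipe\plus\, t'''$ (other than a bare leaf) has the shape $E_1^{\plus} \pipe \cdots \pipe E_k^{\plus} \pipe \sigma_1 \pipe \cdots \pipe \sigma_j$, where each $E_i$ is homogeneous of type $t'''$ and the $\sigma_i \in \Sigma$ are leaves sitting at level $2$; replacing each $E_i^{\plus}$ by $E_i$ yields $R'$ of type $\pipe\, t'''$ with $L(R) = \bigcup_i L(E_i)^{\plus} \cup \{\sigma_1,\dots,\sigma_j\}$ and $L(R') = \bigcup_i L(E_i) \cup \{\sigma_1,\dots,\sigma_j\}$, and applying the observation above to each $E_i$ shows that $R$ and $R'$ have the same matching substrings of $s$. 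The reverse reduction subdivides each edge from the root $\pipe$ to an internal child and labels the new node $\plus$; this changes the language of $E_i$ to $L(E_i)^{\plus}$ but, by the same observation, not the pattern-matching answer.

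Finally, termination is immediate --- even simpler than for membership --- since each of the three rules strictly decreases $|t|$ (rule~1 and rule~2 delete one entry, rule~3 shortens a length-$2$ prefix to a length-$1$ prefix), so any sequence of applications has length at most $|t|$ and ends at an unsimplifiable type. I expect the only delicate points to be the reverse reductions --- making sure the rebuilt expression has \emph{exactly} the longer type $t$, for which rule~1 forces the use of degree-$1$ internal nodes as discussed in footnote~\ref{foot:discussion} --- together with the degenerate cases (a bare-leaf $R$, or a residual type that becomes empty), all of which are routine.
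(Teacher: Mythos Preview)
Your proof is correct and follows essentially the same approach as the paper's: rule~1 by the language-preserving transformation from Lemma~\ref{lem:simplify}, and rules~2 and~3 by the observation that $s$ has a substring in $L(E)^{\plus}$ iff it has a substring in $L(E)$. Your treatment is in fact slightly more careful than the paper's, since you explicitly handle the degenerate leaf cases and supply the (easy) termination argument that each rule strictly decreases $|t|$, which the paper's proof of this lemma omits.
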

\begin{proof}
  Rule 1 also holds for the membership problem. The same proof as in Lemma~\ref{lem:simplify} works verbatim for pattern matching, since we argued that we can turn any homogeneous regular expression $R$ of type $t$ into one of type $t'$ without changing its described language, and without looking at the input string $s$ (and similarly the other way round).
  
  For rule 2, note that string $s$ contains a substring matching regular expression $E^\plus$ if and only if it contains a substring matching $E$. Indeed, the former means that $s$ has a substring $s' = s'_1 \ldots s'_k$ with each $s'_i$ matching $E$, but then any $s'_i$ is a substring of $s$ matching $E$. The opposite direction holds since $E^+$ describes a superset of the language of $E$. Thus, we may remove any prefix $\plus$, and the inverse operation of introducing a redundant prefix $\plus$ also does not change the problem.
  
  For rule 3, we similarly have that string $s$ contains a substring matching regular expression $E_1^\plus \pipe \ldots \pipe E_k^\plus$ if and only if it contains a substring matching $E_1 \pipe \ldots \pipe E_k$.
\end{proof}

\begin{lemma} \label{lem:PMsubsequencehardness}
  For types $t,t'$, there is a linear-time reduction from $t'$-membership to $t$-membership if one of the following sufficient conditions holds:
  \begin{enumerate}
    \item $t'$ is a prefix of $t$,
    \item we may obtain $t$ from the sequence $t'$ by replacing a $\star$ by $\plus \star$, or
    \item we may obtain $t$ from the sequence $t'$ by inserting a $\pipe$ at any position.
  \end{enumerate}
\end{lemma}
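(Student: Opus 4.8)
The plan is to follow the proof of Lemma~\ref{lem:subsequencehardness} almost verbatim, relying on the observation that each transformation used there maps a homogeneous regular expression of type $t'$ to a homogeneous regular expression of type $t$ describing \emph{the same language}, and does so without inspecting the input string. Since whether some substring of a string $s$ matches a regular expression $R$ depends only on $s$ and on the language described by $R$, any such language-preserving transformation of the regular expression yields a linear-time reduction for pattern matching as well, simply passing $s$ through unchanged. (We read the statement as concerning $t'$-pattern matching and $t$-pattern matching, matching the context of this appendix.)

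Concretely, for (1) I would note that the definition of ``homogeneous of type $t$'' places no restriction on the levels at which leaves occur, so when $t'$ is a prefix of $t$ every homogeneous regular expression of type $t'$ is already a homogeneous regular expression of type $t$ (all of whose leaves happen to lie in levels at most $|t'|+1$); hence $t'$-pattern matching is literally a special case of $t$-pattern matching. For (2), suppose $t$ arises from $t'$ by replacing an entry $t'_i = \star$ by $\plus\,\star$. Given $R'$ homogeneous of type $t'$, subdivide every edge from a level-$(i-1)$ node to an \emph{internal} level-$i$ node and label each newly created node by $\plus$, leaving leaves in level $i$ untouched; each new node with child $u$ then represents $(E^\star)^\plus$ where $E^\star$ is represented by $u$, and since $(E^\star)^\plus = E^\star$ the language is unchanged, while the resulting expression is homogeneous of type $t$. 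For (3), suppose $t$ arises from $t'$ by inserting a $\pipe$ at position $i$. Subdivide every edge from a level-$(i-1)$ node to a level-$i$ node and mark the new nodes by $\pipe$; a $\pipe$-node of degree $1$ is trivial, so the language is unchanged, and if one wishes to avoid degree-$1$ nodes one can instead attach to each new node a fresh dummy leaf labeled by a symbol not occurring in $s$, exactly as in the corresponding step for membership. In each case the regular expression grows by only a constant factor and the construction runs in linear time, so we obtain the claimed linear-time reduction.

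I do not expect a real obstacle: the content is bookkeeping, and the only point that merits a second look is the handling of degree-$1$ $\pipe$-nodes in case (3), which is already resolved by the dummy-leaf device used elsewhere in the paper.
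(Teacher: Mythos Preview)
Your proposal is correct and is exactly the paper's approach: the paper's proof consists of the single sentence ``The proof of Lemma~\ref{lem:subsequencehardness} works verbatim,'' and you have simply unpacked that sentence, correctly observing that the relevant transformations (cases (1), (2), (3) here corresponding to cases (1), (3), (2) of Lemma~\ref{lem:subsequencehardness}) are language-preserving and independent of the input string, hence carry over to pattern matching unchanged. Your reading of the statement as concerning pattern matching rather than membership is also the intended one given the context of the appendix.
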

\begin{proof}
  The proof of Lemma~\ref{lem:subsequencehardness} works verbatim.
\end{proof}

We are now ready to prove the following dichotomy.

\begin{theorem} \label{thm:patternmatching}
  For any $t \in \types$ one of the following holds:
  \begin{itemize}
    \item $t$-pattern matching simplifies,
    \item $t$ has prefix $\star$ or $\pipe \star$ and thus $t$-pattern matching is in linear time (Lemma~\ref{lem:PMtrivial}),
    \item $t$ is a subsequence of $\pipe \circ$ (dictionary matching~\cite{AC75}),  $\circ \pipe$ (superset matching~\cite{CH02}), or $\circ \plus$ (\cite{backursindyk}), and thus $t$-membership is in near-linear time, or
    \item $t$-membership takes time $(nm)^{1-o(1)}$, assuming SETH.
  \end{itemize}
\end{theorem}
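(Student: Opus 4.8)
The plan is to mirror the proof of Theorem~\ref{thm:main}: enumerate all types in the natural tree (types $t,t'$ joined by an edge when $t$ arises from $t'$ by appending one operation), prune this tree using the simplification rules Lemma~\ref{lem:PMsimplify} and the reduction rules Lemma~\ref{lem:PMsubsequencehardness}, and check that every surviving leaf falls into one of the four listed cases. First I would invoke Lemma~\ref{lem:PMsimplify}, so that it suffices to classify unsimplifiable types; in particular rule~(1) lets us forbid consecutive equal operations, so the enumeration tree branches into at most three children per node, and I would split the analysis by the first operation into the cases $\circ$, $\pipe$, and $\star$ (the empty type and the single-operation types are immediate: $\circ$ is Knuth--Morris--Pratt string matching, $\pipe$ and the empty type test for occurrence of a symbol, $\plus$ simplifies by Lemma~\ref{lem:PMsimplify}(2), and $\star$ is trivial). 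The key structural fact that makes this enumeration finite is that hardness propagates to descendants by Lemma~\ref{lem:PMsubsequencehardness}(1) and simplification propagates to descendants by inspection of the rules, so we only ever mark minimal hard types and maximal tractable types and may ignore their subtrees.

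Next I would dispose of the trivial branches: by Lemma~\ref{lem:PMtrivial} every type with prefix $\star$ or $\pipe\star$ is in linear time, so the whole subtree rooted at $\star$ and the subtree rooted at $\pipe\star$ can be ignored. After also removing the simplifying subtrees rooted at $\pipe\pipe$ (rule~(1)) and at $\pipe\plus$ (rule~(3)), the only remaining branch starting with $\pipe$ is $\pipe\circ$, which is dictionary matching~\cite{AC75} and hence near-linear. Its children are $\pipe\circ\pipe$ and $\pipe\circ\plus$, both hard by the Backurs--Indyk bounds; $\pipe\circ\star$, which I would handle via Lemma~\ref{lem:PMsubsequencehardness}(3) (insert a $\pipe$) to exhibit $\circ\star$-pattern matching as a subproblem, hence hard; and $\pipe\circ\circ$, which simplifies. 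Starting with $\circ$: after pruning $\circ\circ$ (simplifies) and $\circ\star$ (hard, Backurs--Indyk), we are left with $\circ\pipe$ (superset matching~\cite{CH02}, near-linear) and $\circ\plus$ (near-linear, \cite{backursindyk}). For each of these, the child obtained by appending $\circ$ and the child obtained by appending $\plus$ resp.\ $\pipe$ are hard directly by the Backurs--Indyk results for $\circ\pipe\circ,\circ\plus\circ,\circ\pipe\plus,\circ\plus\pipe$; the $\star$-child ($\circ\pipe\star$ or $\circ\plus\star$) is hard because Lemma~\ref{lem:PMsubsequencehardness} reduces $\circ\star$-pattern matching to it (insert a $\pipe$, resp.\ replace $\star$ by $\plus\star$); and the fourth child simplifies. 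This exhausts the tree, and reading it off gives exactly the claimed classification: the near-linear types are precisely those that are subsequences of $\pipe\circ$, $\circ\pipe$, or $\circ\plus$ (up to the trivial prefixes $\star$/$\pipe\star$ and up to simplification).

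I expect the main obstacle to be the same one flagged for the membership dichotomy: the core hard types of Backurs--Indyk ($\circ\star$, $\circ\pipe\circ$, $\circ\plus\circ$, $\circ\pipe\plus$, $\circ\plus\pipe$, $\pipe\circ\pipe$, $\pipe\circ\plus$) cannot simply be ``read off as subsequences'' of a longer hard type, because homogeneity of type $t$ forbids omitting layers, so every hardness step for a composite type must be routed through the explicit structure-preserving reductions of Lemma~\ref{lem:PMsubsequencehardness} — inserting a full $\pipe$-layer, unfolding a $\star$-layer into $\plus\star$, or using the prefix rule — and one must verify case by case that these rules really do cover each hard leaf produced by the enumeration above. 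Once that verification is carried out, together with the observations that simplification and hardness both propagate downward (keeping the tree finite) and that the surviving tractable leaves are exactly the three named near-linear problems and the $\star$/$\pipe\star$-prefixed linear-time cases, the dichotomy of Theorem~\ref{thm:patternmatching} follows.
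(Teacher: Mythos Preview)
Your proposal is correct and follows essentially the same approach as the paper: enumerate types in the natural tree (pruned by Lemma~\ref{lem:PMsimplify}(1)), dispose of the $\star$- and $\pipe\star$-prefixed subtrees via Lemma~\ref{lem:PMtrivial}, cut off simplifying subtrees, and then verify that every remaining node is either one of the three named near-linear problems or can be shown hard by combining the Backurs--Indyk core types with the reduction rules of Lemma~\ref{lem:PMsubsequencehardness}. Your case-by-case check of the leaves (in particular the handling of $\circ\pipe\star$, $\circ\plus\star$, and $\pipe\circ\star$ via rules (3) and (2)) matches what the paper does, the only difference being that the paper offloads the enumeration to a figure rather than spelling it out in prose.
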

\begin{proof}
  All algorithmic results and minimal hardness results were known before, we just show that the known results are sufficient to completely characterize all types of any (constant) depth. The proof is similar to the one of Theorem~\ref{thm:main}. Again we consider a tree containing all types not containing two consecutive equal operations (i.e., not simplifiable by Lemma~\ref{lem:PMsimplify}.1), see Figure~\ref{fig:patternmatching}. 
  
  When one of the simplification rules of Lemma~\ref{lem:PMsimplify} applies to $t$, then $t$-pattern matching is equivalent to $t'$-pattern matching for some different, unsimplifiable $t'$ in the tree. Since the same simplification rule also applies to all descendants of $t$ in the tree, we can ignore the whole subtree of~$t$.
  
  For the subtrees starting with $\star$ or $\pipe \star$ we know that $t$-pattern matching is in linear time for each type $t$ by Lemma~\ref{lem:PMtrivial}. 
  
  For any other type $t$ with a near-linear time algorithm, in the figure we annotate the corresponding node by the fastest known asymptotic running time. 
  
  Finally, when there is a SETH-based lower bound for $t$-pattern matching then the same lower bound also holds for all descendants of $t$ in the tree (by Lemma~\ref{lem:PMsubsequencehardness}.1), so we have characterized the whole subtree of $t$. When hardness directly follows from one of the reductions in~\cite{backursindyk} then we simply mark $t$ as ``hard''. When we first have to use one of the reductions in Lemma~\ref{lem:PMsubsequencehardness} to reduce from a hard type $t'$ of~\cite{backursindyk}, then in the figure we annotate $t$ by ``hard $t'$'' as well as the corresponding reduction rule.
  
  It is easy to check that our Figure~\ref{fig:patternmatching} indeed enumerates all cases and thus contains all maximal algorithmic results and minimal hardness results. The claimed dichotomy of Theorem~\ref{thm:patternmatching} now follows by inspecting the figure.
\end{proof}

\begin{figure}[hp]
\includegraphics[scale=0.8]{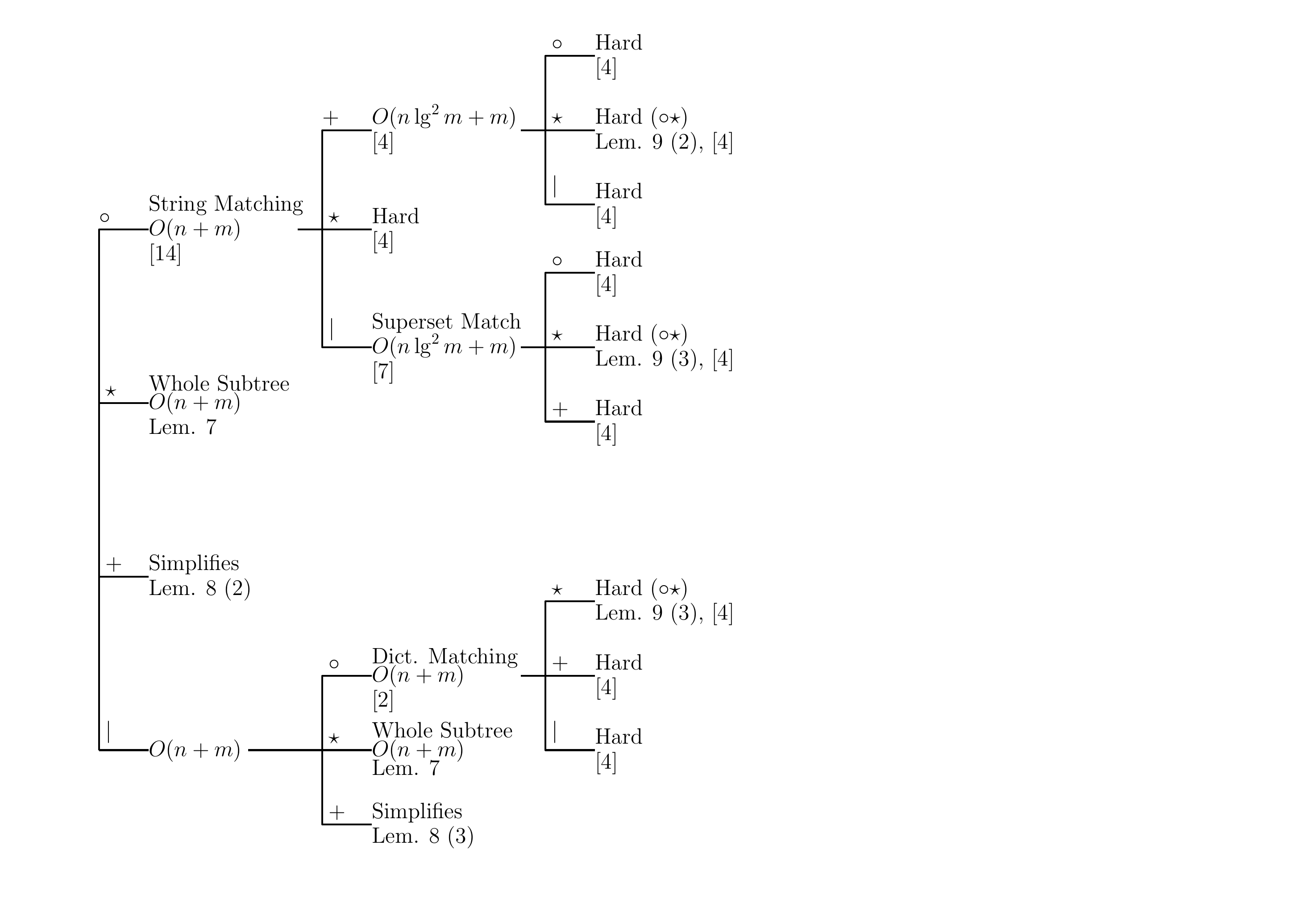}
\caption{Tree diagram for $t$-pattern matching. For details see the proof of Theorem~\ref{thm:patternmatching}.}
\label{fig:patternmatching}
\end{figure}

\end{document}